\pdfoutput=1
\documentclass[11pt]{article}
\usepackage[ruled, vlined, linesnumbered, noresetcount]{algorithm2e}
\usepackage{times}
\usepackage{epsfig}
\usepackage{amsthm}
\usepackage{amsmath}
\usepackage{amsfonts}
\usepackage{amssymb}
\usepackage{enumerate}
\usepackage{graphics}
\usepackage{color}
\usepackage{mathtools}
\usepackage{url}

\SetKwFor{ForAll}{forall}{do}{end forall}

\textwidth 14.5cm
\textheight 22.5cm
\oddsidemargin 0.7cm
\evensidemargin 0.7cm
\voffset -1cm

%
\usepackage{graphicx}

\date{}

%
%

\SetKwFor{ForAll}{forall}{do}{end forall}

\newtheorem{theorem}{Theorem}
\newtheorem{example}{Example}
\newtheorem{lemma}{Lemma}

\newtheorem{corollary}{Corollary}

\newtheorem{fact}{Fact}
\newtheorem{remk}{Remark}


%
\def\iin{{in}}
\def\oout{{out}}
\def\d{{\delta}}

\def\la{{\lambda}}

\newcommand{\remove}[1]{}
\newcommand{\N}{{\mathbb{N}}}
\newcommand{\Active}{{{\sf Active}}}

\newcommand\myatop[2]{\genfrac{}{}{0pt}{}{#1}{#2}}

%
%

\begin{document}

\title{On Finding Small Sets that Influence \\Large Networks\thanks{A preliminary version of this paper was presented at the  1st International Workshop on Dynamics in Networks (DyNo 2015)
in conjunction with the 2016 IEEE/ACM International Conference ASONAM, Paris, France, August 25-28, 2015. \cite{CordascoGR15}}
}

\author{Gennaro Cordasco\\Dept.~of Psychology\\ Second University of Naples, Italy 
\and
Luisa Gargano\\ Dept.~ of Computer Science \\ University of Salerno, Italy
\and 
Adele Anna Rescigno \\ Dept.~ of Computer Science \\ University of Salerno, Italy}

\setcounter{footnote}{0}
\maketitle 
\begin{abstract}
We consider the problem of selecting a minimum size
subset of nodes in a network, that  allows to activate
all the nodes of the network. We present a fast and simple
algorithm that, in real-life networks, produces solutions that
outperform the ones obtained by using the best  algorithms in the 
literature.
We  also investigate  the theoretical performances
of our algorithm and give   proofs of optimality 
for some classes of graphs. 
From an experimental perspective,  experiments 
 also show that the performance of the algorithms
correlates with the modularity of the analyzed network. 
Moreover, the more  the influence among communities  is hard to propagate, 
the  less the performances of the  algorithms differ. 
On the other hand, when  the network allows some propagation of
influence between different communities, the gap between the solutions returned by the
proposed  algorithm and by the previous algorithms  in the literature increases.
\end{abstract}

\section{Introduction }
The study of networked phenomena has experienced a particular surge of interest over the past decade, thanks to the large diffusion of social networks   which led to increasing availability of huge amounts of data in terms of static network topology as well as the dynamic of interactions among users \cite{WF}.   
\\
A large part of such studies deals with  the analysis of influence spreading in social networks. 
Social influence is the process by which individuals, interacting with other people, change or adapt their attitude, belief or behavior in order to fit in with a group \cite{CF}.
Commercial companies, as well as  politician,   have soon recognized that they can benefit from a social influence process which advertises their product (or belief) from one person to another \cite{Bo+,LKLG,T,RW} . This advertising process is well known as {viral marketing} \cite{LAM}.
A key research question in the area of viral marketing is how to efficiently identify a set of users which are able to widely disseminate a certain information within the network.
This matter suggests several  optimization  problems. Some of them  were first articulated in the seminal papers
\cite{DR-01,KKT-03,KKT-05}, under various influence propagation models.
A description of the area can be found in the recent monograph \cite{BOOK2013}.

In this paper we consider the Minimum Target Set problem which,  roughly speaking,  asks for selecting a minimum size
subset of nodes in a network that once active are able   to activate
all the nodes of the network under the Linear Threshold (LT)   influence propagation model.  
According to the  LT model,
 a user $v$ becomes active when the sum of influences of its neighbors in the networks reaches a certain threshold $t(v)$ \cite{KKT-15}. The formal definition of the Minimum Target Set problem is  given  in Section \ref{secMTS}.
%
%
\\
Chen \cite{Chen-09} studied the Minimum Target Set problem under the LT model    and proved  a strong inapproximability result that makes unlikely the existence
of an  algorithm with  approximation factor better than  $O(2^{\log^{1-\epsilon }n})$, where $n$ is the number of nodes in the network.
Chen's result stimulated a series of papers  
 that isolated interesting cases 
in which the problem (and variants thereof) becomes tractable
\cite{ABW-10,BCNS,BHLM-11,Centeno12,Chun,Chun2,Chopin-12,Cic+,Cic14,C-OFKR,CGRV2015,CGRV16,CordascoGRV16,Ga+,NNUW,Re,Za}.
 \\
In the case of \emph{general  networks}, some 
efficient heuristics for the Minimum Target Set  problem  have been proposed in the literature 
\cite{CGM+,DZNT,SEP}. 
 In particular,  Shakarian {\em et al.} \cite{SEP}  introduced a deprecation based approach where 
the algorithm iteratively deprecates (i.e., removes from the network) the less influencing nodes.  The output  set is determined, in this case, by the nodes that remain in the network.
Subsequently, the authors of  \cite{CGM+} proposed   a novel deprecation-like approach  which, from the theoretically point of view, always produces optimal solution
(i.e, a minimum size subset of nodes that influence the whole network)  for trees, cycles and cliques and on real life networks produces solutions that outperform the ones obtained using
 previous algorithms   \cite{DZNT,SEP}.

\subsection{Our Results}
In this paper we present an evolution of the heuristics  in \cite{CGM+}. 
It   is an extension from  undirected   to directed networks that allows  the additional feature of taking  into account the influence that a deprecated node may apply on his outgoing neighbors. 
This extension allows to strongly improve the quality of the obtained solution. 
Indeed, in the  previous  deprecation-based algorithms for the Minimum Target Set problem, once a node was determined  as irrelevant, it was immediately pruned from  the network and so its potential influence was lost.
This novel approach has been first introduced  and experimentally evaluated in \cite{CordascoGR15}. Here we present a theoretical analysis of this  approach together
 with  a deeper experimental analysis. We will show that
although 
the new heuristic is not always better than the  one in \cite{CGM+} (an example of such a rare case is   provided in Section \ref{undirectedGraph}), the novel approach has the following properties:
\begin{itemize}
\item[$\bullet$] It always produces an optimal solution for several classes of networks;
\item[$\bullet$] it always produces a solution $S$ of bounded cardinality matching
the upper bound given in  \cite{ABW-10} and \cite{CGM+}.
\end{itemize}
From a practical point of view, in real-life networks,  experiments show that: 
\begin{itemize}
\item[$\bullet$] The proposed algorithm produces solutions that always outperform the ones obtained using
the known algorithms which have a comparable running time   \cite{CGM+,SEP};
\item[$\bullet$]  the performance of  algorithms  for the Minimum Target Set  problem  correlates with the network modularity, which measures the strength of the  network subdivision  into communities. If
the modularity is high then  the influence  is hard to propagate among communities; on the other hand, when the network allows
the propagation of influence between different communities, the performance of the algorithms increases.  
This correlation becomes  stronger for the algorithm proposed in this paper. Such a  result
is probably due to the capability of our   algorithm  to better exploit situations where the community structure of the networks allows some  influence propagation between different communities.
\end{itemize}

\section{The Minimum Target Set Problem } \label{notation}\label{secMTS}
We represent a social network by means of  a  directed  graph $G = (V,E)$
where an arc $(u,v)$ represents the capability of $u$ of influencing $v$. 
\\
A threshold  function  $t: V \to \N=\{0,1,2,\ldots\}$ assigns non negative integers  to the nodes of $G$:
 For each node $v\in V$, the value 
 $t(v)$  measures the conformity of node $v$, in the sense that an
 easy-to-conform  element $v$  of the network  has ``low''  threshold value $t(v)$ while 
a hard-to-conform  element $u$ has  ``high''  threshold value \cite{Gr}.
\\
We denote by $\Gamma_G^{\iin}(v)=\{u|\ (u,v)\in E\}$ and  by  
$\Gamma_G^{\oout}(v)=\{u|\ (v,u)\in E\}$, respectively, the incoming and outgoing neighborhood   of the node $v$ in $G$. 
Similarly, $d^{\iin}_G(v)=|\Gamma_G^{\iin}(v)|$ and $d^{\oout}_G(v)=|\Gamma_G^{\oout}(v)|$ 
denote the incoming and outgoing degree of the node $v$ in $G$.

When dealing with undirected graphs, as  usual, we     represent them  by the corresponding  \textit{bidirected} digraph where each  edge  is replaced by a pair  of opposite arcs.
In such a case, we    denote by  $d_G(v)=d_G^\iin(v)=d_G^\oout(v)$ 
the degree of  $v$ and by $\Gamma_G(v)=\Gamma_G^{\iin}(v)=\Gamma_G^{\oout}(v)$ the neighborhood   of   $v$ in $G$.
\\
Given a subset $V'\subseteq V$ of nodes of $G$, we denote by  $G[V']$  the subgraph of $G$  induced by nodes in  $V'$.  

Let   $G=(V,E)$  be a digraph with threshold function  $t: V \rightarrow \N$ and  $S\subseteq V$.
An {\em activation process in $G$ starting at $S$}
is a sequence\footnote{In the rest of the paper we will omit the subscript $G$  whenever the graph $G$  is clear from the context.}
$$\Active_G[S,0] \subseteq \Active_G[S,1] \subseteq \ldots\subseteq \Active_G[S,\ell] \subseteq \ldots \subseteq V$$
of node subsets, with $\Active_G[S,0] = S$ and,  for  $\ell \geq 1$
$$\Active_G[S,\ell]=\Active_G[S,\ell{-}1] \cup \Big\{u :\, \big|\Gamma_G^{\iin}(u)\cap \Active_G[S,\ell {-} 1]\big|\ge t(u)\Big\}.$$
In words, at each round  $\ell\geq 1$ the set of active nodes is
augmented by all the   nodes $u$ for which the  number of
\emph{already} active incoming neighbors  is at least  equal to
 $u$'s threshold $t(u)$. 
{The node $v$ is said to get  {\em  active} at round $\ell>0$ if $v \in  \Active[S,\ell]- \Active[S,\ell - 1]$.}

A {\em target set} for $G$ is a set $S\subseteq V$ such that $\Active_G[S,\lambda]=V$ for some  $\lambda\geq 0$. 
The  problem
we  study in this paper
 is defined as follows:
\begin{quote}
 {\sc MINIMUM TARGET SET (MTS)}.
 \\
{\bf Instance:} A digraph $G=(V,E)$, thresholds $t:V\rightarrow \N$.
 \\
{\bf Problem:} Find a target set $S\subseteq V$ of \emph{minimum} size for $G$.
\end{quote}

\section{The MTS algorithm}\label{sec:upper} 
We first present our algorithm  for the MTS problem.
The algorithm MTS($G$, $t$), given in Algorithm \ref{alg}, 
works by iteratively deprecating nodes from the input digraph unless a certain condition occurs which makes a node be added to the output target set.

We illustrate  the logic of  the algorithm MTS($G$, $t$) on the example  digraph $G$  in Fig. \ref{fig:example}(a). 
The number inside each circle represents the node threshold.
At each iteration, the algorithm selects a node and possibly deletes it from the graph.
Fig. \ref{fig:example} shows  the evolution of  $G$ (and of the  node thresholds) at the beginning  of each iteration of the algorithm. \\
The execution of the algorithm MTS  on the graph  in Fig. \ref{fig:example}(a) is described below and summarized in table \ref{example}.

\begin{algorithm}[H]
\SetCommentSty{footnotesize}
\SetKwInput{KwData}{Input}
\DontPrintSemicolon
\caption{ \ \   \textbf{Algorithm} MTS($G$, $t$) \label{alg}}
\KwData { A digraph $G=(V,E)$ with thresholds $t(v)$ for $v\in V$. }
\setcounter{AlgoLine}{0}
$S=\emptyset$; \quad 
$L=\emptyset$; \quad 
$U=V$ \\  
\ForEach {$v\in V $}{
    $k( v)=t( v)$ \\
    $\d( v)=|\Gamma^{\iin}( v)|$\\ 
}
\While{ $U\neq \emptyset$ }{
    \eIf(\tcp*[f]{\underline{Case 1}:   $v$ gets active by the influence of its incoming neighbors in $V-U$ only; it can then influence its outgoing neighbors in $U$.}){there exists $v\in U$ s.t. $k(v)=0$}
    		{
    		\ForEach {$u\in \Gamma^{\oout}(v) \cap U$}{
    			$k(u)=\max(k(u)-1,0)$\\
    			\lIf{$v\notin L$}
    				{
    				$\d(u)=\d(u)-1$
    				}
    		}
    		$U=U-\{v\}$
    		}
    		{
				\eIf(\tcp*[f]{\underline{Case 2}:  $v$ is added to $S$, since no sufficient incoming neighbors remain in $U$ to activate it;  $v$  can then influence its outgoing neighbors in $U$.}){ there exists $v\in U{-}L$ s.t. $\d(v) <  k(v)$}
    				{$S=S\cup\{v\}$ \\
    				\ForEach {$u\in \Gamma^{\oout}(v) \cap U$}{
    						$k(u)=k(u)-1$\\
    						$\d(u)=\d(u)-1$
    				}
    				$U=U-\{v\}$
    				}
    				(\tcp*[f]{\underline{Case 3:} Node $v$ will  be  activated by its incoming neighbors in $U$. }){
    				$v={\tt argmax}_{u\in U-L}\left\{\frac{k(u)}{\d(u)(\d(u)+1)}\right\}$\\
    					\lForEach {$u\in \Gamma^{\oout}(v)\cap U$}{$\d(u)=\d(u)-1$	}
						$L=L\cup\{v\}$ \\
    				}
    		}
}
\textbf{return} $S$
\end{algorithm}

\smallskip
\noindent
The algorithm initializes  the target set $S$ to the empty set   and  a set $U$  (used to keep the surviving nodes of $G$) to $V$.
It then proceeds as follows:\\
{\bf Iteration 1.} \  If no node in $G$ has threshold either equal to $0$ or larger than the indegree, then  Case 3 of the algorithm  occurs and a node is selected according to the function at  line 19 of the algorithm. This function is based on the idea that nodes having low threshold and/or large degree are the less useful to start the activation process.
Both nodes $v_2$ and $v_3$ of the graph in Fig. \ref{fig:example}(a) satisfy the function, then the algorithm arbitrary chooses one of them\footnote{Notice that  in each of Cases 1, 2, and 3 ties are broken at random.}. 
Let $v_2$ be selected. Hence, $v_2$ is moved into a 
 \textit{limbo} state, represented by the set $L$. As a consequence
 of being in $L$, the outgoing neighbor $v_1$ of $v_2$ will not count on $v_2$ for being influenced (the value $\delta(v_1)$, which denotes the incoming degree of $v_1$ restricted to the nodes that belongs to the residual graph but not to  $L$, is reduced by 1). In Fig. \ref{fig:example}(b), the circle of $v_2$ and the arrow to its outgoing neighbor $v_1$ are dashed to represent this situation. \\
{\bf Iteration 2.} \   Due to this update, node $v_1$ in the residual digraph  in Fig. \ref{fig:example}(b)   remains with fewer ``\textit{usable}'' incoming neighbors than its  threshold (i.e., $\delta(v_1)=1<2=k(v_1)=t(v_1)$).
 Hence, at the second iteration Case 2 occurs  (note that no node has threshold equal to $0$) and $v_1$ is selected and added to the target set $S$. As a consequence, $v_1$ is deleted from $U$ (i.e., $v_1$ is removed from the residual digraph - see Fig. \ref{fig:example}(c)) and   the  thresholds of its outgoing neighbors are decreased by 1 (since they can receive $v_1$'s influence).\\
{\bf Iteration 3.} \   If  the residual digraph contains a node $v$ whose threshold has become 0 (e.g. the  nodes  which are   already in $S$ -- see Case 2 --  suffice  to activate $v$) then Case 1 occurs and the node  
$v$ is  selected and deleted from the digraph.
This case occurs  for the graph in Fig. \ref{fig:example}(c),  to nodes $v_2$ and $v_4$ with $k(v_2)=k(v_4)=0$. The algorithm arbitrary chooses one of them;   say $v_4$. Hence,  $v_4$ is selected and removed from $U$ (and  from the residual digraph) and the threshold of its outgoing neighbor $v_5$ is decreased by 1 (since once $v_4$ activates then   $v_5$ will receive its influence). See Fig. \ref{fig:example}(d).\\
{\bf Iteration 4.} \   Case 1  can also apply  to a node  $v\in L$. In such a case the value of $\delta(u)$, for each   outgoing neighbor $u$  of the selected node $v$, were already  reduced by $1$---when $v$ was added  to $L$---and, therefore, it is not reduced further. 
\\
In our example, at the fourth iteration $k(v_2)=0$  and  Case 1 occurs. Hence, $v_2$ is selected and deleted from $U$ - see Fig. \ref{fig:example}(e).
%
\\
{\bf Iteration 5.} \  Now, Case 3 occurs. Both nodes $v_3$ and $v_5$ maximize the   function at line 19. 
Let $v_3$ be the selected node.
Hence, $v_3$ is added to $L$ and all its outgoing neighbors, $v_5$ and $v_6$,  have the $\delta()$ value  reduced by 1. See Fig.\ref{fig:example}(f).\\
{\bf Iteration 6.} \   Case 2 occurs since 
$\delta(v_6)=1<2=k(v_6)$. Hence, $v_6$ is selected and added to $S$ (since no sufficient incoming neighbors remain in the residual digraph to activate it),
the threshold of its outgoing neighbor $v_3$ is decreased by 1 (i.e., $v_3$ is influenced by $v_6$) and $v_6$ is removed from $U$ and so from the residual digraph. See Fig. \ref{fig:example}(g).\\
{\bf Iteration 7.} \   Case 1 occurs since $k(v_3)=0$.
Hence, $v_3$ is selected and removed from $U$.
Its outgoing neighbor $v_5$ has the threshold decreased by 1, since it receive the influence of $v_3$ that can be considered active. See Fig. \ref{fig:example}(h).\\
{\bf Iteration 8.} \  Finally, the last node $v_5$ in the residual digraph is selected (i.e., Case 1 occurs) and removed. The set $U$ is now empty and the algorithm stops returning the target set $S$.

\begin{remk}
We notice that if a node is added to the set  $L$, it  will never belong to the target set.
Indeed a node $v$ is  added to $S$ only if Case 2 occurs for $v$.
However,  Case 2 is restricted to nodes outside $L$  (see line 12 of the algorithm). \\
Hence, the condition of the while loop at line 5 could be changed to $U-L \neq \emptyset$ thus shortening the execution of the algorithm. 
We decided to use the  $U \neq \emptyset$ condition   because this results in  simplified  proofs without affecting the theoretical upper bound on the running time.
\end{remk}

{
\begin{center}
\begin{figure}[th!]

		\includegraphics[height=11.7truecm,width=12.0truecm]{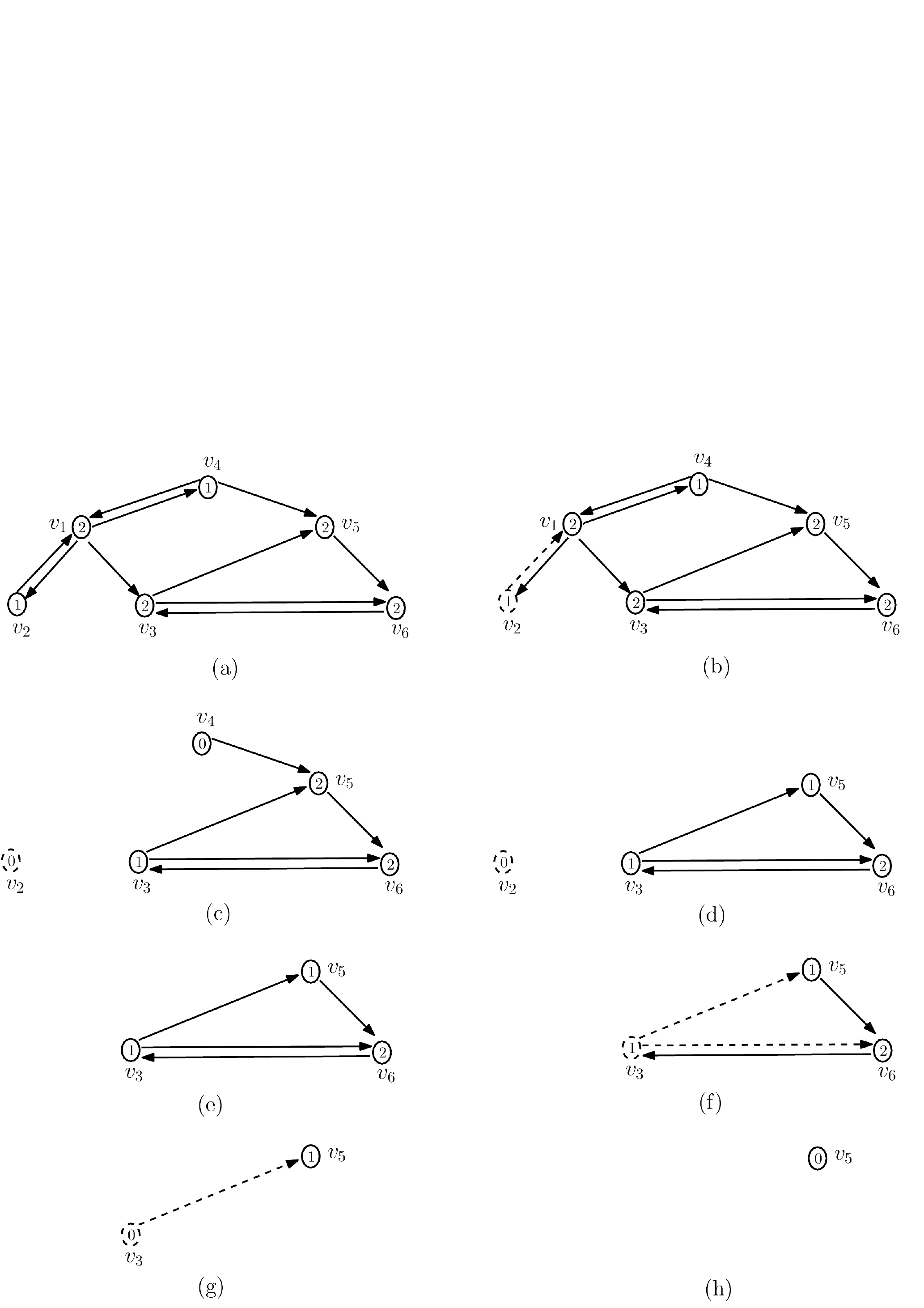}
		\vspace*{-0.5truecm}
		\caption{	The evolution of a digraph $G$ during the execution of algorithm MTS($G,t$); dashed circles and dashed arrows represent nodes moved in the set $L$ and their outgoing arcs, respectively. The values inside circles represent the residual thresholds.	\label{fig:example}}
\end{figure}

\smallskip

\begin{table}[ht!]
\begin{center}
\begin{tabular}{|c|c|c|c||c|c|}
\hline
i 		 & $U$ & $L {\cap} U$ & $S$ & Selected node	& Case \\ \hline
1 		& $\{v_1,v_2,v_3,v_4,v_5,v_6\}$ & $\emptyset$ 	& $\emptyset$ & $v_2$	& 3   \\ \hline
2 		& $\{v_1,v_2,v_3,v_4,v_5,v_6\}$ & $v_2$ 	& $\emptyset$ & $v_1$	& 2   \\ \hline
3 		& $\{v_2,v_3,v_4,v_5,v_6\}$     & $v_2$ 	& $v_1$ & $v_4$	& 1   \\ \hline
4 		& $\{v_2,v_3,v_5,v_6\}$         & $v_2$ 	& $v_1$ & $v_2$	& 1   \\ \hline
5 		& $\{v_3,v_5,v_6\}$             & $\emptyset$ 	& $v_1$ & $v_3$	& 3   \\ \hline
6 		& $\{v_3,v_5,v_6\}$             & $v_3$ 	& $v_1$ & $v_6$	& 2   \\ \hline
7 		& $\{v_3,v_5\}$                 & $v_3$ 	& $v_1,v_6$ & $v_3$	& 1   \\ \hline
8 		& $\{v_5\}$                     & $\emptyset$ 	& $v_1,v_6$ & $v_5$	& 1   \\ \hline
\end{tabular}
\end{center}
\caption{	The execution of the algorithm MTS($G$, $t$) for the graph in Fig. \ref{fig:example}(a). 
For each iteration of the while loop, the tables provides the content of  the sets $U$, $L {\cap} U$, $S$ at the begin of iteration, the selected node and whether Cases 1, 2  or 3 applies.\label{example} }
\end{table}
\end{center}}

\subsection{Algorithm Correctness}
{{We  prove now  that the proposed algorithm always outputs a target set for the input graph and evaluate its running time.
To this aim, we first introduce some notation  and properties of the algorithm MTS that will be used in the sequel of the paper. }}

We denote by $n$ the number of nodes in $G$, that is   $n=|V|$, and by $\lambda$ the number of iterations of the while loop of algorithm MTS($G$, $t$).  Moreover, we denote: 
\begin{itemize}
\item by $v_i$ the node that is selected during the $i$-th iteration of the while loop in MTS($G$, $t$), for $i=1,\ldots,\lambda$;
\item by  $U_i, L_i, S_i, \d_i(u),$ and $k_i(u)$, the sets $U, L, S$ and the values of $\d(u), k(u)$, respectively, as updated at the beginning of  the $i$-th iteration of the while loop in MTS($G$, $t$).
\end{itemize}
For the initial value $i=1$,  the above values are those of the input graph $G$, that is: 
$U_1=V$, $G[U_1]=G$ and  $\delta_1(v)=d^{in}(v)$,  $k_1(v)=t(v)$, for each  $v$ in $G$.
\\
%
%
The properties  stated below will be useful in the rest of the paper.
\begin{fact} \label{fact1}
For each iteration $i$ of the while loop in MTS($G$, $t$) and for each $u \in U_i$, 
$$\delta_i(u)=
|\Gamma^{\iin}(u) \cap (U_i-L_i)|\leq d^{\iin}_{G[U_i]}(u).$$ 
Furthermore, if $G$ is bidirectional then $$\delta_i(u)=|\Gamma^{\iin}(u) \cap (U_i-L_i)|=|\Gamma^{\oout}(u) \cap (U_i-L_i)|.$$
\end{fact}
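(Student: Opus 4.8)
The plan is to prove the first (and main) equality $\delta_i(u)=|\Gamma^{\iin}(u)\cap(U_i-L_i)|$ by induction on the iteration index $i$, and then to derive the inequality and the bidirectional identity as immediate consequences. For the base case $i=1$ I would invoke the stated initialization: $U_1=V$, $L_1=\emptyset$, and $\delta_1(u)=d^{\iin}(u)=|\Gamma^{\iin}(u)|$; since $U_1-L_1=V$, the claimed equality holds trivially.

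For the inductive step, assuming the equality at iteration $i$, I would track how the selected node $v_i$ and the updates performed in Case~1, 2, or 3 transform both sides for every surviving $u\in U_{i+1}$. The unifying observation is that $|\Gamma^{\iin}(u)\cap(U_i-L_i)|$ can change only through the movement of $v_i$ relative to the set $U-L$: in Cases~1 and~2 one has $U_{i+1}-L_{i+1}=(U_i-L_i)-\{v_i\}$ precisely when $v_i\notin L_i$, while in Case~3 one has $U_{i+1}=U_i$ but $L_{i+1}=L_i\cup\{v_i\}$, so again $U_{i+1}-L_{i+1}=(U_i-L_i)-\{v_i\}$. In each instance the count of in-neighbors of $u$ lying in $U-L$ drops by exactly one when $v_i$ is an in-neighbor of $u$ (equivalently $u\in\Gamma^{\oout}(v_i)$) and leaves $U-L$, and is unchanged otherwise; this matches exactly the decrement $\delta(u)=\delta(u)-1$ applied by the algorithm to the appropriate outgoing neighbors.

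The delicate point, and the step I expect to be the main obstacle, is the treatment of Case~1 when $v_i\in L_i$. Here $v_i$ is deleted from $U$ although it already belonged to $L_i$, so by the inductive hypothesis it was never counted in $\delta_i(u)$ for any $u$; correspondingly, the algorithm guards the decrement with the test ``if $v\notin L$'' and does \emph{not} reduce $\delta(u)$ a second time. I would verify that this matches $U_{i+1}-L_{i+1}=U_i-L_i$ (since $v_i\notin U_i-L_i$), so both sides remain unchanged, completing the case. Carefully separating the subcases $v_i\in L_i$ and $v_i\notin L_i$ is where the bookkeeping must be done with care, as it encodes the algorithm's rule that a node's contribution to the $\delta$-values of its out-neighbors is removed once and only once.

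Finally, the inequality $\delta_i(u)\le d^{\iin}_{G[U_i]}(u)$ follows because $d^{\iin}_{G[U_i]}(u)=|\Gamma^{\iin}(u)\cap U_i|$ and $U_i-L_i\subseteq U_i$, so the set counted by $\delta_i(u)$ is a subset of the one counted by the induced in-degree. The bidirectional identity is then immediate: when $G$ is bidirectional, $\Gamma^{\iin}(u)=\Gamma^{\oout}(u)$ for every node, whence $|\Gamma^{\iin}(u)\cap(U_i-L_i)|=|\Gamma^{\oout}(u)\cap(U_i-L_i)|$.
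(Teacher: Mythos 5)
Your proof is correct. The paper states Fact~\ref{fact1} without any proof, treating it as immediate from the algorithm's bookkeeping; your induction on $i$, maintaining the invariant $\delta_i(u)=|\Gamma^{\iin}(u)\cap(U_i-L_i)|$ through the case analysis of how $v_i$ leaves $U-L$ (in particular the subcase of Case~1 with $v_i\in L_i$, where the guard ``$v\notin L$'' prevents a second decrement and $U_{i+1}-L_{i+1}=U_i-L_i$), is exactly the formalization of the argument the paper leaves implicit.
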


\begin{fact} \label{fact2}
For each iteration $1\leq i< \lambda$, let $v_i$ be  the node that is selected during the $i$-th iteration of the while loop in MTS($G$, $t$). We have that $$
U_{i+1} - L_{i+1}=\begin{cases} U_i-L_i & \mbox{ if  $v_i \in L_i$;} \\ U_i-L_i - \{v_i\} & \mbox { otherwise.}\end{cases}
$$
\end{fact}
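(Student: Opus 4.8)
The plan is to argue by a direct case analysis on which of the three branches of the while loop is executed at iteration $i$, tracking exactly how the sets $U$ and $L$ are modified in each branch. First I would record the three update patterns that can be read off from Algorithm~\ref{alg}: in Case~1 the selected node $v_i$ is removed from $U$ (via the instruction $U=U-\{v\}$) while $L$ is left untouched; in Case~2 the node $v_i$ is again removed from $U$ while $L$ is left untouched; and in Case~3 the node $v_i$ is inserted into $L$ (via $L=L\cup\{v\}$) while $U$ is left untouched. I would also record the guard conditions: both Case~2 and Case~3 select $v_i$ from $U-L$, so in those two branches $v_i\notin L_i$; only Case~1 can be triggered by a node already lying in $L_i$ (as illustrated by Iteration~4 of the worked example).

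Next I would split according to whether $v_i\in L_i$. If $v_i\in L_i$, the only admissible branch is Case~1, whence $U_{i+1}=U_i-\{v_i\}$ and $L_{i+1}=L_i$. Since $v_i\in L_i$, deleting $v_i$ from $U_i$ is absorbed by the subsequent subtraction of $L_i$, so $U_{i+1}-L_{i+1}=(U_i-\{v_i\})-L_i=U_i-L_i$, which is precisely the first line of the claim.

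If instead $v_i\notin L_i$, I would verify the three branches in turn. In Case~1 (now with $v_i\notin L_i$) and in Case~2 we have $U_{i+1}=U_i-\{v_i\}$ and $L_{i+1}=L_i$, so $U_{i+1}-L_{i+1}=(U_i-\{v_i\})-L_i=(U_i-L_i)-\{v_i\}$, the last equality using $v_i\notin L_i$. In Case~3 we have $U_{i+1}=U_i$ and $L_{i+1}=L_i\cup\{v_i\}$, so $U_{i+1}-L_{i+1}=U_i-(L_i\cup\{v_i\})=(U_i-L_i)-\{v_i\}$, this time using $v_i\in U_i$. All three branches therefore yield the second line of the claim.

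There is no serious obstacle here: the statement is essentially set-theoretic bookkeeping on the updates performed by a single iteration. The only point that calls for a moment's care is the observation that the identical outcome $(U_i-L_i)-\{v_i\}$ arises from two structurally different operations---deletion from $U$ in Cases~1 and 2 versus insertion into $L$ in Case~3---and that the $v_i\in L_i$ branch relies on the deleted element having already been swept out by the set difference $U_i-L_i$.
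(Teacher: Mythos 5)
Your proof is correct and matches what the paper intends: Fact~\ref{fact2} is stated without an explicit proof precisely because it follows from the case analysis you carry out, namely that Cases~1 and~2 remove $v_i$ from $U$ while leaving $L$ unchanged, Case~3 adds $v_i$ to $L$ while leaving $U$ unchanged, and only Case~1 can select a node already in $L$. Your handling of the two subcases ($v_i\in L_i$ absorbed by the set difference, versus $v_i\notin L_i$ yielding $(U_i-L_i)-\{v_i\}$ through either deletion from $U$ or insertion into $L$) is exactly the bookkeeping the paper takes as read.
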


\smallskip
The following Lemma establishes an upper bound on the number of iterations of the while loop of the algorithm.
This result, a part telling us that the algorithm ends on any input graph, will be useful for the running time evaluation. 

\begin{lemma} \label{lemma1}
The number of iterations of the while loop of algorithm MTS($G$, $t$) is at most $2n$ (i.e., $\la\leq 2n$).
\end{lemma}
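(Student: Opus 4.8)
The plan is to exhibit a non-negative integer potential that starts at $2n$ and strictly decreases at every iteration of the while loop; since such a quantity can drop at most $2n$ times before it would go negative, this immediately yields $\lambda\leq 2n$ (and, as a by-product, termination on every input). A natural choice is
$$\Phi_i = |U_i| + |U_i - L_i|,$$
whose initial value is $\Phi_1 = |V| + |V| = 2n$ because $L_1=\emptyset$, and which is clearly bounded below by $0$.

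The heart of the argument is to check that $\Phi_i - \Phi_{i+1}\geq 1$ for every $1\leq i<\lambda$, by running through the three cases that can occur at iteration $i$. For the term $|U|$ I would simply read off the pseudocode: Cases 1 and 2 both execute $U=U-\{v_i\}$, so $|U_{i+1}|=|U_i|-1$, whereas Case 3 leaves $U$ untouched, so $|U_{i+1}|=|U_i|$. For the term $|U-L|$ I would invoke Fact~\ref{fact2}, which tells us that $|U_{i+1}-L_{i+1}|$ equals $|U_i-L_i|$ when $v_i\in L_i$ and $|U_i-L_i|-1$ otherwise.

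Combining the two contributions case by case then gives the required drop. In Case 2 we have $v_i\notin L_i$ (the case is restricted to $U-L$), so both terms fall by $1$ and $\Phi$ decreases by $2$; the same holds for a Case 1 whose selected node lies outside $L$. In Case 3 the node is chosen from $U-L$ and merely added to $L$, so $|U|$ is unchanged while $|U-L|$ falls by $1$, a net decrease of $1$. The only delicate case is Case 1 applied to a limbo node $v_i\in L_i$ (which, as the Remark and Iteration~4 of the running example show, genuinely occurs): here $|U|$ drops by $1$ but $|U-L|$ stays the same by Fact~\ref{fact2}, so $\Phi$ still decreases by exactly $1$. Hence $\Phi$ strictly decreases by at least $1$ at every iteration.

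The main thing to get right is precisely this last piece of bookkeeping: that a Case 1 removal of a limbo node lowers $|U|$ without touching $|U-L|$, so the two halves of the potential are exactly what is needed to absorb both the ``removal'' iterations (Cases 1 and 2) and the ``limbo'' iterations (Case 3). Once the per-iteration drop is in hand, the bound is immediate: starting from $\Phi_1=2n$ and decreasing by at least $1$ each step while remaining non-negative forces at most $2n$ iterations, i.e. $\lambda\leq 2n$. As a sanity check I would keep in mind the more combinatorial phrasing: Cases 1 and 2 together perform exactly $n$ removals from $U$, and each node can be moved into $L$ by Case 3 at most once (nodes are never taken back out of $L$), bounding the Case 3 iterations by $n$, which again sums to $2n$.
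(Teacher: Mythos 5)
Your potential argument is sound bookkeeping, but it only reproves the easy half of the lemma, namely the paper's closing observation that each node can be selected at most twice (once when it enters $L$ via Case 3, once when it leaves $U$ via Case 1 or 2); your $\Phi_i=|U_i|+|U_i-L_i|$ is exactly that double-counting in potential-function form. The genuine gap is that your case analysis, and Fact~\ref{fact2} which you invoke, both \emph{presuppose} that at every iteration some node $v_i$ is actually selected. The paper spends the bulk of its proof establishing precisely this, and it is not automatic: the while loop's guard is $U\neq\emptyset$, not $U-L\neq\emptyset$, so one must rule out the configuration $U_i\neq\emptyset$, $U_i-L_i=\emptyset$, with no node of residual threshold $0$. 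In that configuration Case 1 fails, Case 2's existential quantifier over $U-L$ is vacuously false, and Case 3's $\argmax$ ranges over the empty set --- no node is selected, $\Phi$ does not move, and the loop makes no progress, so both the $2n$ bound and your claimed ``by-product'' termination collapse. Your phrase ``the three cases that can occur'' silently assumes that one of them always does.

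The missing ingredient is the paper's claim that if $U_i-L_i=\emptyset$ then some $u\in U_i$ has $k_i(u)=0$, so Case 1 applies. Its proof is not a one-liner: take $u$ to be the last node inserted into $L$, at some iteration $j<i$. Case 3 at iteration $j$ gives $0<k_j(u)\leq\delta_j(u)$, and since no node enters $L$ after iteration $j$, every iteration among $j+1,\dots,i-1$ applies Case 1 or Case 2 and removes a node from $U$. By Fact~\ref{fact1}, $\delta_j(u)$ counts the in-neighbors of $u$ lying in $U_j-L_j$; since all of $U_j-L_j$ must have left $U$ by iteration $i$, at least $k_j(u)$ of those in-neighbors are removed in between, and each such removal decrements $u$'s residual threshold (lines 8 and 16 of the algorithm), forcing $k_i(u)=0$. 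With this argument in place your potential computation does finish the job; without it the proof is incomplete.
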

\begin{proof}{}
First of all we prove that, at each iteration $i \geq 1$ of the while loop of MTS($G$, $t$), a node  $v_i \in U_i$ is selected.
If $U_i-L_i \neq \emptyset$ then there obviously exists a node $u \in U_i$ for which one of the three cases in the while loop of MTS holds.
Now, we show that for any $i \geq 1,$ it holds
\begin{equation} \label{U=L}
\mbox{\textbf{If} } U_i-L_i=\emptyset \mbox{ \textbf{then}   there exists} \  u \in U_i  \ \mbox{with} \
k_i(u){=}0.   
\end{equation}
Assume that there exists an iteration $i \geq 1$ such that $U_i-L_i=\emptyset$. 
Let $u$ be the node, among the nodes in $L_i$, that is inserted last in $L$ at some iteration $j<i$.
Since Case 3 holds for $u$ at iteration $j$ we have $0 < k_j(u) \leq  \d_j(u)$. 
As a consequence, all the nodes eventually selected at iterations $j+1,\ldots,i-1,$ are nodes for which either Case  1 or Case 2 holds.
Since by the algorithm  once a node is moved into the set $L$, the value $\d$ of each of its outgoing neighbors is decreased by $1$ (cfr. lines 21--22),
we have that $$|U_j - L_j| \geq \d_j(u).$$
Recalling that $\d_j(u) \geq k_j(u) >0$ we get that at least $k_j(u)$ among the incoming neighbors of $u$ in $G[U_j]$ are selected during iterations $j+1, \ldots, i-1$ and for each of them the residual threshold of $u$ is decreased by one (cfr. lines 8 and 16). This leads to $k_i(u)=0$ and (\ref{U=L}) holds. 

We conclude the proof noticing each  $v \in V$ can be selected at most twice: Once $v$ is eventually 
inserted in $L$ (if Case 3 applies) and once $v$ is removed from $U$ 
(if either Case 1 or Case 2 apply). \qed 
\end{proof}

\smallskip

{{We are now ready to prove the correctness of the proposed algorithm, namely that the algorithm MTS($G$, $t$) always returns a target set for the input digraph with the given thresholds. }}

\begin{theorem}\label{teo1}
For any graph $G$ and threshold function $t$, the algorithm MTS($G$, $t$) outputs a target set for 
$G$.
\end{theorem}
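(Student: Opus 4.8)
The plan is to show that the activation process started from the set $S$ returned by the algorithm eventually activates all of $V$, by analysing the order in which nodes are permanently deleted from $U$. By Lemma~\ref{lemma1} the algorithm halts, and since the while loop runs while $U\neq\emptyset$, every node is eventually removed from $U$. Such a removal occurs only in Case~1 or Case~2: Case~3 merely moves a node into the limbo set $L$ without deleting it from $U$, and as observed in the remark after the algorithm a node of $L$ can subsequently leave $U$ only through Case~1, once its residual threshold reaches $0$. I would thus enumerate the nodes as $u_1,\ldots,u_n$ in their order of removal from $U$ and prove by induction on $j$ that $u_j$ gets active in the process started at $S$.

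The crucial preliminary step is to read off the meaning of the residual threshold. Inspecting lines~8 and~16, the value $k(u)$ is decreased by exactly one each time some $v\in\Gamma^{\iin}(u)$ is removed from $U$ (in Case~1 or Case~2) while $u$ is still in $U$, whereas Case~3 never alters $k$. From this I would derive, for every iteration $i$ and every $u\in U_i$,
\begin{equation}\label{eq:kform}
k_i(u)=\max\big(0,\ t(u)-a_i(u)\big),
\end{equation}
where $a_i(u)$ is the number of in-neighbours of $u$ already deleted from $U$ before iteration~$i$. The one delicate point is the uncapped subtraction of Case~2 (line~16): it cannot push $k(u)$ below $0$, since Case~2 is reached only when no surviving node has residual threshold $0$ (Case~1 has priority in the while loop), so $k(u)\ge1$ whenever it is decremented there. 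Granting \eqref{eq:kform}, a node $u$ deleted in Case~1 has $k(u)=0$, whence $a_i(u)\ge t(u)$: at least $t(u)$ of its in-neighbours are removed from $U$ strictly before $u$.

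With this the induction is short. If $u_j$ is removed in Case~2 then $u_j\in S$, so $u_j\in\Active[S,0]$. If $u_j$ is removed in Case~1 then at least $t(u_j)$ of its in-neighbours occur among $u_1,\ldots,u_{j-1}$ (a distinct node is removed at each iteration), and these are active by the inductive hypothesis; hence $u_j$ has at least $t(u_j)$ active in-neighbours and becomes active at the following round. Since every node is some $u_j$, the whole of $V$ is activated, so $\Active[S,\la']=V$ for some $\la'$ and $S$ is a target set for $G$.

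I expect the main obstacle to be the rigorous justification of \eqref{eq:kform}: tying each decrement of $k$ to the removal of a distinct incoming neighbour, and handling simultaneously the capped decrement of Case~1, the uncapped one of Case~2, and the bookkeeping induced by the limbo set $L$ (a node in $L$ still decrements the residual thresholds of its out-neighbours when it is finally removed in Case~1). Facts~\ref{fact1} and~\ref{fact2}, together with the priority ordering of the three cases, are the tools I would use to keep this accounting exact; once \eqref{eq:kform} is in place, the activation argument is the routine induction on the removal order sketched above.
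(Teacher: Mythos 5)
Your proof is correct, but it follows a genuinely different route from the paper's. The paper argues by \emph{backward} induction on the iteration index $i$, from $\lambda$ down to $1$, proving the stronger statement that $S_i$ is a target set for the \emph{residual instance} $G[U_i]$ with thresholds $k_i$, with a case analysis on which of Cases 1, 2, 3 the algorithm applied at iteration $i$; the theorem then follows because $G[U_1]=G$ and $k_1=t$. You instead run a \emph{forward} induction over the order $u_1,\ldots,u_n$ in which nodes leave $U$, supported by the invariant $k_i(u)=\max\bigl(0,\,t(u)-a_i(u)\bigr)$, which lets you argue activation directly in the original graph $G$ under the original thresholds $t$, never mentioning induced subgraphs. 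Both arguments are sound, and your key technical point --- that the priority of Case~1 over Case~2 guarantees $k(u)\ge 1$ whenever the uncapped decrement of line~16 is executed, so $k$ never goes negative --- is exactly what is needed; the paper uses the same fact silently when it writes its threshold-update rule with a single $\max(\cdot\,,0)$ covering both Cases 1 and 2, so your treatment is if anything more explicit. What your route buys: a more elementary, self-contained proof whose invariant cleanly exposes the semantics of the residual threshold (``$t(u)$ minus the number of already-deleted in-neighbours''). What the paper's route buys: the residual-instance formulation ($S_i$ is a target set for $G[U_i]$ with thresholds $k_i$) is the workhorse reused later in the paper, e.g., in the comparison with TSS (Theorem \ref{optimal}) and in the potential-function bound of Theorem \ref{teo-upper}, so proving Theorem \ref{teo1} in that form sets up the subsequent results, whereas your invariant would have to be supplemented there.
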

\begin{proof}{}
We  show that for each $i=1,\ldots,\la$ the set 
$S_i$ is a target set for the digraph $G[U_i]$, 
assuming that   each node $u$ in $G[U_i]$ has  threshold $k_i(u)$.
The proof is by induction on  $i$, with $i$ going from $\la$ down to $1$.
\\
\remove{
??????
\\
Consider first
 $i=\la$.  Then either  $v_\la\in U_\la$  has threshold $k_\la(v_\la)=0$  and 
$S_\la=\emptyset$ or
the unique node $v_\la\in U_\la-L_\la$  has positive threshold  $k_\la(v_\la)>\d_\la(v_\la)=0$ and   $S_\la=\{v_\la\}$.
By Lemma \ref{lemma1bis}, this is correct.
\\
?????
} 
Consider first $i=\la$.   
The unique node $v_\la$ in $G(\la)$  either has threshold $k_\la(v_\la)=0$  and $S_\la=\emptyset$ or the node has positive threshold  $k_\la(v_\la)>\d_\la(v_\la)=0$ and   $S_\la=\{v_\la\}$.

\noindent
Consider now $i<\la$ and suppose the algorithm be correct on $G[U_{i+1}]$, that is, 
$S_{i+1}$ is a target set for $G[U_{i+1}]$ with thresholds  $k_{i+1}(u)$ for $u\in U_{i+1}$.
We show that the algorithm is correct on $G[U_i]$ with thresholds 
$k_{i}(u)$ for $u\in U_{i}$.
\\
By the algorithm MTS, for each $u \in U_i$ we have
\begin{equation} 
k_{i+1}(u) {=} 
\begin{cases} \label{ki+1}
\max(k_i(u) {-} 1, 0) & \mbox {if } u \in \Gamma^{out}(v_i) \cap U_i \mbox{ and   } 
                              (k_i(v_i){=}0 \mbox{ or } k_i(v_i)> \d_i(v_i))\\
k_i(u) & \mbox{otherwise.}
\end{cases}
\end{equation}
We distinguish three cases on the selected node $v_i$.
\begin{itemize}
\item  $1 \leq k_i(v_i) \leq \d_i(v_i)$ (Case 3 holds). 
In this case $U_i=U_{i+1}$. Moreover by (\ref{ki+1}),  $k_{i+1}(u)=k_i(u)$ for $u\in U_{i+1}$. Hence 
the target set $S_{i+1}=S_{i}$ for $G[U_{i+1}]$ is also a target set for $G[U_i]$.

\item  $k_i(v_i) > \d_i(v_i)$ (Case 2 holds). In this case $U_{i+1}=U_{i}-\{v_i\}$
and $S_i = S_{i+1} \cup \{v_i\}$. By  (\ref{ki+1}) 
it follows that for any $\ell \geq 0$, 
$$\Active_{G[U_i]}[S_{i+1} \cup \{v_i\}, \ell] - \{v_i\} = \Active_{G[U_{i+1}]}[S_{i+1}, \ell].$$
Hence, $\Active_{G[U_i]}[S_i, \ell] = \Active_{G[U_{i+1}]}[S_{i+1}, \ell] \cup \{v_i\}$
and $S_i$ is a target set for $G[U_i]$.

\item $k_i(v_i) = 0$ (Case 1 holds). 
Since $k_i(v_i) = 0$, node $v_i$ is immediately active in $G[U_i]$. 
Hence by  (\ref{ki+1}),  each outgoing neighbor  
$u$ of $v_i$ in $G[U_i]$ is influenced by $v_i$ and its threshold is updated 
according to (\ref{ki+1}).
Therefore, since $S_{i+1}$ is a target set  for $G[U_{i+1}]$, we have that $S_i=S_{i+1}$ is  a target set for $G[U_i]$.
\end{itemize}
The theorem follows since $G[U_1]=G$.  
\end{proof}
\subsection{ Running Time}
 The MTS algorithm can be implemented  to run in 
$O(|E| \log |V |)$ time. Indeed we need to process the nodes $v \in V$--each one at most two times (see Lemma \ref{lemma1})--according to
the metric $k(v) /(\d(v)(\d(v)+1))$, and the updates, that follows each processed node
$v \in V$ involve at most $d^{\oout}(v)$ outgoing neighbors of $v$.\\
{{It is worth to mention that the MTS algorithm running time is comparable with the running time of the state of the art strategies for the  MTS problem. Indeed, also these strategies usually need to sort the nodes according to some  metric and  to keep them sorted  after a change in the graph}.}
\section{Undirected graphs} \label{undirectedGraph}
Recall that here $\Gamma(v)=\Gamma^{\iin}(v)=\Gamma^{\oout}(v)$ and $\delta_i(v)=|\Gamma(v) \cap (U_i-L_i)|$ for each $v \in U_i$ and $i=1, \ldots \lambda$.
\subsection{Optimality on Trees, Cycles and Cliques}\label{sec:trees} 
The main result of this section is the following Theorem.	
\begin{theorem} \label{th2}
The algorithm  MTS($G$, $t$) returns an optimal solution   whenever the input graph is either a tree, a cycle, or a clique.
\end{theorem}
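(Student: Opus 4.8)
The plan is to leverage Theorem~\ref{teo1}: since the algorithm always returns a target set, we automatically have $|S|\ge \mathrm{OPT}(G,t)$, so optimality reduces to the matching lower bound $|S|\le \mathrm{OPT}(G,t)$ for each of the three classes. Concretely, for each class I would exhibit a quantity $\beta(G,t)$ that lower bounds the size of \emph{every} target set, and then show that the algorithm inserts exactly $\beta(G,t)$ nodes into $S$ (all insertions happen in Case~2). Before splitting into cases I would normalize: a node with $k_i(v)=0$ is eliminated by Case~1 and never seeded, whereas a node with $k_i(v)>\delta_i(v)\ge$ its residual degree is forced into $S$ by Case~2 and must lie in every target set, so such nodes are accounted for directly against the lower bound. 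The three classes are then treated separately, since their optimal solutions have quite different combinatorial structure.

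For cliques I would use the fact that in $K_n$ activation depends only on the \emph{number} of currently active nodes. Sorting the thresholds $t_{(1)}\le\cdots\le t_{(n)}$, a set is a target set iff the non-seeds can be activated in increasing-threshold order so that the $j$-th activated non-seed sees enough active predecessors; minimizing over seed choices gives a closed-form optimum of the type $\max_j\big(t_{(j)}-(j-1)\big)$, suitably clamped to $[0,n]$. I would then check that on a clique the metric $k(u)/\big(\delta(u)(\delta(u)+1)\big)$ and the triggering of Case~2 reproduce exactly this value, using Fact~\ref{fact1} to track how $\delta_i$ decreases uniformly as nodes enter $L$ or $S$.

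For cycles every node has degree $2$, so after normalization the relevant thresholds lie in $\{1,2\}$. I would argue by contracting cascades: a single threshold-$1$ seed propagates along a maximal threshold-$\le 1$ arc, and the real obstruction comes from maximal runs of threshold-$2$ nodes, for which an independent-set/counting argument yields the exact optimum. The only mild subtlety is the cyclic symmetry, which I would remove by breaking the cycle at a node selected early by Case~3; the key observation is that the limbo mechanism (line~21 decrementing $\delta$ of a neighbor, recorded in Fact~\ref{fact2}) is precisely what converts the cycle into the path whose optimum is already understood.

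The main obstacle is the tree case, which I would handle by induction from the leaves. On a tree the metric favors small-degree nodes, so Case~3 tends to push leaves into $L$, and moving a leaf into $L$ decrements the usable in-degree $\delta$ of its parent — exactly the step that later forces a seed via Case~2 or a free activation via Case~1 higher up, mirroring the optimal bottom-up tree dynamic program. The crux, and where I expect most of the work to lie, is showing this coincidence holds \emph{regardless} of how ties in Case~3 are broken: I would prove that whenever Case~2 fires on a tree node, that node is genuinely unavoidable, via an exchange argument that rewrites any optimal target set into one containing the algorithm's Case~2 choices without increasing its size. Carefully controlling the joint evolution of $L$, the residual thresholds $k_i$, and the quantities in Facts~\ref{fact1}--\ref{fact2} across the (at most $2n$, by Lemma~\ref{lemma1}) iterations is the technically delicate part of the argument.
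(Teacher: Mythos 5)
There is a genuine gap: your argument is a plan whose load-bearing steps are all deferred, and its ``normalization'' step rests on a false claim. You assert that a node $v$ caught by Case~2 satisfies $k_i(v)>\delta_i(v)\ge$ its residual degree and therefore ``must lie in every target set.'' The inequality goes the wrong way: by Fact~\ref{fact1}, $\delta_i(v)=|\Gamma^{\iin}(v)\cap(U_i-L_i)|\le d^{\iin}_{G[U_i]}(v)$, with strict inequality whenever $v$ has in-neighbors in the limbo set $L_i\cap U_i$. Such a node is \emph{not} unavoidable: its limbo neighbors are nodes the algorithm has merely promised will activate later, and they may well suffice to activate $v$. The paper's own example makes this concrete: on the graph of Fig.~\ref{fig:exampleC}, MTS (Table~\ref{example1}) seeds $\{v_0,v_4,v_8\}$ via Case~2 while the optimum is $\{v_0,v_{10}\}$ of size $2$ (Table~\ref{example2}), so Case-2 choices cannot in general be exchanged into an optimal solution without increasing its size. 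Consequently, the exchange argument you invoke for trees---which you yourself identify as the crux---must genuinely exploit tree structure and the joint dynamics of $L$, $k_i$, $\delta_i$; it is precisely the part that distinguishes MTS from a naive greedy rule, and it is asserted rather than proven. The clique closed form and the cycle run-contraction argument are likewise stated but never checked against the limbo mechanism, so none of the three cases is actually established.

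For comparison, the paper avoids all per-class analysis. It proves a transfer theorem (Theorem~\ref{optimal}): for any \emph{hereditary} family of graphs, if the older TSS algorithm (Algorithm~\ref{algTSS}) is optimal on the family, then so is MTS. The proof is an induction over the iterations of MTS that compares MTS on $G[U_i]$ with thresholds $k_i$ against TSS on $G[U_i-L_i]$, using the threshold-monotonicity Lemma~\ref{lemmaOptimal} to absorb the effect of limbo nodes whose influence has already lowered residual thresholds. Theorem~\ref{th2} then follows by citing \cite{CGM+} for the optimality of TSS on trees, cycles and cliques. If you wish to pursue your direct route, you would in effect be re-deriving those TSS results \emph{and} solving the limbo-interaction problem per class; proving the transfer statement once is both shorter and reuses what is already known.
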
		
In Theorem \ref{optimal} we will prove that our  MTS algorithm provides an optimal solution for a family of graphs whenever the  
TSS algorithm, designed in \cite{CGM+} and shown   in Algorithm \ref{algTSS}, does. This and the results in  \cite{CGM+}  imply, in particular, the optimality of the MTS algorithm in case of 
trees, cycles and cliques.

{The MTS algorithm is an improvement of the TSS algorithm.
The main difference between the two algorithms is that the MTS algorithm takes also into account the potential influence that a deprecated node (i.e., a node selected in Case 3) may apply on his outgoing neighbors. For this reasons such nodes are moved into a limbo state (that is the node has been discarded but not removed) while in the original TSS
algorithm, once a node was selected in Case 3, it was immediately pruned from the graph and so its potential influence was lost.}

Even though the MTS algorithm usually performs   better than the TSS algorithm---as it is also shown  by the experiments in the Section \ref{optimal} ---the following example gives a rare case in which the  TSS algorithm outperforms the  MTS algorithm.
\begin{example}
{Consider the graph $G$  in Fig. \ref{fig:exampleC}. The number inside each circle is the node threshold.
A possible execution of the two algorithms MTS and TSS on $G$ is summarized in tables \ref{example1} and \ref{example2}. 
For each iteration $i$ of the while loop, the tables provides the content of  the sets $U_i$, $L_i {\cap} U_i$, $S_i$, the selected node and whether Cases 1, 2  or 3 applies.
Analyzing the tables one can observe that the algorithm TSS provides a target set of cardinality $2$ (which is optimal in this case) while the algorithm MTS provides a target set of cardinality $3$. It is worth to mention that the two algorithms performs very similarly. For instance the two graphs obtained at the begin of round $4$ of the MTS algorithm and round $3$ of the TSS algorithm are identical except for the threshold of the node $v_2$ which in the MTS algorithm is reduced to $3$ thanks to the contribution of the node $v_1$ (see Fig. \ref{fig:exampleC}(bottom-left)). Indeed node $v_1$ is first placed in $L$ at iteration $1$ and it is removed from the graph  at iteration $3$ as  its residual threshold becomes $0$. In the TSS algorithm, node $v_1$ is directly removed from the graph at iteration $1$ and the threshold of $v_2$ remains $4$ (see Fig. \ref{fig:exampleC}(bottom-right)). }

\smallskip

\begin{algorithm}[H]
\SetCommentSty{footnotesize}
\SetKwInput{KwData}{Input}
\SetKwInput{KwResult}{Output}
\DontPrintSemicolon
\caption{ \ \   \textbf{Algorithm} TSS($G$, $t$) \cite{CGM+} \label{algTSS}}
\KwData { A graph $G=(V,E)$ with thresholds $t(v)$  for $v\in V$.\\ }
\setcounter{AlgoLine}{0}
$S=\emptyset$; $U=V$; \\  
\ForEach {$v\in V $}{
  $k( v)=t( v)$; \\ 
  $\d( v)=|\Gamma(v)|$;\\
}
\While(\tcp*[f]{Select one node and eliminate it from the graph.}){ $U\neq \emptyset$ }{
    \eIf(\tcp*[f]{\underline{Case 1.}}){there exists $v\in U$ s.t. $k(v)=0$}
    		{
    		\lForEach {$u\in \Gamma(v)\cap U$}{
    			$k(u)=\max(k(u)-1,0)$;
    		  }
    		}
				{
				\eIf(\tcp*[f]{\underline{Case 2.}}){ there exists $v\in U$ s.t. $\d(v) <  k(v)$}
    				{$S=S\cup\{v\}$; \\
    				\lForEach {$u\in \Gamma(v)\cap U$}{
    						$k(u)=k(u)-1$;
    				}
						}
						(\tcp*[f]{\underline{Case 3.}}){	
    				$v={\tt argmax}_{u\in U}\left\{\frac{ k(u)}{\delta(u)(\delta(u)+1)}\right\}$;
    				}
						}
						\lForEach(\tcp*[f]{Remove  $v$.}) {$u\in \Gamma(v)\cap U$}{
    						$\d(u)=\d(u)-1$;
    				}
						$U=U-\{v\}$;
}
\textbf{return} $S$
\end{algorithm}

\begin{figure}[th!]
\begin{center}

	\caption{	A example of graph where the TSS algorithm provides a better solution. (Top) the initial graph $G$. (Bottom-left) The residual graph at the beginning of round $4$ of the algorithm MTS. (Bottom-right) The residual graph at the beginning of round $3$ for of the algorithm TSS.	\label{fig:exampleC}}
	\end{center}
\end{figure}

\begin{table}[ht!]
\begin{center}
\scalebox{0.95}{
\resizebox{1.05\linewidth}{!} {
\begin{tabular}{|c|c|c|c||c|c|}
\hline
$i$		 & $U_i$ & $L_i {\cap} U_i$ & $S_i$ & Selected	& Case \\
	 &  &  & &  node 	& \\ \hline
1 		   & $\{v_0,v_1,v_2,v_3,v_4,v_5,v_6,v_7,v_8,v_9,v_{10}\}$ &  $\emptyset$	   	& $\emptyset$ & $v_1$	& 3 \\ \hline
2 		 & $\{v_0,v_1,v_2,v_3,v_4,v_5,v_6,v_7,v_8,v_9,v_{10}\}$ & $\{v_1\}$ 	   	& $\emptyset$ & $v_0$	& 2 \\ \hline
3 		& $\{v_1,v_2,v_3,v_4,v_5,v_6,v_7,v_8,v_9,v_{10}\}$ 		& $\{v_1\}$ 			& $\{v_0\}$  & $v_1$	& 1 \\ \hline
4 		 & $\{v_2,v_3,v_4,v_5,v_6,v_7,v_8,v_9,v_{10}\}$  				& $\emptyset$  			& $\{v_0\}$ & $v_{10}$	& 3 \\ \hline
5 		& $\{v_2,v_3,v_4,v_5,v_6,v_7,v_8,v_9,v_{10}\}$ 				& $\{v_{10}\}$ 			& $\{v_0\}$  & $v_2$	& 3 \\ \hline
6 		 & $\{v_2,v_3,v_4,v_5,v_6,v_7,v_8,v_9,v_{10}\}$ 				& $\{v_2,v_{10}\}$ 			& $\{v_0\}$  & $v_5$	& 3 \\ \hline
7 		 & $\{v_2,v_3,v_4,v_5,v_6,v_7,v_8,v_9,v_{10}\}$ 				& $\{v_2,v_5,v_{10}\}$ 			& $\{v_0\}$ & $v_9$	& 3\\ \hline
8 		 & $\{v_2,v_3,v_4,v_5,v_6,v_7,v_8,v_9,v_{10}\}$ 				& $\{v_2,v_5,v_9,v_{10}\}$ 			& $\{v_0\}$ & $v_4$	&2  \\ \hline
9 		  & $\{v_2,v_3,v_5,v_6,v_7,v_8,v_9,v_{10}\}$ 				& $\{v_2,v_5,v_9,v_{10}\}$ 			& $\{v_0,v_4\}$ & $v_6$	& 1\\ \hline
10 		 & $\{v_2,v_3,v_5,v_7,v_8,v_9,v_{10}\}$ 				& $\{v_2,v_5,v_9,v_{10}\}$ 			& $\{v_0,v_4\}$ & $v_7$	& 1 \\ \hline
11 		 & $\{v_2,v_3,v_5,v_8,v_9,v_{10}\}$ 				& $\{v_2,v_5,v_9,v_{10}\}$ 			& $\{v_0,v_4\}$  & $v_9$	& 1\\ \hline
12 		 & $\{v_2,v_3,v_5,v_8,v_{10}\}$ 				& $\{v_2,v_5,v_{10}\}$ 			& $\{v_0,v_4\}$  & $v_3$	& 3\\ \hline
13 		 & $\{v_2,v_3,v_5,v_8,v_{10}\}$ 				& $\{v_2,v_3,v_5,v_{10}\}$ 			& $\{v_0,v_4\}$ & $v_8$	& 2 \\ \hline
14 		 & $\{v_2,v_3,v_5,v_{10}\}$ 				& $\{v_2,v_3,v_5,v_{10}\}$ 			& $\{v_0,v_4,v_8\}$ & $v_3$	& 1 \\ \hline
15 		 & $\{v_2,v_5,v_{10}\}$ 				& $\{v_2,v_5,v_{10}\}$ 			& $\{v_0,v_4,v_8\}$ & $v_5$	& 1 \\ \hline
16 		& $\{v_2,v_{10}\}$ 				& $\{v_2,v_{10}\}$ 			& $\{v_0,v_4,v_8\}$  & $v_2$	& 1 \\ \hline
17 		 & $\{v_{10}\}$ 				& $\{v_{10}\}$ 			& $\{v_0,v_4,v_8\}$ & $v_{10}$	& 1 \\ \hline
\end{tabular}
}
}
\caption{	An example of execution of  MTS($G$, $t$) on the graph $G$  in Fig. \ref{fig:exampleC} (top).\label{example1} }
\end{center}
\end{table}
\begin{table}[ht!]
\begin{center}
\scalebox{0.90}{
\begin{tabular}{|c|c|c||c|c|}
\hline
$i$		 & $U_i$  & $S_i$ & Selected  	& Case \\ 	 &  &  &  node 	&  \\ \hline
1 		& $\{v_0,v_1,v_2,v_3,v_4,v_5,v_6,v_7,v_8,v_9,v_{10}\}$ 	   	& $\emptyset$ & $v_1$	& 3  \\ \hline
2 		& $\{v_0,v_2,v_3,v_4,v_5,v_6,v_7,v_8,v_9,v_{10}\}$ 	   	& $\emptyset$ & $v_0$	& 2 \\ \hline
3 		 & $\{v_2,v_3,v_4,v_5,v_6,v_7,v_8,v_9,v_{10}\}$ 	   	& $\{v_0\}$ & $v_2$	& 3 \\ \hline
4 		 & $\{v_3,v_4,v_5,v_6,v_7,v_8,v_9,v_{10}\}$ 	   	& $\{v_0\}$ & $v_{10}$	& 2 \\ \hline
5 		 &$\{v_3,v_4,v_5,v_6,v_7,v_8,v_9\}$ 	   	& $\{v_0,v_{10}\}$ & $v_9$	& 1 \\ \hline
6 		 & $\{v_3,v_4,v_5,v_6,v_7,v_8\}$ 	   	& $\{v_0,v_{10}\}$ & $v_8$	& 1 \\ \hline
7 		 & $\{v_3,v_4,v_5,v_6,v_7\}$ 	   	& $\{v_0,v_{10}\}$  & $v_7$	& 1\\ \hline
8 		 &  $\{v_3,v_4,v_5,v_6\}$ 	   	& $\{v_0,v_{10}\}$  & $v_6$	& 1\\ \hline
9 		 & $\{v_3,v_4,v_5\}$ 	   	& $\{v_0,v_{10}\}$ & $v_5$	& 1 \\ \hline
10 		 & $\{v_3,v_4\}$ 	   	& $\{v_0,v_{10}\}$ & $v_4$	& 1 \\ \hline
11 		 & $\{v_3\}$ 	   	& $\{v_0,v_{10}\}$ & $v_3$	& 1 \\ \hline
\end{tabular}
}
\caption{	An example of execution of  TSS($G$, $t$) on the graph $G$  in Fig. \ref{fig:exampleC} (top).\label{example2} }
\end{center}
	\end{table}
\end{example}

In order to prove Theorem \ref{optimal}, we first  need an intermediate result.
\begin{lemma} \label{lemmaOptimal}
Let $G=(V,E)$ be a graph. Let $S^{*}(G,t)$ denote  an optimal target set for  $G$ with   threshold function $t$. Then for every pair of  functions $t_1$ and $t_2$ such that $t_1(v) \leq t_2(v)$ for each $v \in V$, 
it holds $|S^{*}(G,t_1)|\leq |S^{*}(G,t_2)|$.
\end{lemma}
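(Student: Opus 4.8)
The plan is to show that any target set for the larger thresholds $t_2$ is automatically a target set for the smaller thresholds $t_1$; optimality of $S^{*}(G,t_1)$ then yields the inequality for free. Concretely, let $S = S^{*}(G,t_2)$ be an optimal target set for $G$ under $t_2$. If I can prove that this same $S$ is also a target set for $G$ under $t_1$, then, because $S^{*}(G,t_1)$ is a \emph{minimum}-size target set under $t_1$, I immediately obtain $|S^{*}(G,t_1)| \leq |S| = |S^{*}(G,t_2)|$, which is exactly the claim.

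To prove that $S$ remains a target set when the thresholds are lowered to $t_1$, I would write $\Active_G^{t}[S,\ell]$ for the activation process of Section~\ref{secMTS} run with threshold function $t$, and establish by induction on $\ell \geq 0$ the containment
$$\Active_G^{t_2}[S,\ell] \subseteq \Active_G^{t_1}[S,\ell].$$
The base case $\ell = 0$ is immediate, since both sides equal $S$. For the inductive step I assume the containment at round $\ell-1$ and take any $u \in \Active_G^{t_2}[S,\ell]$. If $u \in \Active_G^{t_2}[S,\ell-1]$, the induction hypothesis places it in $\Active_G^{t_1}[S,\ell-1] \subseteq \Active_G^{t_1}[S,\ell]$. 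Otherwise $u$ gets active at round $\ell$ under $t_2$, so $|\Gamma_G^{\iin}(u) \cap \Active_G^{t_2}[S,\ell-1]| \geq t_2(u)$. Using the inductive containment to enlarge the intersected set, together with the hypothesis $t_1(u) \leq t_2(u)$, gives $|\Gamma_G^{\iin}(u) \cap \Active_G^{t_1}[S,\ell-1]| \geq t_2(u) \geq t_1(u)$, whence $u \in \Active_G^{t_1}[S,\ell]$.

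With the containment in hand for every $\ell$, I would conclude by choosing $\lambda$ with $\Active_G^{t_2}[S,\lambda] = V$ (such a $\lambda$ exists precisely because $S$ is a target set under $t_2$); then $V = \Active_G^{t_2}[S,\lambda] \subseteq \Active_G^{t_1}[S,\lambda] \subseteq V$ forces $\Active_G^{t_1}[S,\lambda] = V$, so $S$ is indeed a target set under $t_1$. I do not expect any genuine obstacle here, as the statement is a pure monotonicity fact; the only points requiring care are keeping the two activation processes notationally distinct and making sure the one-line counting step in the induction invokes the set containment and the pointwise threshold inequality $t_1(u)\le t_2(u)$ in the correct order.
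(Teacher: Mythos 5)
Your proposal is correct and follows the same route as the paper: both reduce the claim to the observation that $S^{*}(G,t_2)$ remains a target set under the pointwise-smaller thresholds $t_1$, and then invoke minimality of $S^{*}(G,t_1)$. The only difference is that the paper states this monotonicity observation without proof, whereas you verify it by a (correct) induction on the rounds of the activation process.
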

\proof
It is enough to observe that since $t_1(v) \leq t_2(v)$ for each $v \in V$, the target set $S^{*}(G,t_2)$ for $G$ with threshold function $t_2$ is also a target set for $G$ with threshold function $t_1$. 
\qed

\medskip
 A graph family  is called  {\em hereditary} if   it is closed under induced subgraphs.
Let $\cal{G}$  be a hereditary graph family.
We say that an algorithm is optimal for  $\cal{G}$ if 
it  returns an optimal target set   for any  $G\in \cal{G}$ (for any threshold function on the nodes of $G$).

\newcommand\Sol[3]{{S_{#1}(#2,#3)}}

\begin{theorem} \label{optimal}
Let $\cal{G}$ be any hereditary family of graphs. If the algorithm TSS  is optimal for  
$\cal{G},$ then  the algorithm  MTS is optimal for   $\cal{G}$.
\end{theorem}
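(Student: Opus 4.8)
The plan is to prove the statement by induction on the number of nodes $n=|V|$, exploiting the hereditary hypothesis so that every induced subgraph we produce is again in $\mathcal{G}$ and therefore carries both the TSS-optimality hypothesis and (for strictly smaller graphs) the MTS-optimality induction hypothesis. The starting observation is that at the very first iteration the two algorithms see identical data ($L=\emptyset$, $\delta(u)=d^{\iin}(u)$, $k(u)=t(u)$) and identical case conditions, so they enter the same Case and, under a common tie-break, select the same node $v$. I would then split on which case fires.

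If Case~1 fires, $v$ has $t(v)=0$ and activates for free; letting $t'$ be $t$ with the thresholds of $\Gamma(v)$ decremented by one and $G-v:=G[V\setminus\{v\}]$, a standard reduction gives $|S^{*}(G,t)|=|S^{*}(G-v,t')|$, while the algorithm's update shows $S_{\mathrm{MTS}}(G,t)=S_{\mathrm{MTS}}(G-v,t')$; the induction hypothesis on $G-v\in\mathcal{G}$ then closes this case. If Case~2 fires, $v$ has $d(v)<t(v)$ and therefore lies in every target set, so $|S^{*}(G,t)|=1+|S^{*}(G-v,t')|$ and likewise $S_{\mathrm{MTS}}(G,t)=\{v\}\cup S_{\mathrm{MTS}}(G-v,t')$, and again induction finishes the case. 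Both reductions are routine once one checks that decrementing neighbour thresholds faithfully records $v$'s influence.

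The real work is Case~3, where MTS places $v$ into the limbo set $L$ instead of deleting it. Here I would use the TSS hypothesis twice: since TSS is optimal on $G$ and also selects $v$ in Case~3 at its first step (merely deleting it and passing to its initial configuration on $G-v$), we get $S_{\mathrm{TSS}}(G,t)=S_{\mathrm{TSS}}(G-v,t)$, and optimality of TSS on both $G$ and $G-v\in\mathcal{G}$ yields $|S^{*}(G,t)|=|S^{*}(G-v,t)|$. The induction hypothesis then gives $|S_{\mathrm{MTS}}(G-v,t)|=|S^{*}(G-v,t)|=|S^{*}(G,t)|$. Since, by Theorem~\ref{teo1}, $S_{\mathrm{MTS}}(G,t)$ is a valid target set and hence has size at least $|S^{*}(G,t)|$, it suffices to prove the reverse inequality $|S_{\mathrm{MTS}}(G,t)|\le|S_{\mathrm{MTS}}(G-v,t)|$.

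This last inequality is where I expect the main obstacle. Intuitively, retaining $v$ in limbo can only help: $v$ will eventually activate (as in the proof of Lemma~\ref{lemma1}, limbo nodes leave via Case~1) without ever being charged to $S$, and its activation only decrements the residual thresholds of its neighbours, so by the monotonicity principle underlying Lemma~\ref{lemmaOptimal} no extra seed should ever be forced. Making this rigorous requires coupling the executions of MTS on $G$ and on $G-v$ and maintaining the invariant that the residual thresholds in the $G$-run are pointwise dominated by those in the $(G-v)$-run, so that every Case-2 seed paid by the $G$-run can be charged to a distinct Case-2 seed of the $(G-v)$-run. The delicate point is that limbo also lowers the $\delta$-values of $v$'s neighbours, which can alter the Case-3 $\argmax$ and make the two runs select different nodes; controlling this divergence---showing that the threshold-domination invariant survives a possibly different greedy choice---is the crux of the argument and the step I would spend the most care on.
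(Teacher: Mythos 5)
Your Cases 1 and 2 are correct (they are the standard reductions, and the residual state of MTS after such a first step is exactly the initial state of MTS on the reduced instance), and your reduction of Case 3 to the single inequality $|S_{\mathrm{MTS}}(G,t)|\le|S_{\mathrm{MTS}}(G-v,t)|$ is logically valid. The genuine gap is that this inequality is the entire content of the theorem, and the argument you sketch for it---couple the two runs and maintain the invariant that residual thresholds in the $G$-run are pointwise dominated by those in the $(G-v)$-run, so that ``no extra seed should ever be forced''---cannot be made to work, because it never uses the hypothesis that TSS is optimal on the hereditary family, and the unconditional statement it would establish is false. The paper's own example refutes it. In the execution of MTS on the graph $G$ of Fig.~\ref{fig:exampleC} (Table~\ref{example1}), at the beginning of round $4$ the residual instance is $G[\{v_2,\dots,v_{10}\}]$ with no limbo nodes and $k(v_2)=3$, while the TSS run (Table~\ref{example2}) reaches at round $3$ the \emph{same} instance except that $k(v_2)=4$; from the lower-threshold instance MTS pays two further seeds ($v_4$ and $v_8$), whereas from the higher-threshold instance only one seed ($v_{10}$) is paid, and that higher-threshold continuation can be mirrored by a legal MTS execution. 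The reason is that the smaller threshold flips the Case-3 argmax from $v_2$ to $v_{10}$. So the output size of the greedy MTS run is \emph{not} monotone under pointwise threshold reduction: Lemma~\ref{lemmaOptimal} gives monotonicity for \emph{optimal} target sets only, and your charging of Case-2 seeds across the two coupled runs is exactly the statement that fails.

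Any correct completion of Case 3 must therefore invoke TSS-optimality inside the comparison of the runs, not only to equate $|S^{*}(G,t)|$ with $|S^{*}(G-v,t)|$. This is what the paper does, via a different induction: backward over the iterations of the single MTS run on $G$, proving that $|S_{\mathrm{MTS}}(G[U_i],k_i)|=|S_{\mathrm{TSS}}(G[U_i-L_i],k_i)|$ at every iteration $i$ with $v_i\notin L_i$. There, each residual MTS state is compared against TSS, which by heredity is optimal on every induced subgraph $G[U_i-L_i]$; consequently Lemma~\ref{lemmaOptimal} is applied only to optimal sizes (the chain leading to~(\ref{eq-opt2})), where it is valid, and no ``limbo never hurts'' principle is ever needed.
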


\proof
Let $G=(V,E)\in \cal{G}$ and $t:V\to \N$.
We recall that $\lambda$ denotes  the last iteration of algorithm MTS($G$, $t$) and that
for $i=1,\ldots,\lambda$:
\begin{itemize}
	\item $v_i$ denotes the node that is selected during the $i$-th iteration of the while loop in MTS($G$, $t$);
  \item  $U_i, L_i, S_i, \d_i(u),$ and $k_i(u)$, respectively, denote the sets $U, L, S$ and the values of $\d(u)$ and  $k(u)$,  as updated at the beginning of  the $i$-th iteration of the while loop in MTS($G$, $t$).
\end{itemize}
Moreover, we denote by $\Sol{Alg}{G}{t}$ the solution obtained by the algorithm $Alg$ when the input  is  $G$ with threshold function  $t$.
We prove that 
$$| \Sol{MTS}{G}{t} |=| \Sol{TSS}{G}{t}|.$$
We prove  that at  any  iteration  $i$ of  the while loop in MTS($G$, $t$) such that   $v_i \notin L_i$  it holds
 $|\Sol{MTS}{G[U_i]}{k_i} |=|\Sol{TSS}{G[U_i-L_i]}{k_i} |$. 
\\
Let $\lambda' \leq \lambda$ be the last iteration of the while loop in MTS($G$, $t$) for which   $v_{\lambda'} \notin L_{\lambda'}$. 
The proof proceeds by induction on $i$ going from $\lambda'$ down to $1$.
The theorem follows since for $i=1$ we get $L_1=\emptyset$
 (and therefore $v_1\not\in L_1$), $U_1=U_1-L_1=V$, 
 and $k_1(v)=t(v)$ for each node $v\in V$.\\

Since $U_{\lambda'}-L_{\lambda'}=\{v_{\lambda'}\}$, 
by Fact \ref{fact1}  we get $\delta_{\lambda'}(v_{\lambda'})=|\Gamma(v_{\lambda'}) \cap (U_{\lambda'}-L_{\lambda'})|=0$. Hence, by the algorithm and recalling Remark 1 we have
$$    \Sol{MTS}{G[U_{\lambda'}]}{k_{\lambda'}} =\begin{cases} 
\emptyset & \mbox{if $k_{\lambda'}(v_{\lambda'})=0$;} \\ 
\{v_{\lambda'} \} & \mbox{otherwise,} \end{cases} $$
that matches $\Sol{TSS}{G[U_{\lambda'}-L_{\lambda'}]}{k_{\lambda'}}$.

Let  $\ell$ be any iteration of the while loop in MTS($G$, $t$) such that 
$v_\ell \notin L_\ell $. Assume that 
\begin{equation}
|\Sol{MTS}{G[U_\ell]}{k_{\ell}}|=| \Sol{TSS}{G[U_\ell-L_\ell]}{k_{\ell}}|. \label{eq-x}
\end{equation}
Let $i \in\{1,\ldots, \ell-1\}$ be the unique  iteration for which 
$v_{i} \notin L_{i} $ and $v_j \in L_{j}$ for each $j=i+1,\ldots, \ell-1$.
\\
By the algorithm MTS we have that 

a) $k(v_{i+1})= \ldots =k(v_{\ell-1})=0$, 

b) $U_{i}-L_{i}-\{v_{i}\} = U_{i+1}-L_{i+1}=\ldots=U_{\ell}-L_{\ell}  $, 

c)  $k_{\ell}(v)\leq k_{i+1}(v)$, for all $v \in U_{\ell}-L_{\ell} $.

\medskip

\noindent
By a), Case 1 of the algorithm MTS occurs at each of the iterations 
from $i+1$ to $\ell-1$. As a consequence,   we clearly have 
\begin{equation}\label{eq-opt1}
\Sol{MTS}{G[U_{i+1}]}{k_{i+1}} =\ldots = \Sol{MTS}{G[U_{\ell}]}{k_{\ell}}.
\end{equation}
By (\ref{eq-opt1}), (\ref{eq-x}),  b), c) and Lemma  \ref{lemmaOptimal} (in this specific order), we get  
\begin{eqnarray}  \nonumber
|\Sol{MTS}{G[U_{i+1}]}{k_{i+1}}|&=&
|\Sol{MTS}{G[U_{\ell}]}{k_{\ell}}|\\ \nonumber
             &=& |\Sol{TSS}{G[U_{\ell}-L_\ell]}{k_{\ell}}|\\ \nonumber
	    &=& |\Sol{TSS}{G[U_{i+1}-L_{i+1}]}{k_{\ell}}|\\  \label{eq-opt2}
	  &\leq& |\Sol{TSS}{G[U_{i+1}-L_{i+1}]}{k_{i+1}}|. 
\end{eqnarray}
We now notice that if the algorithm MTS($G[U_{i}]$, $k_{i}$) adds the node $v_i$ to the target set, it does so because $k_{i}(v_{i}) > \d_{i}(v_{i})$. In this case, it is not difficult to see that there exists an execution of the algorithm TSS($G[U_{i}-L_{i}]$, $k_{i}$) that similarly adds the node $v_i$ to the target set. Therefore, if 
$$|\Sol{MTS}{G[U_{i}]}{k_{i}}|=|\Sol{MTS}{G[U_{i+1}]} {k_{i+1}}| +1 $$  then  $$|\Sol{TSS}{G[U_{i}-L_i]}{k_{i}}|=|\Sol{TSS}{G[U_{i+1}-L_{i+1}]} {k_{i+1}}| +1. $$
%
Hence, by (\ref{eq-opt2}) we have
$$|\Sol{MTS}{G[U_{i}]}{k_{i}}| \leq |\Sol{TSS}{G[U_{i}-L_{i}]} {k_{i}}|.$$
The optimality of TSS implies 
$|\Sol{MTS}{G[U_{i}]}{k_{i}}| = |\Sol{TSS}{G[U_{i}-L_{i}]} {k_{i}}|.$ \qed 

\remove{
Then in round $\ell$ the algorithm MTS($G$, $t $) operates on $F(\ell-1)$ as the algorithm  TTS  operates on $F(i)=F(\ell-1)$ but with different thresholds (i.e., $k_{\ell-1}()$ in place of $k_{i}(v)$).

By lemma  \ref{lemmaOptimal} we have $|TSS(F(i),k_{\ell-1} )|\leq |TSS(F(i),k_{i} )|$. Hence there exists an optimal solution for 
$(F(i),k_{i}())$ that can be obtained by operating as the algorithm TTS($G$) operates on $F(i)$ with thresholds $k_{\ell-1}()$.
In particular $\exists S^{*}$ for $F(i)$ such that if TTS($G$) operating on $F(i)$ with thresholds $k_{\ell-1}()$ selects $v_\ell$ then $v_\ell \in S^{*}$ and viceversa.

After that the node $v_\ell \notin U_\ell - L_\ell$ and by inductive hypothesis we know that $|\mbox{MTS($F(\ell)$, $t_\ell()$)}|=|\mbox{TSS($F_\ell$, $t_\ell()$)}|$
\qed 
}

%
%
%

\subsection{Optimality on Dense graphs}
We prove the optimality of algorithm $MTS$  for a class of dense graphs known as Ore graphs, whenever the threshold of each node is equal to 2. \\ 
An {\em Ore graph}  $G=(V,E)$ has the property  that for  $u,v\in V$ 
$$\mbox{if $(v,u) \not\in E$  then 
  $|\Gamma(u)| + |\Gamma(v)|  \geq n.$}$$
  It was proved in \cite{Fr+}  that  any Ore graph $G$ 
with $t(u)=2$ for each $u \in V$,  admits an optimal target set of size two.
We will prove that algorithm $MTS$ works optimally on $G$.
\begin{theorem} \label{Ore}
The algorithm MTS($G$, $t$) outputs an optimal solution whenever 
 $G=(V,E)$ is an  Ore graph and  $t(u)=2$, for each $u \in V$.
\end{theorem}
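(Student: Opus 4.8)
The plan is to combine the existence result of \cite{Fr+} with a structural analysis of the run of MTS. First I would record the target value: with $t\equiv 2$ no set of at most one node is a target set (a single node $s$ activates nobody, since every $u$ has $|\Gamma(u)\cap\{s\}|\le 1<2$), so the optimum is at least $2$; by \cite{Fr+} it equals exactly $2$. Since Theorem \ref{teo1} guarantees that MTS returns a target set, it suffices to prove $|S|\le 2$, i.e. that Case 2 is executed at most twice.

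Next I would pin down the shape of the execution. As long as no node has yet been removed by Case 1 or Case 2, every residual threshold is still $2$ (thresholds are decremented only in lines 8 and 16), so Case 1 cannot occur and, whenever no candidate has $\d\le 1$, the selector of line 19 reduces to $\argmax_{u}2/(\d(u)(\d(u)+1))=\argmin_u \d(u)$. By Fact \ref{fact1}, for $u\in U_i-L_i$ we have $\d_i(u)=\deg_{G[U_i-L_i]}(u)$, so this first phase is exactly a minimum-degree peeling of the induced candidate graph $G[U_i-L_i]$, each selected node being moved to $L$. Because an Ore graph has minimum degree at least $2$ (a vertex of degree $\le 1$ would be non-adjacent to some vertex of degree $<n-1$, violating the Ore inequality), the first event is a Case 3 step, and the peeling continues until $G[U_i-L_i]$ first acquires a node of degree $\le 1$; that node, which has residual degree exactly $1$, triggers the first Case 2.

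The decisive lever is that Case 2 is confined to $U-L$ (line 12): once $U_i-L_i=\emptyset$ no further Case 2 is possible, and by the argument establishing (\ref{U=L}) in Lemma \ref{lemma1} some surviving node then has $k=0$, so the algorithm finishes using Case 1 alone. Hence $|S|$ equals the number of Case 2 steps taken while $U_i-L_i\neq\emptyset$, and it is enough to show that the candidate set $U_i-L_i$ is emptied after at most two seed insertions. To this end I would analyse the configuration at the first Case 2: the selected node $a$ has exactly one surviving neighbour $b\in U_i-L_i$, while all its remaining $d(a)-1$ neighbours already sit in $L$ (all residual thresholds being still $2$). Inserting $a$ into $S$ drops $k(b)$ and every $k(w)$ with $w\in\Gamma(a)\cap L$ to $1$; I would then track the induced Case 1 / Case 2 cascade and show, using the Ore degree condition, that a single further seed already drives every remaining residual threshold to $0$, so that $U_i-L_i$ empties without a third Case 2.

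The hard part will be exactly this last step, because Ore's hypothesis is global on $G$ and is \emph{not} inherited by the residual induced subgraphs on which MTS operates, and because the seeds MTS is forced to select are the low-residual-degree last survivors of the peeling rather than the high-degree pair used in the existence proof of \cite{Fr+}. I would control it by a case analysis on whether the second forced seed is $a$'s neighbour $b$ or another candidate, invoking the quantitative form of the Ore inequality that two non-adjacent vertices $u,v$ have at least $d(u)+d(v)-(n-2)\ge 2$ common neighbours: this guarantees that the large limbo set accumulated by the peeling is sufficiently dominated by the two seeds for the Case 1 cascade to complete, ruling out a third Case 2. Correctness (Theorem \ref{teo1}) then certifies that the resulting size-$2$ set is indeed a target set, yielding optimality.
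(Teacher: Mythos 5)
Your overall skeleton does match the paper's strategy: establish that the optimum equals two (the threshold-$2$ lower bound plus the existence result of \cite{Fr+}), reduce optimality to showing that Case~2 fires at most twice, observe that the pre-seed phase is a minimum-degree peeling in which the first Case-2 node $a$ has $\delta(a)=1$ with all its other neighbours in $L$, note that once $U-L=\emptyset$ the run finishes by Case~1 alone, and split on whether the two seeds are adjacent. All of these preliminary observations are correct. But the proposal stops exactly where the real work begins, and the lever you offer for finishing it cannot work as stated.

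Two concrete problems. First, in the adjacent case ($u_2=b$) the Ore inequality---and hence your common-neighbour bound---gives no information at all, because it constrains only \emph{non-adjacent} pairs; no cascade argument driven by that bound is available there. What is actually needed (and what the paper proves) is that in this case the candidate set is already $U_{\tau_1}-L_{\tau_1}=\{u_1,u_2\}$, via an argmax contradiction: if $u_2$ had a further surviving neighbour, the last such neighbour $v$ chosen under Case~3 would have to score at least $k(u_2)/\bigl(\delta(u_2)(\delta(u_2)+1)\bigr)=1/2$, forcing $\delta(v)=1<2=k(v)$, i.e.\ Case~2 rather than Case~3, a contradiction. Second, in the non-adjacent case the bound of at least two common neighbours only drives those common neighbours to residual threshold $0$; to push the cascade to \emph{every} remaining node, and thereby exclude a third Case~2, the paper introduces the partition of $V$ into $X=\{v:|\Gamma(v)|<n/2\}$ and $Y=\{v:|\Gamma(v)|\ge n/2\}$, proves that $X$ is a clique which the peeling consumes first, that both seeds then lie in $Y$, and that once one node of $Y$ reaches threshold $0$ all of $Y$ eventually does (claims (a1)--(a5)), and finally runs a sub-case analysis on whether the seeds' common neighbours fall in $Y$ or in $X$. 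Nothing in your proposal substitutes for this machinery: ``the limbo set is sufficiently dominated by the two seeds'' is precisely the statement to be proved, not an argument, and---as you yourself note---Ore's condition is not inherited by the residual graphs on which the cascade runs. So the proposal has a genuine gap at exactly the step it defers, and in the adjacent case its announced method is inapplicable in principle.
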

\proof
First we prove some claims that will be useful in the sequel.\\
Since each node in $G$ has threshold equal to 2 then the algorithm MTS selects and adds to the solution $S$ at least two nodes.
Let  $u_1$ and $u_2$ be the first and the second node that the algorithm MTS selects and adds to   $S$ and let $\tau_1$ and $\tau_2$ be the iterations in which such nodes  are selected, respectively.
Furthermore, let $X \subseteq V$ be such that $|\Gamma(x)|< n/2$  for each $x \in X$ and
let $Y \subseteq V$ be such that $|\Gamma(y)| \geq n/2$  for each $y \in Y$ 
(i.e., $X$ and $Y$ are a partition for $V$).
\begin{itemize}
\item[(a1)] {\em All the nodes in $X$ form a clique.}\\
Any pair of nodes in $X$ are neighbors since otherwise 
the sum of their degrees should be at least $n$ (by the definition of the Ore graph $G$)
and this is not possible by the definition of set $X$.

\item[(a2)] {\em $|X| < |Y|$.}\\
Assume that $|X| \geq |Y|$. Since $|Y| = n -|X|$ we have $|X| \geq n/2$. By (a1) the degree of each node in $X$ is at least $n/2-1$. Moreover since each  Ore graph is connected, there is at least an edge between the sets $X$ and $Y$. Hence there is a node $x\in X$ having degree $n/2-1+1=n/2$ which contradicts the definition of $X$.
\item[(a3)] {\em  MTS($G$, $t$) first selects  all the nodes in $X$ and then the ones in $Y$.}\\
If $|X|=0$ the claim is obvious. Assume now that $|X| \geq 1$.
Since all the nodes  of $G$ have threshold equal to 2, the argmax condition of Case 3 assures that the first selected node is a node in $X$. Case 3 also occurs for each other node selected before the first node $u_1$ added to the target set (recall that in each iteration before $\tau_1$  the nodes in the residual graphs have threshold equal to 2); by (a1) such nodes are in $X$ as long as there are nodes in $X$ in the residual graphs. 
\remove{Furthermore, when node $u_1$ is selected  (Case 2 occurs in ${\tau_1}$), $\d_{\tau_1}(u_1)=1$. Hence exactly one neighbor of $u_1$ is in $Y$ (note that the only neighbor of $u_1$ at the iteration ${\tau_1}$ cannot be in $X$ since $Y \neq \emptyset$ and $G$ is a connected graph) and the proof of (a3) is completed.}
\item[(a4)] {\em If $u_1$ and $u_2$ are not neighbors then \\
(A) $u_1$ and $u_2$ have $b \geq 2$ common neighbors in $G$, \\
(B) $u_1 \in Y$ and $u_2 \in Y$.}\\
Let $|\Gamma(u_1)|=b+a_1$ and  $|\Gamma(u_2)| =b+a_2$ where
 $b$ is the number of the common  neighbors of $u_1$ and $u_2$.
 Since $u_1$, $u_2$ and their neighbors are nodes of $G$ 
(that is $n \geq 2+a_1+a_2+b$), and since  
 $u_1$ and $u_2$ are not neighbors, by the definition of Ore graph we have
$$(a_1+b) + (a_2+b) \geq n \geq 2+a_1+a_2+b,$$
that leads to have $b \geq 2$ proving (A).\\
We now prove (B).
Since $u_1$ and $u_2$ are not neighbors and  $G$ is an Ore graph we have
$|\Gamma(u_1)| + |\Gamma(u_2)| \geq n$.
Hence, by the definition of set $Y$,  at least one between $u_1$ and $u_2$ is a node in $Y$. By (a3)  the claim is proved if $u_1 \in Y$.
By contradiction, assume that $u_1 \in X$. 
In this case only nodes in $X$ are selected by the algorithm MTS in iterations up to ${\tau_1}$ (recall (a3)). Furthermore, when node $u_1$ is selected  (Case 2 occurs in ${\tau_1}$), $\d_{\tau_1}(u_1)=1$. Hence exactly one neighbor $w$ of $u_1$ is in $Y$ (note that at the iteration ${\tau_1}$ node $w$ cannot be in $X$ since $Y \neq \emptyset$ and $G$ is a connected graph). By (a1) this implies that $|\Gamma(u_1)|=|X|$. 
Furthermore, since each $x \in X$ has been selected before $u_1$ we have 
$|\Gamma(x)| \leq |X|$. Hence, 
\begin{equation}\label{X}
\mbox{each $x \in X$ has at most one neighbor in $Y$.}
\end{equation}
On the other hand, since $|\Gamma(y)| \geq n-|X|$ (recall that 
$|\Gamma(u_1)| + |\Gamma(y)| \geq n$, for each $y \in Y -\{w\}$, 
and $|\Gamma(u_1)|=|X|$) and $y$ can have at most $|Y|-1=n-|X|-1$ neighbors in $Y$,
 we have that 
\begin{equation}\label{Y}
\mbox{each $y \in Y$ has at least one neighbor in $X$,}
\end{equation}
By (\ref{X}) and (\ref{Y}) we have $|X| \geq |Y|$ which contradicts (a2).
\item[(a5)] {\em Assume that  $u_1$ and $u_2$ are not neighbors in $G$ and there exists a node $v \in Y$ such that at an iteration $\tau > \tau_2$ it holds $k_{\tau}(v)=0$. Then each node  $y \in Y \cap U_\tau$  is removed  from the residual graph when  {\em its residual threshold is 0}, that is, there exists an iteration  $\tau' > \tau$ such that $k_{\tau'}(y)=0$ (i.e. Case 1 occurs for $y$).}\\
Since $u_1$ and $u_2$ are not neighbors in $G$ and by (a4) $u_1 \in Y$ and $u_2 \in Y$, we get that both $u_1$ and $u_2$ have at least $n/2-1$ neighbors in $V-\{u_1,u_2,v\}$, while $v$ has at least $n/2-2$ neighbors in $V-\{u_1,u_2,v\}$.
Hence, there exists a set $W \subseteq V-\{u_1,u_2,v\}$ such that $|W| \geq n/2-2$ and each $w \in W$ has at least two neighbors in $\{u_1,u_2,v\}$.
By the algorithm this means that the residual threshold of $w$ is $0$ for each $w \in W$ 
(i.e.,  there is an iteration $\tau' > \tau$ such that $k_{\tau'}(w)=0$). 
Now, we note that $|\{u_1,u_2,v\} \cup W| \geq n/2+1$ and $|V-(\{u_1,u_2,v\} \cup W)| \leq n/2-1$. 
Hence, since $|\Gamma(y)| \geq n/2$ for each  $y \in Y \cap [V-(\{u_1,u_2,v\} \cup W)]$ we have that $y$ has at least $2$ neighbors in $\{u_1,u_2,v\} \cup W$, then by the algorithm its residual threshold is equal to 0.
\end{itemize}

We are ready to prove the theorem.
We will prove that $S=\{u_1, u_2\}$ at the end of algorithm MTS. 
Recall that $u_1$ and $u_2$ are selected and added to  $S$ at  iterations
$\tau_1$ and $\tau_2$, respectively.
We distinguish two cases depending whether  $u_1$ and $u_2$ are  neighbors in $G$ or not.\\
$\bullet$  \ \ Let $u_1$ and $u_2$ be neighbors in $G$. 
Since $u_1$ is the first node selected and put in $S$, algorithm MTS implies that Case 2 occurs for the first time at the iteration $\tau_1$
(i.e. Case 3 has occurred in each iteration previous $\tau_1$), 
that is $\d_{\tau_1}(u_1)=1$ and $k_{\tau_1}(u_1)=2$.
Hence, the only neighbor of $u_1$ in $U_{\tau_1} - L_{\tau_1}$ is $u_2$.
To complete the proof in this case we prove that $U_{\tau_1} - L_{\tau_1} = \{u_1, u_2\}$ (i.e., $\tau_2 = \tau_1+1$, $U_{\tau_2+1} - L_{\tau_2+1} = \emptyset$).
By contradiction, assume that $\{u_1, u_2\} \subset U_{\tau_1} - L_{\tau_1}$.
Since $G$ is a connected graph, there exists at least one neighbor of $u_2$ in $U_{\tau_1+1} - L_{\tau_1+1}$. Furthermore, by the algorithm 
$k_{{\tau_1}+1}(u_2)=1$ and  at some iteration $\tau$, with $\tau_1 +1 \leq \tau < \tau_2$,  the last neighbor of $u_2$ in $U_{\tau} - L_{\tau}$, 
say $v$,  is selected and Case 3 occurs for it. Recalling that $k_{\tau}(v)=2$ and that 
$\d_{\tau}(u_2)=1$, $k_{\tau}(u_2)=1$, the argmax condition of Case 3 should imply  that 
$\d_{\tau}(v)=1$ 
(since $v$ has been selected instead of $u_2$ at iteration $\tau$). This leads to a contradiction since these conditions would imply Case 2 for $v$.\\
$\bullet$ \ \ Let $u_1$ and $u_2$ be independent in $G$. 
In this case we will prove that each node $v \in V-\{u_1,u_2\}$ is removed  from the residual graph when  {\em its residual threshold is 0}, that is there exists an iteration  
$\tau > \tau_2$ such that $k_{\tau}(v)=0$ (i.e. Case 1 occurs for $v$).\\
By (a4) both $u_1 \in Y$ and $u_2 \in Y$; furthermore, they have $b \geq 2$ common neighbors.
Denote by $Y_1$ and $Y_2$  the sets of  neighbors of $u_1$ and $u_2$ in $Y$ , respectively, and $Y_3 = Y-(Y_1 \cup Y_2 \cup \{u_1,u_2\})$.

\smallskip

\noindent	
-- If $|Y_1 \cap Y_2| \geq 1$ then among the $b \geq 2$ common neighbors of $u_1$ and $u_2$ there is a node $v \in Y$. 
This means that $k_{\tau_2 +1}(v)=0$. 
By (a5) it holds that each node  $y \in Y$ is removed  from the residual graph when its residual threshold is 0, that is there is an iteration $\tau > \tau_2$ such that $k_{\tau}(y)=0$.
Now, we prove that also at least two nodes in $X$ are removed  from the residual graph since their residual threshold is 0. By (a1) this would imply that  within  an iteration 
$\tau' > \tau$  each  node in $X$ has residual threshold equal to 0.
Let $A$ be the set including nodes $u_1,u_2$ and $v$ and all the nodes that have the residual threshold equal to 0 by iteration $\tau$ (recall that $Y \subseteq A$ by (a5)).
We distinguish three cases according to the size of $X \cap A$.\\
If $|X \cap A| \geq 2$ then the claim trivially follows. \\		
If  $|X \cap A| = 1$. Let $x' \in X \cap A$. We prove that there exists $x \in X - \{x'\}$ that has a neighbor in $Y$ (recall that $x'$ is a neighbor of $x$ by (a1)); hence $x$ and $x'$ are the two nodes of $X$ we are looking for. 
By contradiction assume that each $x \in X - \{x'\}$ has no neighbors in $Y$. Hence,
$|\Gamma(x)| = |X|-1$ and $|\Gamma(x)| +  |\Gamma(y)| \geq n$ for each $y \in Y$.
Then it holds $|\Gamma(y)| \geq |Y|+1$, implying that $y$ has at least a neighbor in $X$ and thus a contradiction.\\
Finally, let  $|X \cap A| = 0$. By contradiction assume that each node $x \in X$ has at most one neighbor in $Y$. This and the fact that  $|X| < |Y|$ (by (a2)) imply that there exists $y' \in Y$  that has no neighbor in $X$. Hence, $|\Gamma(x)| +  |\Gamma(y')| \geq n$ for each $x \in X$. 
Then it holds $|\Gamma(y')| \geq |Y|+1$, implying that $y'$ has at least a neighbor in $X$ and thus a contradiction.

\smallskip

\noindent	
-- If $|Y_1 \cap Y_2| = 0$ then the $b \geq 2$ common neighbors 
of $u_1$ and $u_2$ are nodes in $X$. 
Let $X_b \subseteq X$ be the set of the $b$ common neighbors of $u_1$ and $u_2$.
By the algorithm, $k_{\tau_2+1}(x)=0$ for each $x \in X_b$. By (a1) we have that by  an iteration $\tau > \tau_2$  each  node in $x \in X$ has residual threshold  equal to 0.
Now, we prove that there exists a node $y \in Y-\{u_1,u_2\}$ that has residual threshold  equal to 0  within  an iteration $\tau' \geq \tau$. By (a5), this implies that  also each other node in $Y$ has residual threshold  equal to 0  within  the end of the algorithm. 
First notice that  $n/2 \leq |\Gamma(u_1)| \leq |X| + |Y_1|$, 
$n/2 \leq |\Gamma(u_2)| \leq |X| + |Y_2|$; hence,  
 $|Y_1| \geq n/2 - |X|$, $|Y_2| \geq n/2 - |X|$ and 
\begin{equation}
|Y_3|= n{-}|X|{-}2{-}|Y_1|{-}|Y_2| \leq n{-} |X|{-}2{-} (n/2{-} |X|) {-} (n/2{-} |X|)<|X|.\label{eqY3X}\end{equation}
Now, by contradiction suppose that each $y\in Y_1 \cup Y_2$ has no neighbor in $X$ and that 
each $z \in Y_3$ has at most one neighbor in $X$. 
Hence, $|\Gamma(y)| \leq n- |X|-2$ and $n \leq |\Gamma(y)| + |\Gamma(x)|$ for each $x \in X$. 
This implies that  $|\Gamma(x)| \geq |X|+2$; that is, each node $x \in X$ has at least a neighbor in $Y_3$.
By the absurd hypothesis we know also that each node in $Y_3$ has at most  one neighbor in $X$. Hence $|X| \leq |Y_3|$, that contradicts (\ref{eqY3X}).
\qed

\bigskip

A {\em Dirac graph} $G=(V,E)$ is a graph with minimum degree $n/2$. 
Since Dirac graphs are a subfamily  of  the more general class of Ore graph, Theorem \ref{Ore} 
also holds for Dirac graphs.
\begin{corollary}
Let $G$ be a Dirac graph. 
The algorithm MTS($G$, $t$) outputs an optimal solution whenever the threshold is identical for all nodes and it is equal to 2.
\end{corollary}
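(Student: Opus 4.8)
The plan is to reduce the corollary directly to Theorem \ref{Ore} by observing that every Dirac graph belongs to the class of Ore graphs. Since Theorem \ref{Ore} already establishes optimality of MTS($G$, $t$) for every Ore graph with uniform threshold $2$, it suffices to verify the set inclusion $\{\text{Dirac graphs}\} \subseteq \{\text{Ore graphs}\}$ and then quote the theorem verbatim.

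First I would check the Ore condition for an arbitrary Dirac graph $G=(V,E)$. By definition of a Dirac graph, every node has degree at least $n/2$, that is $|\Gamma(v)| \geq n/2$ for all $v \in V$. Now take any pair of nodes $u,v$ with $(v,u) \notin E$. Adding the two degree bounds gives $|\Gamma(u)| + |\Gamma(v)| \geq n/2 + n/2 = n$, which is precisely the Ore inequality required in the definition of an Ore graph. Hence $G$ is an Ore graph. Note that the non-adjacency hypothesis is not even needed for the bound to hold; the minimum-degree condition alone forces it, so the inclusion is immediate.

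Second, since the threshold function satisfies $t(u)=2$ for every $u \in V$, the hypotheses of Theorem \ref{Ore} are met: $G$ is an Ore graph and all thresholds equal $2$. Applying that theorem yields that MTS($G$, $t$) outputs an optimal solution, which is exactly the claim.

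The main point to emphasize is that there is essentially no obstacle here: the corollary is a pure specialization, and the only nontrivial-looking ingredient, the Ore condition, follows by a one-line degree count from the defining minimum-degree property of Dirac graphs. All the substantive work has already been carried out in the proof of Theorem \ref{Ore}.
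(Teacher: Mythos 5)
Your proposal is correct and follows exactly the paper's own route: the paper also deduces the corollary by noting that every Dirac graph (minimum degree $n/2$) satisfies the Ore condition, so Theorem \ref{Ore} applies verbatim. Your explicit degree count $|\Gamma(u)|+|\Gamma(v)|\geq n/2+n/2=n$ just spells out the inclusion the paper states in one line.
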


\subsection{Estimating the size of the solution for general graphs}
We show that, although the new algorithm in some rare cases can lead to worse solutions compared to the TSS algorithm in \cite{CGM+}, we are still able upper bound  the size of the target set obtained by 
MTS($G$, $t$) for any graph $G$. Our bound matches the one given in \cite{CGM+,ABW-10}. 
\begin{theorem}\label{teo-upper}
For any graph $G$, 
the algorithm MTS($G$, $t$) outputs a target set 
$S$ of size
\begin{equation}\label{upper}
|S|\leq \sum_{v\in V} \min\left(1,\frac{t(v)}{d(v) +1}\right).
\end{equation}
\end{theorem}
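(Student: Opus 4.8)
The plan is to exhibit a potential function on the configurations of the algorithm that starts exactly at the claimed upper bound, never increases at Case~1 and Case~3 iterations, and drops by at least one unit at every Case~2 iteration (the only iterations at which a node is added to $S$). Concretely, for each iteration $i$ I would set
$$\Phi_i \;=\; \sum_{v\in U_i-L_i}\min\!\left(1,\frac{k_i(v)}{\delta_i(v)+1}\right).$$
Since $U_1-L_1=V$, $k_1(v)=t(v)$ and (for an undirected graph) $\delta_1(v)=d(v)$, we have $\Phi_1=\sum_{v\in V}\min(1,t(v)/(d(v)+1))$, the right-hand side of~(\ref{upper}); moreover $\Phi_i\ge 0$ for every $i$. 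As $|S|$ equals the number of Case~2 iterations, once the three monotonicity claims are established I obtain $|S|\le \Phi_1-\Phi_{\lambda+1}\le\Phi_1$, which is the theorem.

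First I would dispose of the two easier iteration types. If $v_i$ is removed in Case~1 (so $k_i(v_i)=0$) it contributes $0$ to $\Phi_i$ when $v_i\notin L_i$, and is not counted at all when $v_i\in L_i$, so deleting it cannot raise $\Phi$; for each affected out-neighbour $u\in(U_i-L_i)\cap\Gamma(v_i)$ its threshold is replaced by $\max(k_i(u)-1,0)$ while its residual degree either stays put ($v_i\in L_i$) or drops by one ($v_i\notin L_i$), and a direct check of $\min(1,\max(k-1,0)/\delta)\le\min(1,k/(\delta+1))$ (splitting on whether $k\le\delta+1$) shows each term is non-increasing; here $\delta_i(u)\ge 1$ because $v_i$ itself is a neighbour of $u$ in $U_i-L_i$. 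In Case~2 the selected node satisfies $k_i(v_i)>\delta_i(v_i)$, so its term equals $1$ and vanishes when $v_i$ leaves $U$, giving a drop of exactly $1$; every out-neighbour has both $k$ and $\delta$ decreased by one, and the same elementary inequality $\min(1,(k-1)/\delta)\le\min(1,k/(\delta+1))$ (valid whenever $k\ge 1$) shows these terms do not increase. Thus $\Phi_{i+1}\le\Phi_i-1$.

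The crux, and the step I expect to be the main obstacle, is Case~3, where the unusual selection metric is designed to make the accounting cancel. Here no node of $U_i-L_i$ triggers Case~1 or Case~2, so every $u\in U_i-L_i$ satisfies $1\le k_i(u)\le\delta_i(u)$; in particular the $\min$ is inactive on all these nodes. Moving $v_i$ into $L$ removes its term $k_i(v_i)/(\delta_i(v_i)+1)$ from the sum, while each out-neighbour $u\in(U_i-L_i)\cap\Gamma(v_i)$ keeps its threshold but loses one unit of residual degree, raising its term by exactly
$$\frac{k_i(u)}{\delta_i(u)}-\frac{k_i(u)}{\delta_i(u)+1}=\frac{k_i(u)}{\delta_i(u)\,(\delta_i(u)+1)}.$$
By Fact~\ref{fact1} the number of such out-neighbours is exactly $\delta_i(v_i)$, and by the $\argmax$ rule of line~19 each of these quantities is at most $k_i(v_i)/(\delta_i(v_i)(\delta_i(v_i)+1))$; summing, the total increase is at most $\delta_i(v_i)\cdot k_i(v_i)/(\delta_i(v_i)(\delta_i(v_i)+1))=k_i(v_i)/(\delta_i(v_i)+1)$, precisely the amount lost when $v_i$ left the sum, so $\Phi_{i+1}\le\Phi_i$. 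The delicate points here are verifying that the $\min$ is inactive on all Case~3-relevant nodes (so the per-neighbour increase is the clean fraction above) and invoking Fact~\ref{fact1} to get the exact neighbour count $\delta_i(v_i)$, since these two facts are exactly what make the telescoping vanish. With the three cases in hand, summing the per-iteration inequalities over $i=1,\dots,\lambda$ yields $|S|\le\Phi_1$, completing the proof.
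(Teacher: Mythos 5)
Your proposal is correct and takes essentially the same approach as the paper: your potential $\Phi_i$ is exactly the paper's weight function $W(G[U_i])=\sum_{v\in U_i-L_i}\min\left(1,\frac{k_i(v)}{\delta_i(v)+1}\right)$, and your three-case analysis (including the use of the argmax rule and Fact~\ref{fact1} to make the Case~3 increase cancel against the lost term $\frac{k_i(v_i)}{\delta_i(v_i)+1}$) coincides with the paper's inductive step $W(G[U_i])\geq W(G[U_{i+1}])+|S\cap \{v_i\}|$. The only difference is presentational: you telescope forward over iterations, whereas the paper proves $|S\cap U_i|\leq W(G[U_i])$ by induction on $i$ going from $\lambda$ down to $1$.
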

\begin{proof}{}
Let $W(G[U_i])=\sum_{v \in (U_i-L_i)} \min\left(1,\frac{k_i(v)}{\delta_i(v)+1}\right)$. We prove 
by induction on $i$, with $i$ going from $\lambda$ down to $1$, that  
\begin{equation}\label{S}
|S\cap U_i|\leq W(G[U_i]).
\end{equation}
The   bound (\ref{upper})  on $S$ follows recalling that   $G[U_1]=G$ and $L_1=\emptyset.$
\\
If $i=\lambda$ then the unique node $v_\la$ in $G[U_\la]$  either has threshold $k_\la(v_\la)=0$  and $S_\la=\emptyset$ or the node has positive threshold  $k_\la(v_\la)>\d_\la(v_\la)=0$ and  $S_\la=\{v_\la\}$. Hence, we have $|S\cap \{v_\lambda\}|=
\min\left(1,\frac{k_\lambda(v_\lambda)}{\delta_\lambda(v_\lambda)+1}\right)=W(G[U_\lambda])$.\\
Assume now that (\ref{S}) holds  for  $1<i+1\leq \lambda$. Consider then  $G[U_i]$ and 
 the node $v_{i}$. 
%
We have 
$$|S\cap U_i|\leq |S\cap \{v_i\}|+|S\cap U_{i+1} |\leq |S\cap \{v_i\}|+W(G[U_{i+1}]).$$
We  show  now that $W(G[U_i])\geq W(G[U_{i+1}]) +|S\cap \{v_i\}|$.
We distinguish three cases according to the cases  in the algorithm MTS($G$, $t$).

\noindent \textbf{Case 1:} Suppose that Case 1 of the Algorithm MTS holds; i.e. $k_i(v_i)=0$. In this case we have that for each  $u\in \Gamma(v_i) \cap U_i$, 
$$k_{i+1}(u)=\max(k_i(u)-1,0) \quad \mbox{ and }\quad 
\delta_{i+1}(u)=\begin{cases} \delta_{i}(u) & \mbox{if $v_i \in L_i$;} \\ \delta_{i}(u)-1 & \mbox{otherwise.} \end{cases} $$
We have,
\begin{eqnarray*}
\lefteqn{W(G[U_i]){-}W(G[U_{i+1}]) =}  \\
&=& \sum_{v \in (U_i-L_i)} \min\left(1,\frac{k_i(v)}{\delta_i(v)+1}\right) -\sum_{v \in (U_{i+1}-L_{i+1})} \min\left(1,\frac{k_{i+1}(v)}{\delta_{i+1}(v)+1}\right).
\end{eqnarray*}
By Fact \ref{fact2}, if $v_i \in L_i$ we have $U_{i+1} {-} L_{i+1}= U_i{-}L_i$. Hence, 
\begin{eqnarray*}
\lefteqn{W(G[U_i]){-}W(G[U_{i+1}]) =}  \\
&=&\sum_{v \in (U_{i+1}-L_{i+1})} \left[
\min \left(1, \frac{k_i(v)}{\delta_i(v)+1}\right)- \min\left(1,\frac{k_{i+1}(v)}{\delta_{i+1}(v)+1}\right)\right]\\
&=& \sum_{v \in \Gamma(v_i)  \cap (U_{i+1}-L_{i+1}) \atop 0<k_i(v)\leq \delta_i(v) } \left[
\frac{k_i(v)}{\delta_i(v)+1}-\frac{k_{i}(v)-1}{\delta_{i}(v)+1}\right]
\end{eqnarray*}

Otherwise ($v_i \notin L_i$), we have $U_{i+1} {-} L_{i+1}= (U_i{-}L_i) - \{v_i\}$ and
\begin{eqnarray*}
\lefteqn{W(G[U_i]){-}W(G[U_{i+1}]) =}  \\
&=& \sum_{v \in (U_{i+1}-L_{i+1})} \left[ \min \left(1, \frac{k_i(v)}{\delta_i(v){+}1}\right)- \min\left(1,\frac{k_{i+1}(v)}{\delta_{i+1}(v){+}1}\right)\right]\\
& & +\min \left(1, \frac{k_i(v_i)}{\delta_i(v_i){+}1}\right)\\ 
&=& \sum_{v \in \Gamma(v_i)  \cap (U_{i+1}-L_{i+1}) \atop 0<k_i(v)\leq \delta_i(v) } \left[ \frac{k_i(v)}{\delta_i(v)+1}-\frac{k_{i}(v){-}1}{\delta_{i}(v)}\right] +\min \left(1, \frac{k_i(v_i)}{\delta_i(v_i){+}1}\right)\\ 
\end{eqnarray*}

In both cases we have
\begin{eqnarray*}
W(G[U_i]){-}W(G[U_{i+1}])&\geq& 
\sum\left[\frac{k_i(v)}{\delta_i(v)+1}-\frac{k_{i}(v)-1}{\delta_{i}(v)}\right]  
\geq 0= |S\cap \{v_{i}\}|,	
\end{eqnarray*}
where the summ is over all $v \in \Gamma(v_i)  \cap (U_{i+1}-L_{i+1})$ s.t. $0<k_i(v)\leq \delta_i(v)$.


\noindent \textbf{Case 2:} Suppose that Case 2 of the algorithm  holds; i.e. $k_i(v_i)\geq \delta_i(v_i)+1$. In this case we know that for each $v \in U_i$, $k_i(v)>0$ and we have that for each $u\in \Gamma(v_i) \cap U_i$, $k_{i+1}(u)=k_i(u)-1 $ and $\delta_{i+1}(u)= \delta_{i}(u)-1. $ Furthermore, by Fact \ref{fact2} $U_{i+1} - L_{i+1}= (U_i-L_i) - \{v_i\}.$ 
Then, 
\begin{eqnarray*}
\lefteqn{W(G[U_i])-W(G[U_{i+1}]) =}\\
&=&
\sum_{v \in (U_i-L_i)} \min\left(1,\frac{k_i(v)}{\delta_i(v)+1}\right) -\sum_{v \in (U_{i+1}-L_{i+1})} \min\left(1,\frac{k_{i+1}(v)}{\delta_{i+1}(v)+1}\right)\\
&=&
 \sum_{v \in (U_{i+1}-L_{i+1})} \left[
\min \left(1, \frac{k_i(v)}{\delta_i(v)+1}\right)- \min\left(1,\frac{k_{i+1}(v)}{\delta_{i+1}(v)+1}\right)\right] \\
& & + \min\left(1,\frac{k_{i}(v_i)}{\delta_{i}(v_i)+1}\right) \\
&=& \sum_{\myatop{v \in \Gamma(v_i)  \cap (U_{i+1}-L_{i+1})} {k_i(v)\leq \delta_i(v)}} \left[
\frac{k_i(v)}{\delta_i(v)+1}-\frac{k_{i}(v)-1}{\delta_{i}(v)}\right] + 1 
\geq 1 =|S\cap \{v_{i}\}|.
\end{eqnarray*}

\noindent \textbf{Case 3:} Suppose that Case 3  holds; i.e. $k_i(v_i)\leq \delta_i(v_i)$. 
We know that  
\begin{itemize}
\item [i)]  $0<k_i(v)\leq \delta_i(v)$ for each $v \in U_i$,
\item [ii)]$ \frac{k_i(v)}{\delta_i(v)(\delta_i(v)+1)} \leq 
\frac{k_i(v_i)}{\delta_i(v_i)(\delta_i(v_i)+1)}$, for each $v\in (U_i-L_i)$, and
\item [iii)] $S\cap \{v_{i}\}=\emptyset$. 
\end{itemize}
For each $u\in \Gamma(v_i) \cap U_i$, it holds 
$k_{i+1}(u)=k_i(u)$ and $\delta_{i+1}(u)= \delta_{i}(u)-1 $. 
Furthermore, by Fact \ref{fact2} $U_{i+1} - L_{i+1}= U_i-L_i - \{v_i\}.$ 
Hence, we get that the difference $W(G[U_i])-W(G[U_{i+1}])$ is equal to 
\begin{eqnarray*}
&& { \sum_{v \in (U_i-L_i)} \min\left(1,\frac{k_i(v)}{\delta_i(v)+1}\right)
-\sum_{v \in (U_{i+1}-L_{i+1})} \min\left(1,\frac{k_{i+1}(v)}{\delta_{i+1}(v)+1}\right)}\\
&& =\frac{k_{i}(v_i)}{\delta_{i}(v_i)+1}  + 
\sum_{ \myatop{v \in \Gamma(v_i)  \cap (U_{i+1}-L_{i+1})}{k_i(v)\leq \delta_i(v)} } \left[
    \frac{k_i(v)}{\delta_i(v)+1} -  \frac{k_{i}(v)}{\delta_{i}(v)}
\right]\\
&& =
\frac{k_{i}(v_i)}{\delta_{i}(v_i)+1} - 
\sum_{ \myatop{v \in \Gamma(v_i)  \cap (U_{i+1}-L_{i+1})}{k_i(v)\leq \delta_i(v)} } 
          \frac{k_i(v)}{\delta_i(v)(\delta_i(v)+1)}
\end{eqnarray*}
From which we get
$$W(G[U_i])-W(G[U_{i+1}])\geq  
          \frac{k_i(v_i)}{\delta_i(v_i)+1}-\frac{|\Gamma(v_i)  \cap (U_{i+1}-L_{i+1})| \times k_{i}(v_i)}{\delta_{i}(v_i)(\delta_{i}(v_i)+1)}.$$
Using Facts \ref{fact1} and \ref{fact2}, we have that $\delta_i(v_i)=|\Gamma(v_i)  \cap (U_{i+1}-L_{i+1})|$ and consequently
$$
W(G[U_i])-W(G[U_{i+1}]) \geq 0 =|S\cap \{v_{i}\}|.
$$\end{proof}

\section{Directed graphs}

\subsection{DAGs}
A directed acyclic graph (DAG), is a digraph with no directed cycles.
When the underlying graph $G = (V, E)$ is a DAG, the Minumum Target Set problem can be solved in
polynomial time. Indeed the optimal target set solution  consists of  the  nodes having threshold larger  than  the  incoming degree, e.g. $S^*=\{v\in V \mbox{ such that } t(v)>d^{in}(v)\}$. Since the graph is a DAG,  there is at least one node $v$ that has no incoming edges.
If the node $v$ has threshold $0$ then clearly $v \notin S^*$ for any  optimal solution $S^*$. 
Otherwise,   $t(v)>0$ and  clearly $v \in S^*$ for any optimal solution $S^*$. In both cases $v \in \Active[S^*,1]$ and its outgoing neighbors can use $v$'s influence.
Considering the nodes according to a topological ordering, once a node is considered we already know that all its incoming neighbors have been considered and will be influenced.  As a consequence, if $t(v_i)\leq d^{in}(v_i)$ then clearly $v_i \notin S^*$ for any  optimal solution $S^*$. Otherwise, if $t(v_i)> d^{in}(v_i)$ then clearly $v \in S^*$ for any optimal solution $S^*$.
\begin{theorem} \label{dags}
The algorithm  MTS($D$,$t$) returns an optimal solution   for any DAG $D$ and threshold function $t$.
\end{theorem}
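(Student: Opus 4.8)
The plan is to show that, on a DAG, the set $S$ returned by MTS coincides with the optimal set $S^{*}=\{v\in V : t(v)>d^{\iin}(v)\}$ identified just above the theorem. Since $S^{*}$ is optimal and, by Theorem \ref{teo1}, $S$ is always a target set, it in fact suffices to prove the single inclusion $S\subseteq S^{*}$, and then invoke minimality of $S^{*}$.

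The main obstacle, and the one genuinely structural step, is to prove that \emph{Case 3 never fires} when the input is a DAG, so that the limbo set $L$ stays empty throughout. I would argue by contradiction: suppose Case 3 is selected at some iteration $i$. Then $U_i-L_i\neq\emptyset$, and the induced subgraph $G[U_i-L_i]$, being a subgraph of a DAG, is itself acyclic; hence it has a source $s$, i.e.\ a node with no incoming neighbour inside $U_i-L_i$. By Fact \ref{fact1}, $\delta_i(s)=|\Gamma^{\iin}(s)\cap(U_i-L_i)|=0$. Because Case 3 is reached only when Case 1 fails, every node of $U_i$ (in particular $s$) has positive residual threshold, so $0=\delta_i(s)<k_i(s)$, meaning the Case 2 condition holds for $s\in U_i-L_i$ --- contradicting that Case 2 also failed. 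Therefore no node ever enters $L$; consequently every iteration removes exactly one node from $U$ (through Case 1 or Case 2), there are exactly $n$ iterations, and each node is selected precisely once (consistent with Lemma \ref{lemma1}).

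With $L\equiv\emptyset$ the bookkeeping becomes transparent. For a node $v$ still in $U$ at iteration $i$, each already-removed incoming neighbour has decremented both $\delta(v)$ and $k(v)$ by exactly one; writing $r_i(v)$ for the number of incoming neighbours of $v$ removed before iteration $i$, I would record the invariants $\delta_i(v)=d^{\iin}(v)-r_i(v)$ and $k_i(v)=\max(t(v)-r_i(v),0)$ (the flooring is harmless, since Case 2 only decrements thresholds that are already positive). Let $i_v$ be the unique iteration at which $v$ is selected. If $v\in S$ then Case 2 applied at $i_v$, so $k_{i_v}(v)>\delta_{i_v}(v)=d^{\iin}(v)-r_{i_v}(v)\geq 0$; the strict positivity forces $k_{i_v}(v)=t(v)-r_{i_v}(v)$, and the inequality reduces to $t(v)-r_{i_v}(v)>d^{\iin}(v)-r_{i_v}(v)$, i.e.\ $t(v)>d^{\iin}(v)$. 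This gives $S\subseteq S^{*}$.

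Finally I would close the argument: from $S\subseteq S^{*}$ we get $|S|\leq|S^{*}|$, while by Theorem \ref{teo1} the set $S$ is a target set for $G$ and $S^{*}$ is a minimum target set; hence $|S|=|S^{*}|$ and $S$ is optimal. (If one wants $S=S^{*}$ exactly, the reverse inclusion is symmetric: a node with $t(v)>d^{\iin}(v)$ cannot be removed by Case 1, since that needs $k_{i_v}(v)=0$, i.e.\ $t(v)\leq r_{i_v}(v)\leq d^{\iin}(v)$, a contradiction, so it must be added by Case 2.) I expect the only delicate point to be the ``no Case 3'' claim --- note that without it a limbo incoming neighbour would weaken the final inequality to $t(v)>d^{\iin}(v)-|\Gamma^{\iin}(v)\cap L_{i_v}|$ --- whereas everything after it is routine accounting.
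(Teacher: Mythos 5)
Your proof is correct and follows essentially the same route as the paper's: the crux in both is that Case 3 never fires on a DAG (the paper argues directly via a source of the residual graph, you by contradiction via a source of $G[U_i-L_i]$), after which $L$ stays empty, the bookkeeping is straightforward, and the Case-2 selections are exactly the nodes with $t(v)>d^{\iin}(v)$. The only cosmetic difference is that you close with the single inclusion $S\subseteq S^{*}$ plus a cardinality/minimality argument, while the paper establishes the if-and-only-if characterization outright; both are valid.
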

\proof
The key observation is that if the algorithm MTS is executed on a DAG $D=(V,E)$ then   Case $3$ never occurs. Indeed, since $D$ is a DAG, there is at least one node $s$ having no incoming neighbours.
A node $s$ that has no incoming neighbours is selected applying Case 1 or 2 depending whether its residual threshold  $k_i(s)=0$ or not. In both cases the node is removed from the graph and the remaining graph is still a DAG. As a consequence, the Case 3 never happens. Now since  Case 3 never occurs, the set $L$ will remain empty and consequently each time a node $v$ is selected, either by Case 1 or 2, the node is removed from $U$ and both the values $k(w)$ and $\delta(w)$ for each $w\in \Gamma^{\oout}(v)\cap U_i$ are decreased by one. Recalling that at the beginning $t(v)=k(v)$ and $\delta(v)=d^{in}(v)$,  for each $v \in V$, we have that Case 2 happens for a node $v$ if and only if at the beginning $t(v)>d^{in}(v).$  The proof is completed by observing that the target set identified by the algorithm MTS consists of the nodes selected by Case 2.

\subsection{Directed Trees.}

A directed tree is a directed graph which would be a tree if the directions on the edges were ignored, i.e. a polytree.

In the following we briefly provide a simple construction that shows how the MTS problem on directed tree can easily be reduced to an MTS problem on a forest of bidirectional trees. 
Consider an MTS problem on a directed tree $T=(V,E)$. Each time there is a directed edge $(u,v) \in E$ while $(v,u) \notin E$, we can split the tree in two components $T_1$ and $T_2$ which corresponds to the nodes reachable by $u$ (resp. $v$) using $E\setminus{(u,v)}$ (ignoring directions). In $T_2$ the threshold of $v$ is decreased by 
$1$, all the other thresholds remain unchanged. It is easy to see that $S$ is a target set for $T$ if and only if $S$ is a target set for $T_1$ and $T_2$. 
By recursively applying the above rule, we  remove from $T$ all the edges $(u,v) \in E$ such that $(v,u) \notin E$ and  end up with a forest of bidirectional trees $T_1,T_2, \ldots, T_r$.
\begin{corollary} \label{trees}
The algorithm  MTS($T$, $t$) can be used to obtain an optimal solution for any directed tree $T$.
\end{corollary}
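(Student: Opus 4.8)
The plan is to combine the reduction sketched just above with the already-established optimality of MTS on bidirectional trees, namely Theorem~\ref{th2}. Note that the corollary does not claim MTS is run \emph{directly} on the directed tree $T$; rather, one first preprocesses $T$ by repeatedly splitting off asymmetric arcs, producing a forest of bidirectional trees $T_1,\ldots,T_r$ on disjoint vertex sets, and then runs MTS on each component. Hence the two things I must verify are: (i) that a single split is correct, i.e. that target sets of $T$ decompose exactly into target sets of the two pieces; and (ii) that iterating the split terminates at a forest of bidirectional trees, on each of which Theorem~\ref{th2} yields an optimal solution, so that the union of the per-component outputs is optimal for $T$ because the vertex sets partition $V$.

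The heart of the argument, and the step I expect to be the main obstacle, is the correctness of one split. Fix an arc $(u,v)\in E$ with $(v,u)\notin E$, and let $T_1$ (containing $u$) and $T_2$ (containing $v$) be the two components obtained by deleting this arc from the underlying tree; in $T_2$ set the modified threshold $t'(v)=t(v)-1$ and keep all other thresholds. I would prove that $S$ is a target set for $T$ if and only if $S\cap V(T_1)$ is a target set for $T_1$ and $S\cap V(T_2)$ is a target set for $(T_2,t')$. The key structural fact is that $(u,v)$ is the unique arc joining the two sides and it is oriented \emph{into} $T_2$, so no influence ever crosses from $T_2$ back into $T_1$. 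Since every node of $V(T_1)$ therefore has the same incoming neighborhood in $T$ as in $T_1$, a straightforward induction on the round index $\ell$ shows that a node of $V(T_1)$ lies in $\Active_T[S,\ell]$ exactly when it lies in $\Active_{T_1}[S\cap V(T_1),\ell]$. Consequently the activation of $V(T_1)$ inside $T$ coincides with the activation in $T_1$ alone, and in particular $u$ eventually becomes active precisely when $S\cap V(T_1)$ is a target set for $T_1$.

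For the $T_2$ side I would treat an active $u$ as a permanent external $+1$ into $v$, which is exactly what lowering $v$'s threshold by one encodes. The only delicacy is that $u$ turns on at some positive round rather than at round $0$, and I would dispose of this by monotonicity of the linear-threshold dynamics. Running $(T_2,t')$ from seed $S\cap V(T_2)$ grants $v$ the benefit from the outset, so at every round its active set contains the $V(T_2)$-trace of the process in $T$; hence if $T$ fully activates, so does $(T_2,t')$. Conversely, once $S\cap V(T_1)$ activates all of $T_1$, so $u$ is on at some finite round, restarting the clock at that moment exhibits the subsequent evolution of $V(T_2)$ inside $T$ as a $(T_2,t')$-process started from a superset of $S\cap V(T_2)$; by monotonicity it too reaches all of $V(T_2)$. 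This establishes the claimed equivalence for a single split.

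Finally I would observe that each split strictly decreases the number of asymmetric arcs, since the removed arc disappears and no new arcs are created, so after finitely many applications every surviving component is a bidirectional tree. Iterating the single-split equivalence then shows that $S$ is a target set for $T$ if and only if $S\cap V(T_i)$ is a target set for each resulting $(T_i,t_i)$; as the sets $V(T_i)$ partition $V$ we have $|S|=\sum_i |S\cap V(T_i)|$, so a minimum target set for $T$ is the disjoint union of minimum target sets for the components. Because each $(T_i,t_i)$ is a bidirectional tree, Theorem~\ref{th2} guarantees that MTS returns an optimal solution on it, and taking the union of these outputs yields an optimal target set for $T$.
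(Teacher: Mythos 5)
Your proposal is correct and takes essentially the same route as the paper: the paper's own justification of Corollary~\ref{trees} is exactly the split-off reduction to a forest of bidirectional trees (deleting each asymmetric arc $(u,v)$ and decreasing $v$'s threshold by one) followed by an appeal to the optimality of MTS on trees (Theorem~\ref{th2}). You merely supply the details—the round-by-round induction for the $T_1$ side and the monotonicity argument for the $T_2$ side—behind the paper's ``it is easy to see'' claim.
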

\subsection{Directed Cycles.}
\begin{theorem}\label{teoC}
 The algorithm  MTS($C$, $t$) outputs an optimal solution   if  $C$ is  a directed cycle.
\end{theorem}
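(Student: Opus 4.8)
The plan is to compare the output of MTS($C$, $t$) against an explicit description of an optimal target set, exploiting the fact that in a directed cycle every node has in-degree exactly $1$. First I would record two structural consequences of $d^{in}(v)=1$. A node with $t(v)\ge 2$ has only one incoming neighbour, so it can never accumulate enough active predecessors to fire; hence it must belong to \emph{every} target set. Let $m$ be the number of such nodes. Conversely, a node with $t(v)\le 1$ fires as soon as (or before) its unique predecessor becomes active. I would then pin down the optimum: if $m\ge 1$, the set of all threshold-$\ge 2$ nodes is itself a target set (each one, active from round $0$, sweeps activation forward along the maximal run of threshold-$\le 1$ nodes up to the next threshold-$\ge 2$ node), so $|S^*|=m$; if $m=0$ and some node has threshold $0$, then $S^*=\emptyset$ suffices; and if all thresholds equal $1$, then no empty set works but any single node does, so $|S^*|=1$. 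These three values are the targets to match.

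The second ingredient is the observation that, because each node has a single incoming neighbour, its residual threshold can be decremented \emph{at most once} during the whole run (only lines~8 and~16 decrement $k$, and both are triggered by removal of $v$'s unique predecessor). Hence $k_i(v)\in\{t(v),\,t(v)-1\}$, and $\delta_i(v)\in\{0,1\}$ by Fact~\ref{fact1}. From this I would derive the crucial claim that a threshold-$\ge 2$ node $v$ always satisfies the Case~2 test $\delta_i(v)<k_i(v)$ while $v\in U_i-L_i$: indeed $k_i(v)\ge t(v)-1\ge 1$ and $\delta_i(v)\le 1$, and if $\delta_i(v)=1$ then Fact~\ref{fact1} forces the predecessor of $v$ to still lie in $U_i-L_i$, hence unremoved, so $k_i(v)=t(v)\ge 2$. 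Consequently, as long as any threshold-$\ge 2$ node survives, Case~2 is applicable, so Case~3 cannot fire and $L$ stays empty up to that point; such a node is moreover never removed by Case~1 (its $k$ stays $\ge 1$) and never chosen by Case~3, so it can leave $U$ only through Case~2, i.e. it is added to $S$. This already shows every threshold-$\ge 2$ node ends in $S$.

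It remains to show no other node enters $S$, and I would split on $m$. When $m\ge 1$ I would prove Case~3 never fires at all: at a first hypothetical Case~3 step one has $L=\emptyset$, so every surviving node must be threshold-$1$ with $\delta=1$, i.e.\ with its predecessor also surviving, forcing the surviving set to be a union of directed cycles; but after removing the (nonempty) collection of threshold-$\ge 2$ nodes the residual graph is a union of paths and contains no cycle, a contradiction. With $L$ permanently empty, a threshold-$\le 1$ node can never meet the Case~2 test $\delta<k$ (that needs $k=1,\ \delta=0$, but $\delta=0$ forces the predecessor removed and hence $k=0$), so it can only be removed by Case~1 and is never added to $S$; thus $|S|=m$. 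When $m=0$ the analysis is a short direct trace: if some threshold is $0$, Case~1 fires immediately and sweeps the entire cycle with $S=\emptyset$; if all thresholds equal $1$, round~1 is a Case~3 step placing one node in $L$, its successor then meets the Case~2 condition and is the unique node added to $S$, after which a Case~1 sweep empties $U$, giving $|S|=1$. In every branch $|S|$ equals the optimum identified in the first paragraph.

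The main obstacle I anticipate is the $m\ge 1$ sub-case: ruling out that, after all threshold-$\ge 2$ nodes have been consumed, a stray Case~3 step drops a threshold-$1$ node into $L$ and thereby manufactures a spurious Case~2 insertion. The cycle-versus-path argument above is exactly what closes this gap, and it is the only place where the global (rather than purely local) structure of the cycle is essential; the remaining bookkeeping is routine once the ``decremented at most once'' invariant is in hand.
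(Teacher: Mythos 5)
Your proposal is correct, but it follows a genuinely different route from the paper's. The paper argues only about the first iteration: whichever of Cases 1/2/3 fires for the first selected node $v_1$, its treatment is shown to be consistent with \emph{some} optimal solution (for Case 3 this is an exchange argument --- any optimal solution containing $v_1$ can swap $v_1$ for its parent), and after that iteration the residual graph is, ignoring arc directions, a path; optimality then follows by invoking the directed-tree result (Corollary \ref{trees}). You instead characterize the optimum explicitly --- every node with $t(v)\ge 2$ must lie in every target set because its in-degree is $1$, and these $m$ nodes (or $\emptyset$, or a single node when all thresholds equal $1$) form the whole optimum --- and then prove via run invariants (each residual threshold is decremented at most once, $\delta_i(v)\le 1$, and the predecessor-closure argument forcing a surviving set to wrap around the whole cycle) that MTS outputs exactly a set of that size. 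Your argument is self-contained, needs no appeal to the tree results, and is in fact more detailed than the paper's rather terse proof; what the paper's route buys is scope: the paper treats ``directed'' trees and cycles as digraphs whose \emph{underlying undirected} graph is a tree or cycle, so arcs may be bidirectional (its proof has an explicit ``otherwise the cycle is bidirectional'' branch), and its reduction to directed trees covers those cases, where in-degrees can equal $2$ and both your characterization of the optimum and your ``decremented at most once'' invariant break down. Under the standard reading of ``directed cycle'' (every node with in-degree and out-degree exactly $1$) your proof is complete; under the paper's broader reading it establishes only the in-degree-$1$ case.
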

\proof 
If  the first selected node $v_1$ has threshold    0  then clearly $v_1\not \in S^*$ for any optimal 
solution $S^*$.\\
If the threshold of  $v_1$ is larger than
 its incoming degree then clearly $v_1\in S^*$ for any optimal 
solution $S^*$.
In both cases  $v_1\in \Active[S^*,1]$ and its outgoing neighbors can use $v_1$'s influence; 
that is, the algorithm correctly sets $k_{1}=\max(k_1 -1,0)$ for the outgoing neighbours of $v_1$.\\
If threshold of each node $v\in V$ is $1\leq t(v)\leq d^{in}(v)$, we get that  during the first iteration of the algorithm  MTS($C$, $t$), the selected  node $v_1$ satisfies Case 3.
If there exist a node having incoming degree $1$, then the selected node will have both incoming degree and threshold equal to $1$. In this case there is always an optimal solution $S^*$ for $C$ such that $S^*\cap\{ v_1\}=\emptyset$. Indeed considering any optimal solution $\overline{S^*}$. If $v_1 \in \overline{S^*} $, then let $u$ be the parent of $v_1$; we have that $S^*=S-\{v_1\}\cup\{u\}$ is another optimal solution and $S^*\cap\{ v_1\}=\emptyset$. 
\\
Otherwise the cycle is bidirectional and  the selected node $v_1$ has  $t(v_1)=2$ if at least one of the nodes in $C$ has threshold $2$, otherwise  $t(v_1)=1$.
Moreover, it is not difficult  to see that there exists an optimal solution $S^*$ for $C$ such that $S^*\cap\{ v_1\}=\emptyset$. 
\\
In each case, the result   follows by Theorem \ref {trees}, since the remaining graph is a path (ignoring arc direction) on $U_2-L_2$.

\def\dn{D}
\def\udn{U}

\section{Experimental results}  \label {experiments}
We have experimentally evaluated our algorithm MTS on real-world data sets and found that it performs surprisingly well.
We conducted tests on several real networks of various sizes from the Stanford Large Network Data set Collection (SNAP) \cite{snap}, the Social Computing Data
Repository at Arizona State University \cite{ZL09}, and  
the Newman's Network data  \cite{N15}. The data sets we considered include both
networks for which a small target set  exists and networks needing a large target set, due to a community structure that appears to block the activation process (see Section \ref{sec:cor}). 

\medskip

\noindent{\bf Test Networks.} The main characteristics of the studied networks, namely being directed/undirected, number of nodes, number of edges, max degree, size of the largest connected component, clustering coefficient and modularity, are shown in Table \ref{net}.
\begin{table*}[th!]
\begin{center}
\resizebox{1.0\linewidth}{!} {
\begin{tabular}{|l|r|r|r|r|r|r|r|r|}
\hline
Name 											& Type 				& \# of nodes & \# of edges  & Max    & Size of  & Clust.  & Modularity \\ 
									&  				& & &  degree   &  the LCC  &  Coeff. &  \\ \hline
Amazon0302 \cite{snap}    & \dn 				& 262111   	&  1234877  		 &  420         &     262111      &  0.4198          &      0.6697      \\ \hline
BlogCatalog \cite{ZL09}   & \udn			  &   88784   &  4186390    	 &  9444          &     88784     &  0.4578          &      0.3182	\\ \hline
BlogCatalog2 \cite{ZL09}  & \udn 				&   97884   & 2043701 		  &    27849        &      97884     &  0.6857          &      0.3282       \\ \hline
BlogCatalog3 \cite{ZL09}  & \udn 			  &   10312   & 333983				 &    3992         &      10312      &  0.4756          &      0.2374      \\ \hline
BuzzNet \cite{ZL09} 			& \udn				&   101168  & 4284534 			 &    64289       &      101163     &  0.2508          &      0.3161     \\ \hline
Ca-AstroPh \cite{snap} 		& \udn				&   18772   & 198110 				 &    504   	     &      17903      &  0.6768          &      0.3072      \\ \hline
Ca-CondMath \cite{snap} 	& \udn				&   23133   & 93497 				 &    279   	     &      21363       &  0.7058          &      0.5809      \\ \hline
Ca-GrQc \cite{snap} 			& \udn				&   5242    & 14496 				 &    81          &      4158         &  0.6865          &      0.7433     \\ \hline
Ca-HepPh \cite{snap} 			& \udn				&   10008   & 118521 				 &    491         &      11204      &  0.6115          &      0.5085     \\ \hline
Ca-HepTh \cite{snap}			& \udn				&   9877    & 25998 				 &    65    	      &      8638        &  0.5994          &      0.6128     \\ \hline
Cit-HepTh \cite{snap} 		& \dn 	 		  &  27770    & 352807 			  &    64         &     24700       &  0.3120          &      0.7203      \\ \hline
Delicious		\cite{snap}		&	\dn         &  103144		& 1419519    		 &   3216		       &     536108	     &  0.0731          &      0.602     \\ \hline
Douban \cite{ZL09}				& \udn				&   154907  & 327162 				 &    287           &     154908       &  0.048           &      0.5773     \\ \hline
Facebook \cite{snap} 			& \udn				&  4039     & 88234 				 &    1045          &      4039        &  0.6055          &      0.8093     \\ \hline
Flikr \cite{ZL09}		 			& \udn				& 80513    & 5899822   	    &    5706         &      80513      &  0.1652          &      0.1652        \\ \hline
Higgs-twitter	\cite{snap}	& \dn     	  & 456626    & 14855842   	  &    51386   	       &      456290       &  0.1887          &      0.5046        \\ \hline
Last.fm \cite{ZL09}	 			& \udn				& 1191812   & 5115300   		 &    5140		       &      1191805    &      0.1378        &     0.1378      \\ \hline
Livemocha \cite{ZL09}			& \udn				& 104438    & 2196188   		 &    2980		       &      104103       &  0.0582          &      0.36     \\ \hline
Power grid     \cite{N15} & \udn			  & 4941		 	&6594			   		 &    19  		      &      4941          &  0.1065          &      0.6105     \\ \hline
Youtube2 \cite{ZL09}			& \udn				& 1138499 	&2990443	   		&    28754         &      1134890     &  0.1723          &     0.6506   \\ \hline
\end{tabular}
}
\end{center}
\caption{Networks. (\dn\ = Directed, \udn\ = Undirected) 	\label{net} }
\end{table*}

\noindent{\bf The competing algorithms.}
 {Several heuristics devoted to  compute small size target sets have been proposed in the literature;
they are typically classified in: \textit{additive} algorithms \cite{CWW-10,CWY09,KKT-03} and 
\textit{subtractive} algorithms \cite{CGM+,SEP,Kundu2015107} (depending on whether they focus on the addition of nodes to the target set or removal of nodes from the network).
Additive algorithms typically follow a greedy strategy which adds iteratively a node to a set $S$ until $S$ becomes a target set. Among them we compare our algorithm to  an (enhanced) 
{\em Greedy} strategy, in which nodes of maximum degree are iteratively inserted in the set $S$
and pruned from the graph. 
Nodes that remains with zero threshold are simply eliminated from the graph, until no node remains.
Subtractive algorithms, on the other hand, continuously prune the graph, according to a specific rule. The target set is determined, in this case, by the remaining nodes or by nodes that, during the pruning stage, cannot be influenced by the remaining nodes.
Among subtractive algorithms we evaluated two algorithms: {\em TIP\_DECOMP} recently presented in
\cite{SEP}, in which nodes minimizing the difference between  degree and threshold  are pruned from the graph 
until a ``core''  set is produced; {\em TSS} \cite{CGM+} which is a preliminary version of the algorithm MTS presented in this paper.
 TSS,  TIP\_DECOMP and  Greedy represent the state of the art strategies for the Minimum Target Set problem.


\noindent{\bf Thresholds values.} We tested the algorithms using three categories of  threshold function:
\begin{itemize}
\item \textit{Random thresholds} where  for each node $v$ the threshold $t(v)$ is chosen uniformly at random in the interval $[1,d(v)]$;
\item \textit{Constant thresholds} where the thresholds, according to the scenario considered in \cite{SEP}, are constant among all nodes. Formally,  for each node $v$ the threshold $t(v)$ is set as $min(t, d(v))$ where $t = 2,3,\ldots, 10$  (nine configurations overall);
\item \textit{Proportional thresholds} where for each node $v$ the threshold $t(v)$ is set as $\alpha \times d(v)$ with $\alpha=0.1,0.2, \ldots, 0.9$ (nine configurations overall). Notice that for $\alpha=0.5$ we are considering a particular version of the activation process named ``majority'' \cite{FKRRS-2003}. {It is worth to
mention that when $\alpha$ is either quite close to $0$ or $1$,  the Minimum Target Set problem is much easier to solve. Indeed, for very small values of $\alpha$, a random small set of nodes is likely able to activate all the nodes, while when $\alpha$ is large, the  target set must necessarily contain almost all of the nodes in $V$. On the other hand, for intermediate value of $\alpha$, the algorithm must necessarily operate many choices and consequently the differences in performance between different algorithms are larger.}
\end{itemize}

Summarizing our experiments compare the size of the target set generated by  $4$ algorithms (MTS, TSS, TIP\_DECOMP, Greedy) on $20$ networks (see Table \ref{net}), fixing the thresholds in $19$ different ways (Randomly, Steadily with $t=2,3,\ldots,10$ and Proportionally with $\alpha=0.1,0.2, \ldots, 0.9$). Overall we performed $4 \times 20 \times 19=1,520$ tests. Since the random thresholds test settings involve some randomization, we executed each test
$10$ times. The results were compared using means of target set sizes (the observed variance was negligible).

\subsection{ Results}

\paragraph{Random Thresholds.} Table \ref{randomtest} depicts the results of the Random threshold test setting. Each number represents the average size of the target set  generated by each algorithm on each network using random thresholds (for each test, first the random thresholds have been generated and then the same thresholds values have been used for all the algorithms). The value in bracket represents the overhead percentage compared to the MTS algorithm.
Results shows that the MTS algorithm always outperforms its competitors. {The improvement depends on some structural characteristics of the network. A detailed discussion of the MTS algorithm performances on different networks will be presented in section \ref{sec:cor}.}

\begin{table*}[th!]
\begin{center}
\resizebox{0.95\linewidth}{!} {
\begin{tabular}{|l|r|r|r|r|}
\hline
Name 		& MTS &	TSS	& Greedy  &  TIP\_DECOMP\\ \hline
Amazon0302 \cite{snap}    & 14246 	& 17312 (122\%)	& 84139 (591\%) & 23657 (166\%)\\ \hline
BlogCatalog \cite{ZL09}   & 157			&	222 (141\%)		&  4507 (2871\%) & 894 (569\%)\\ \hline
BlogCatalog2 \cite{ZL09}  &	33			&	60 (182\%)		&  1842 (5582\%) & 523 (1585\%)  \\ \hline
BlogCatalog3 \cite{ZL09}  &	6				&	10 (167\%)		&    10 (167\%)  & 44 (733\%)   \\ \hline
BuzzNet \cite{ZL09} 			&	154			&	277 (180\%)		&  4742 (3079\%) & 712 (462\%)  \\ \hline
Ca-AstroPh \cite{snap} 		&	845			&	978 (116\%)		&  4555 (539\%) & 1236 (146\%)  \\ \hline
Ca-CondMath \cite{snap} 	&	1657		&	1829 (110\%)	&  5584 (337\%) & 2488 (150\%)  \\ \hline
Ca-GrQc \cite{snap} 			&	638			&	659 (103\%)		&  1408 (221\%) & 811 (127\%)  \\ \hline
Ca-HepPh \cite{snap} 			&	808 		&	878 (109\%)		&  2926 (362\%) & 1060 (131\%)  \\ \hline
Ca-HepTh \cite{snap}			&	869			&	935 (108\%)		&  2446 (281\%) & 1236 (142\%)  \\ \hline
Cit-HepTh \cite{snap} 		&	2443		&	2510 (103\%)	&  4257 (174\%) & 2960 (121\%)  \\ \hline
Delicious		\cite{snap}		&	10615		&	10882 (103\%)	& 51843 (488\%) & 38493 (363\%)  \\ \hline
Douban \cite{ZL09}				&	2405		&	2407 (100\%)	&  6868 (286\%) & 12365 (514\%)  \\ \hline
Facebook \cite{snap} 			&	165			&	189 (115\%)		&  1200 (727\%) & 169 (102\%)  \\ \hline
Flikr \cite{ZL09}		 			&	499			&	785 (157\%)		& 13104 (2626\%) & 582 (117\%)  \\ \hline
Higgs-twitter	\cite{snap}	&	935			&	1575 (168\%)	& 55532 (5938\%) & 2928 (313\%)  \\ \hline
Last.fm \cite{ZL09}	 			&	8583		&	8671 (101\%)	& 54125 (631\%) & 42852 (499\%)  \\ \hline
Livemocha \cite{ZL09}			&	213			&	424 (199\%)		& 12568 (5900\%) & 529 (248\%)  \\ \hline
Power grid     \cite{N15} &	307			&	321 (105\%)		&  1337 (436\%) & 516 (168\%)  \\ \hline
Youtube2 \cite{ZL09}			&	34790		& 34935 (101\%)	&142065 (408\%) & 89596 (258\%)  \\ \hline
\end{tabular} 
}
\end{center}
\caption{Random Thresholds Results: For each network and each algorithm, the average size of the target set is depicted. 
	\label{randomtest} }
\end{table*}

\paragraph{Constant and Proportional thresholds.}

Figures \ref{fig:amazon0302}-\ref{fig:Douban2} depict the results of Constant and Proportional thresholds settings. For each network the results are reported in two separated charts: 
\begin{itemize}
	\item \textbf{Proportional thresholds} (left-side), each plot depicts the size of the target set (Y-axis), for each value of $\alpha = 0.1,0.2,\ldots, 0.9$ (X-axis) and for each algorithm (series);
\item \textbf{Constant thresholds} (right-side), each plot depicts the size of the target set (Y-axis), for each value of $t = 2,3,\ldots, 10$ (X-axis) and for each algorithm (series);
\end{itemize}

\begin{figure}[th!]
\begin{center}
		\includegraphics[width= 0.49\linewidth]{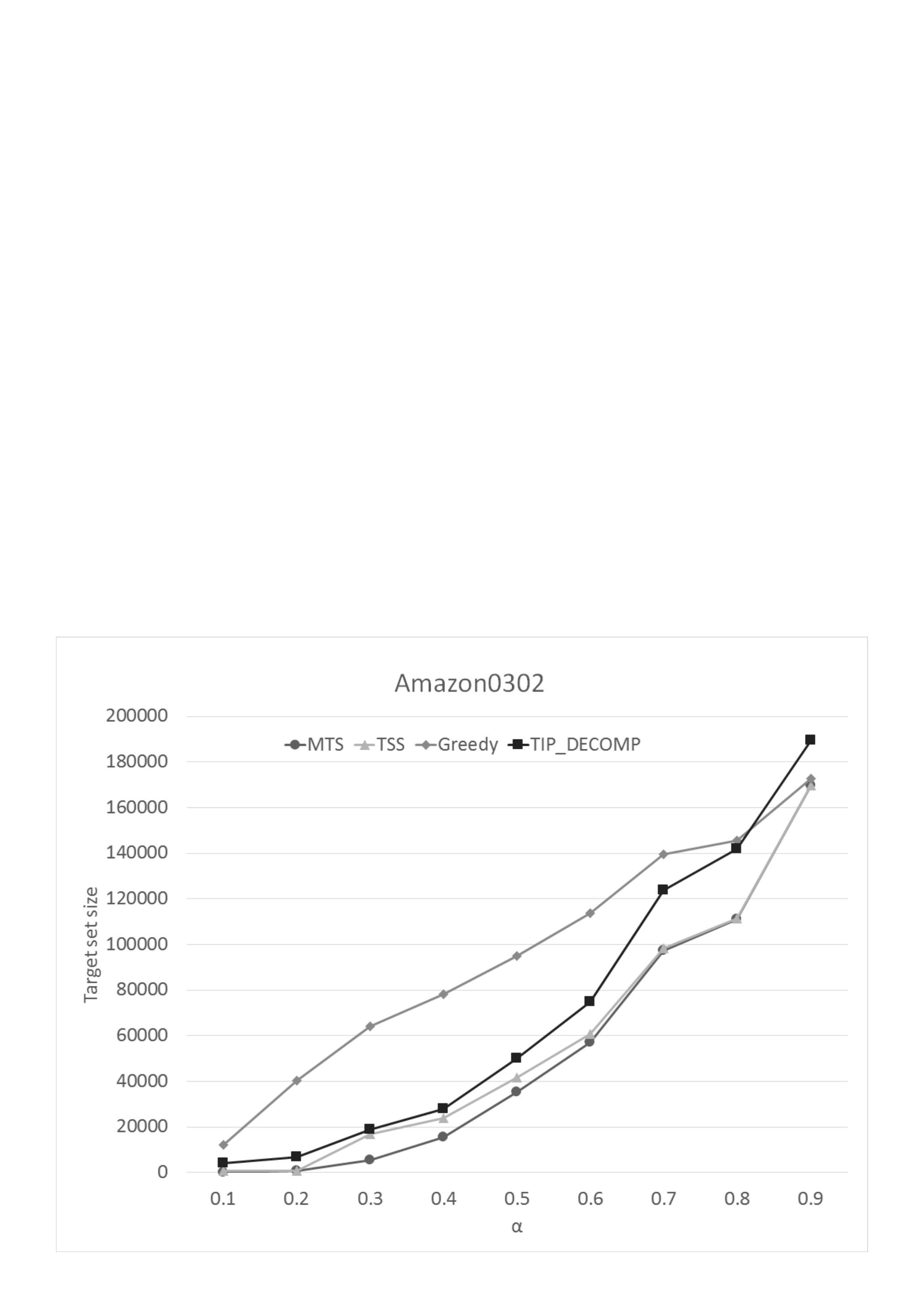}
		\includegraphics[width= 0.49\linewidth]{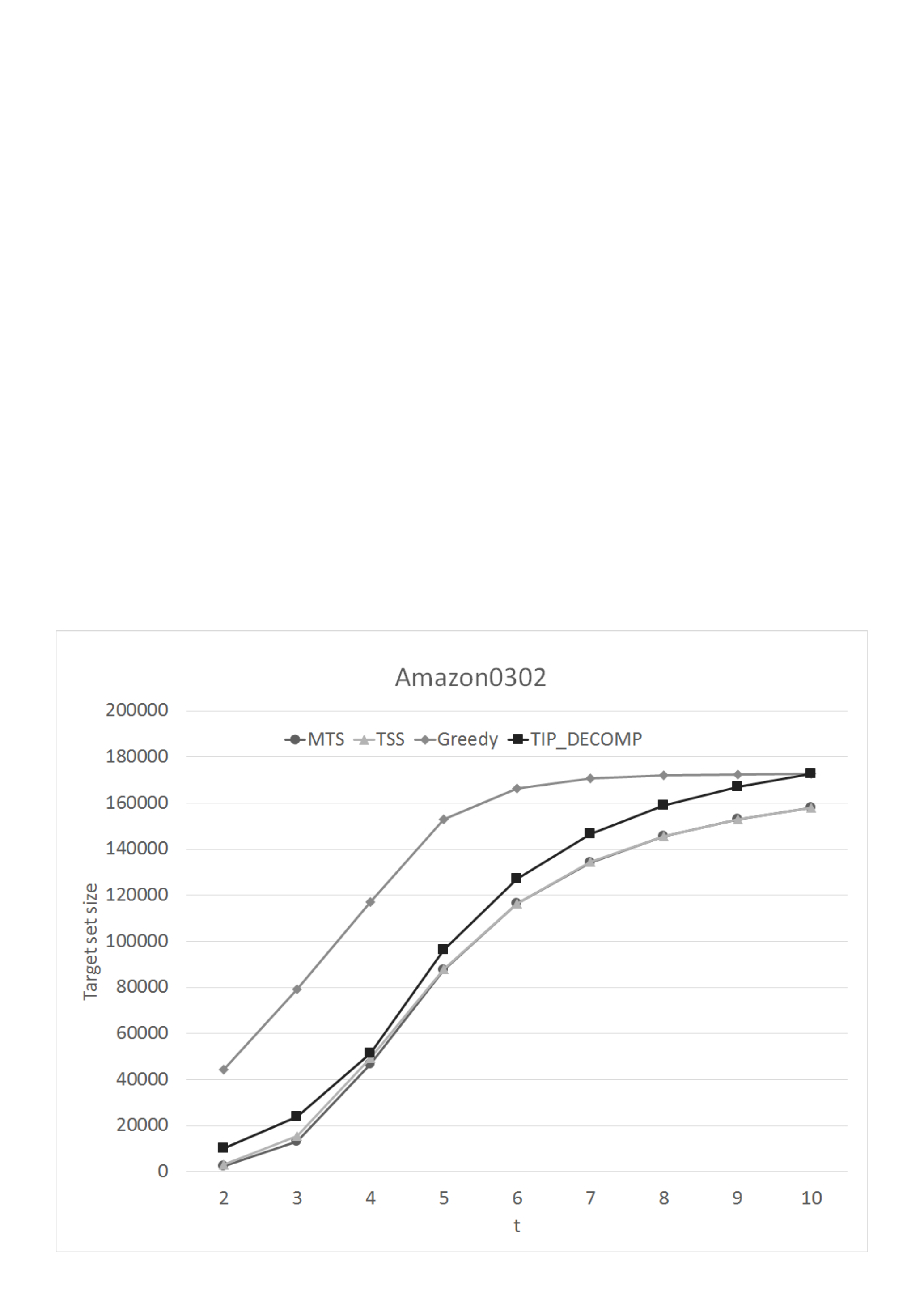}
		\includegraphics[width= 0.49\linewidth]{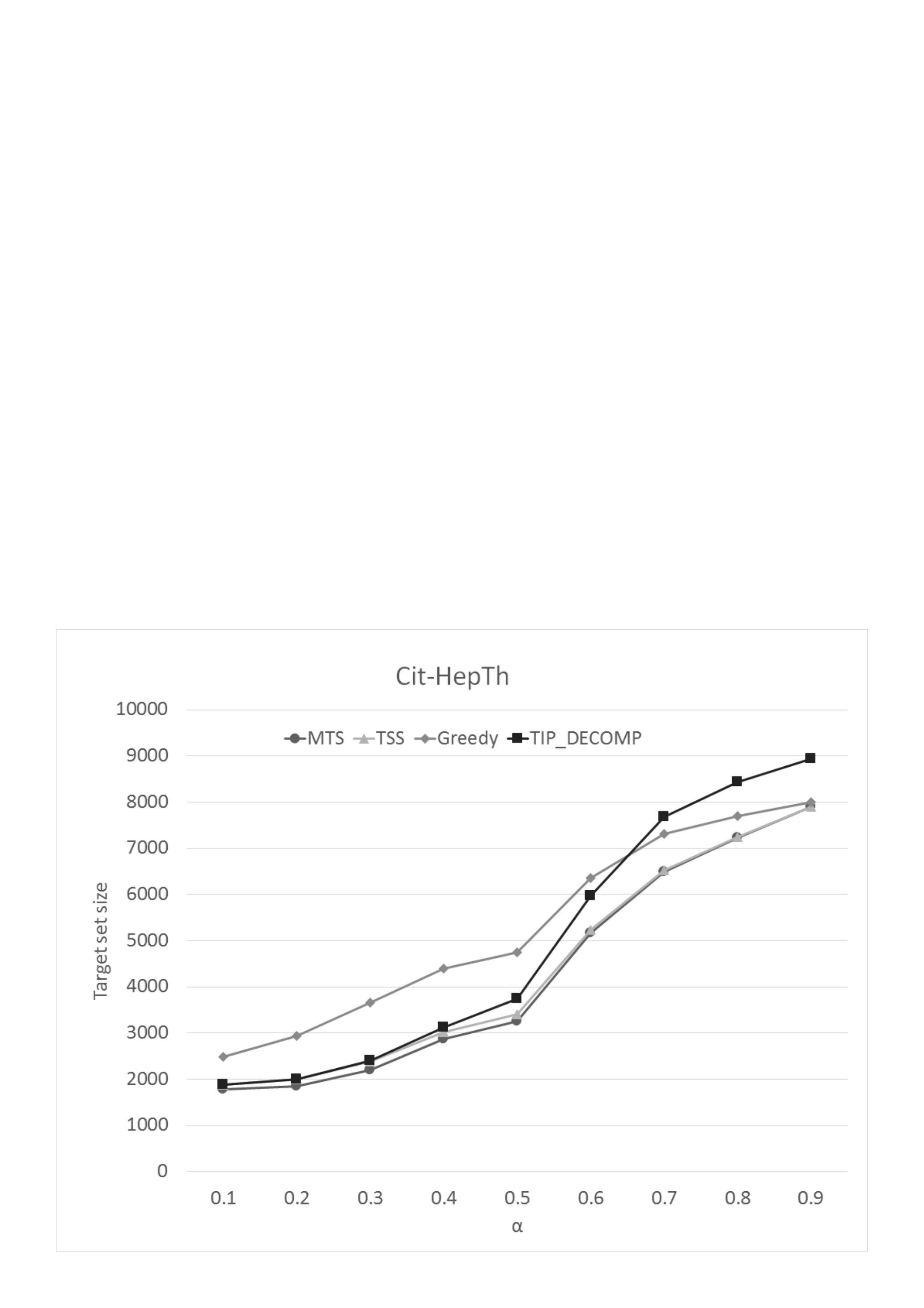}
		\includegraphics[width= 0.49\linewidth]{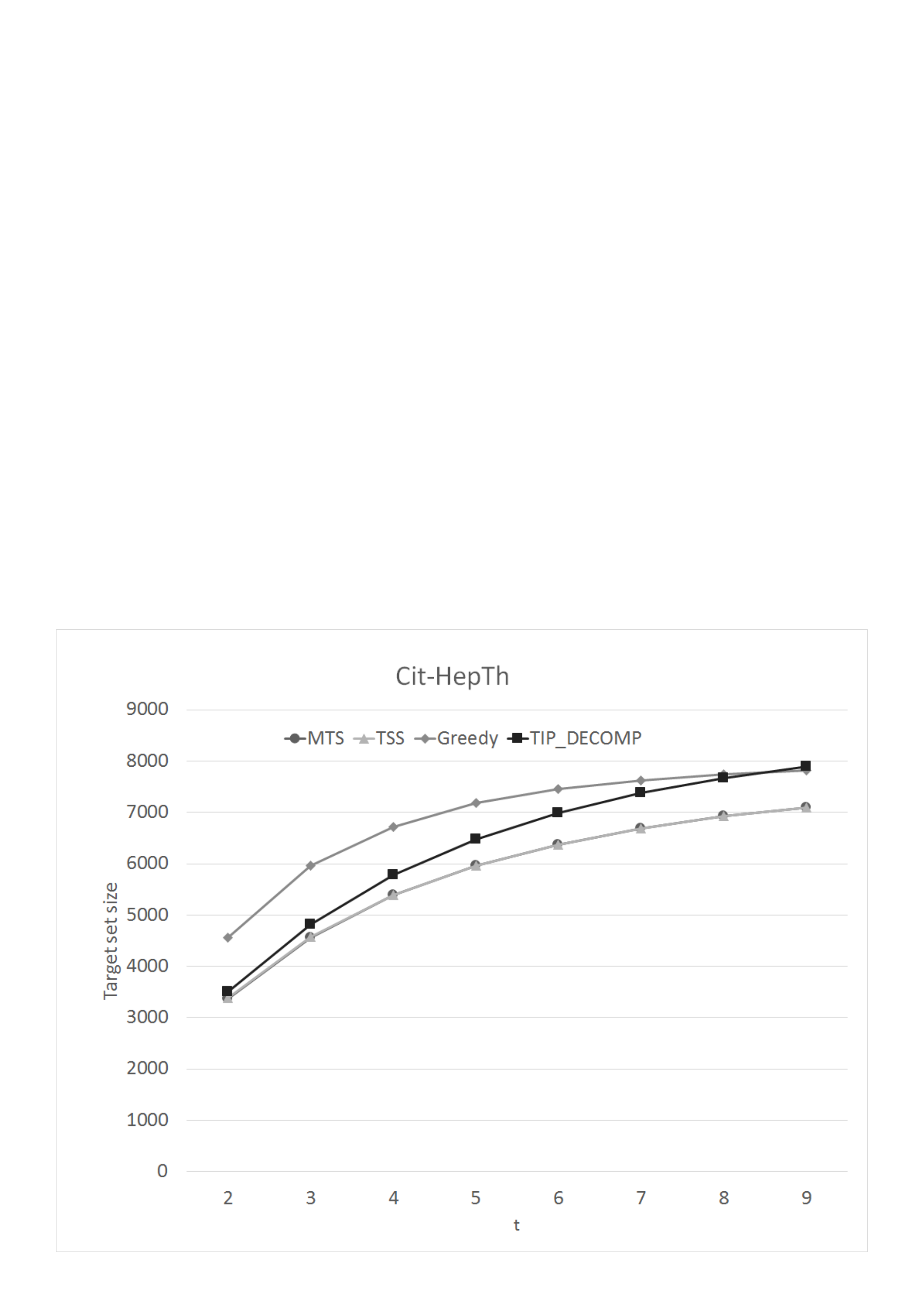}
		\caption{Constant and  Proportional Thresholds Results on Directed networks (Amazon0302, Cit-Hep-th): 
		For each network the results are reported in two separated charts: Proportional thresholds (left-side) and 
Constant thresholds (right-side).	\label{fig:amazon0302}}
	\end{center}
\end{figure}
\begin{figure}[th!]
\begin{center}
		\includegraphics[width= 0.49\linewidth]{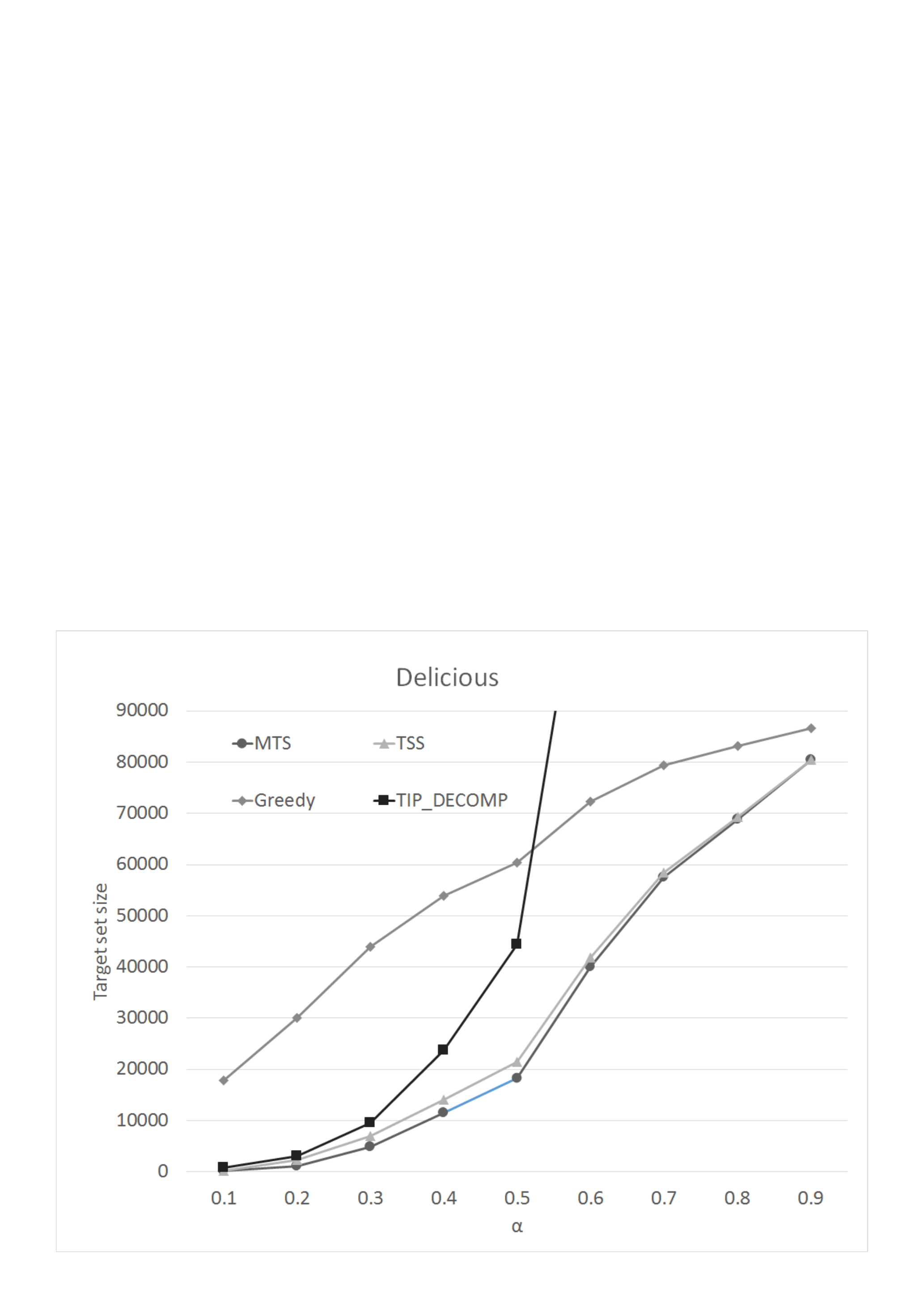}
		\includegraphics[width= 0.49\linewidth]{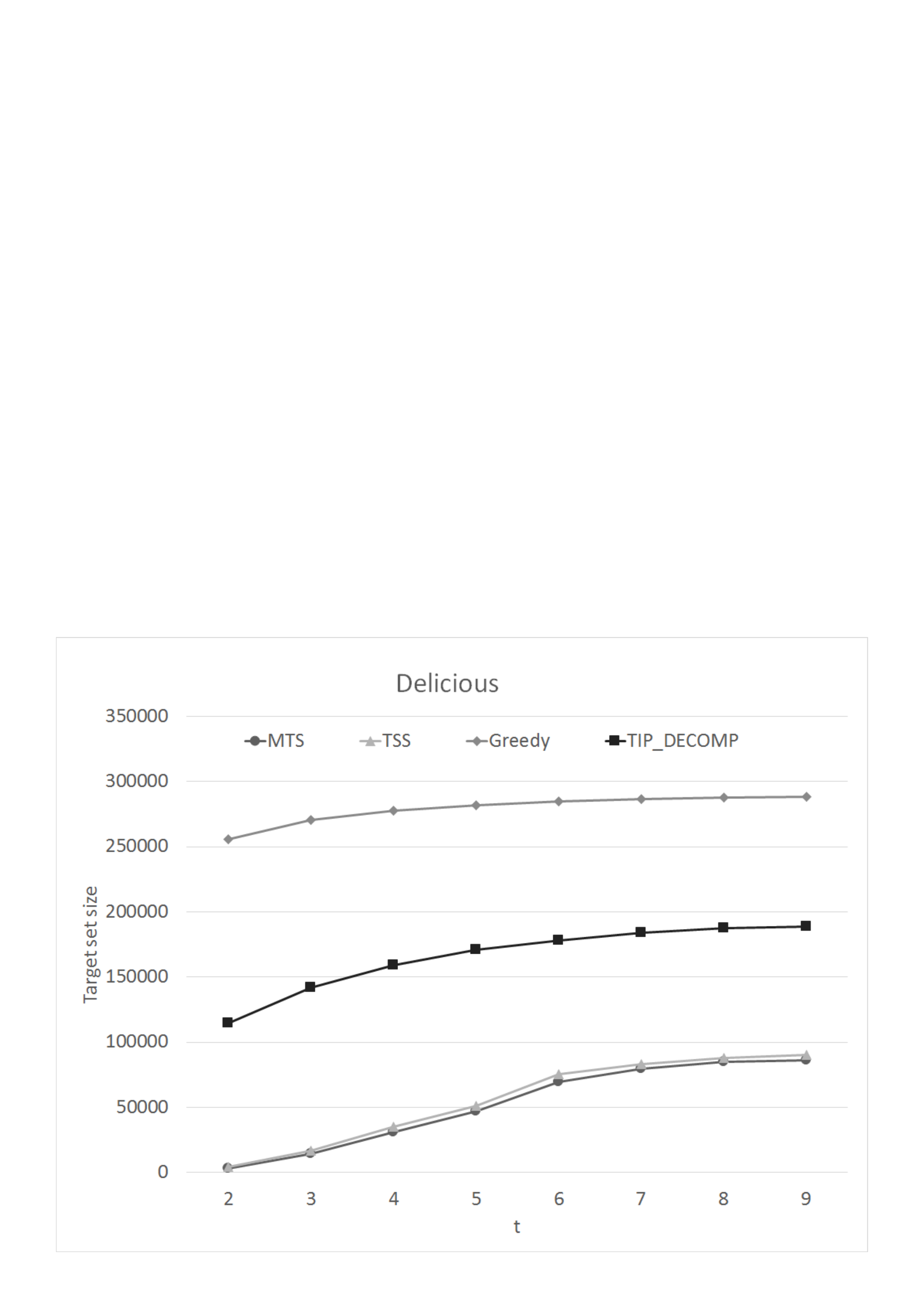}
		\includegraphics[width= 0.49\linewidth]{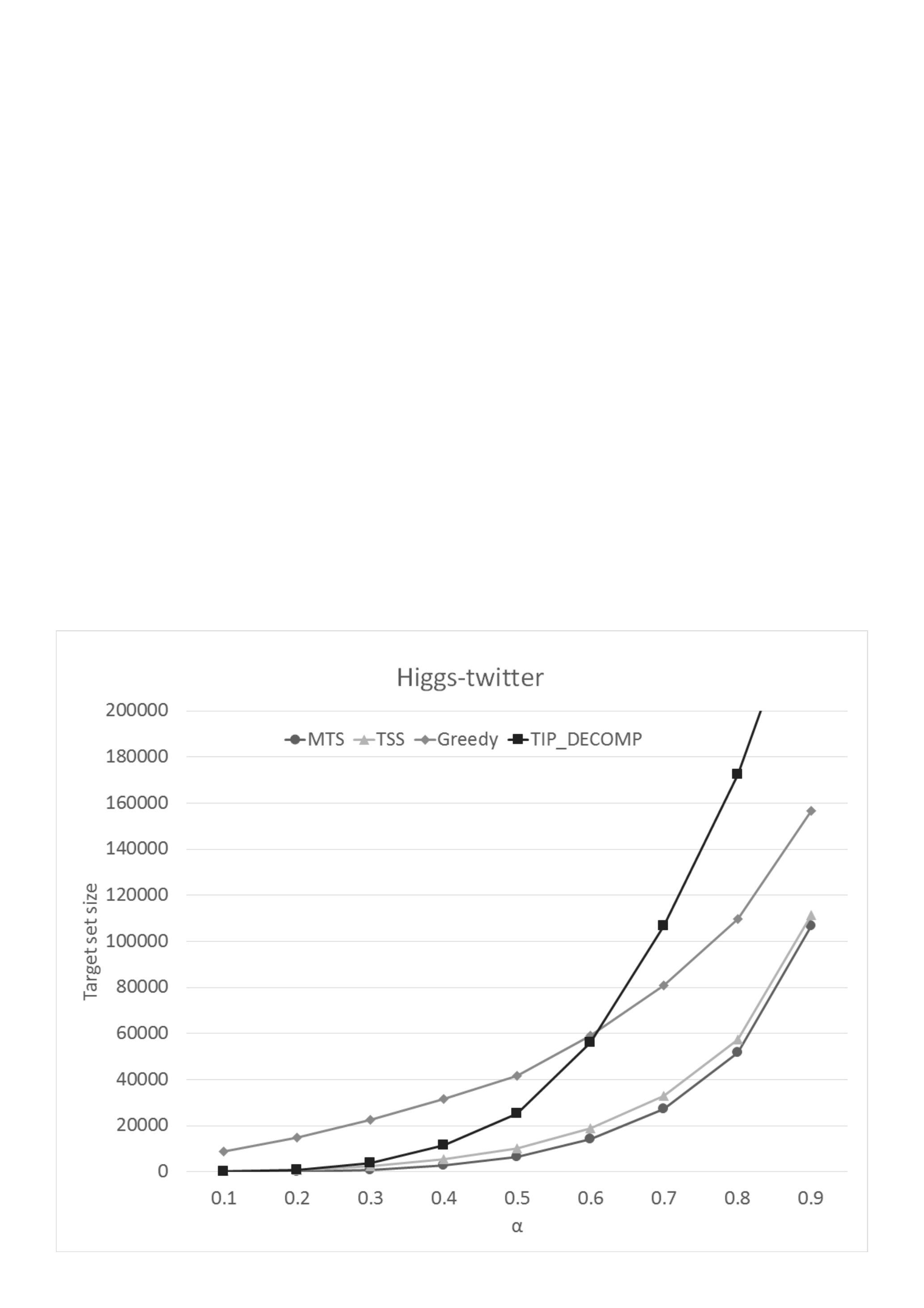}
		\includegraphics[width= 0.49\linewidth]{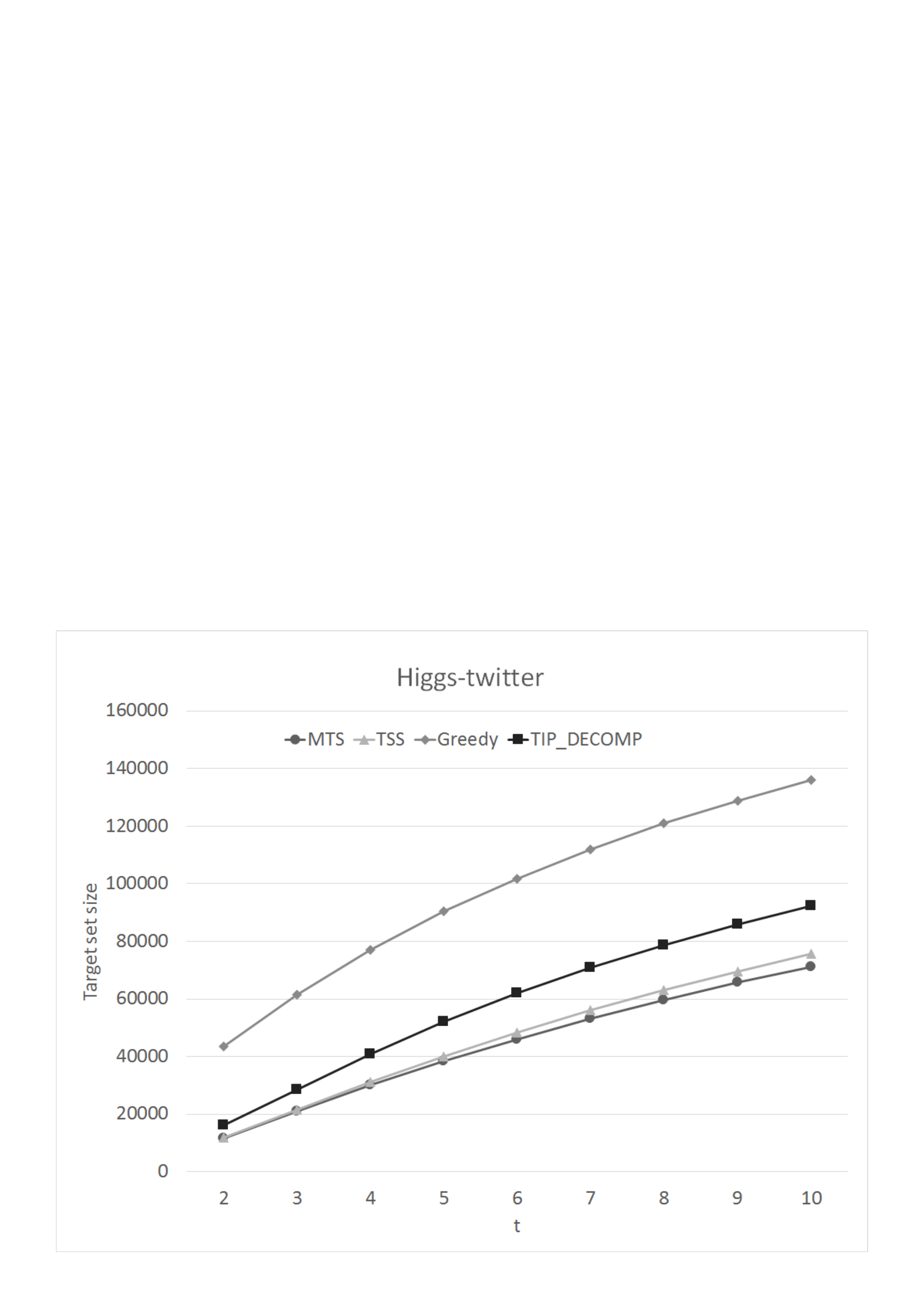}
		\caption{Constant and Proportional Thresholds Results on Directed networks  (Delicious and Higgs-twitter): For each network the results are reported in two separated charts: Proportional thresholds (left-side) and 
Constant thresholds (right-side).	\label{fig:amazon03022}}
	\end{center}
\end{figure}

\begin{figure}[th!]
\begin{center}
		\includegraphics[width= 0.49\linewidth]{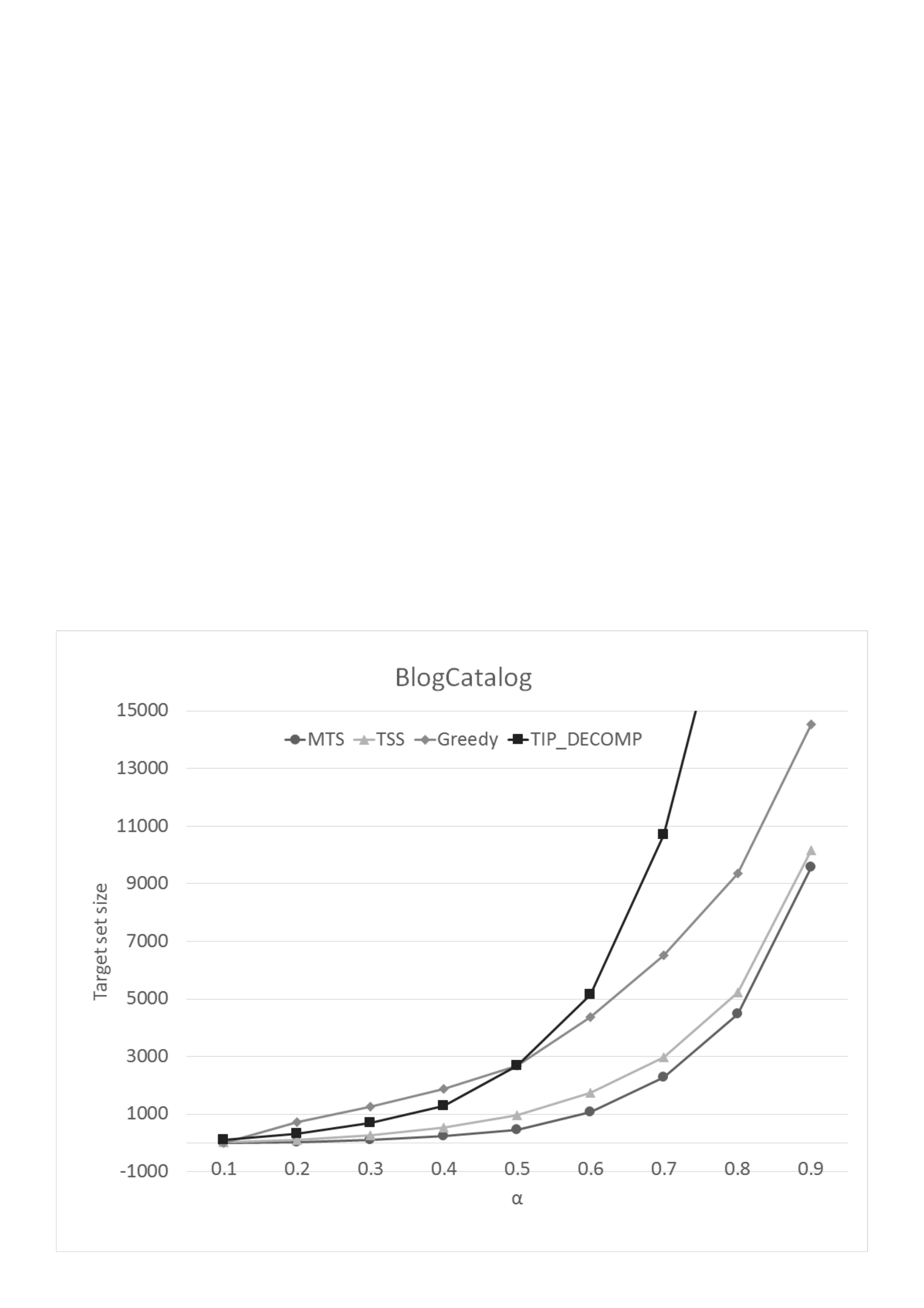}
		\includegraphics[width= 0.49\linewidth]{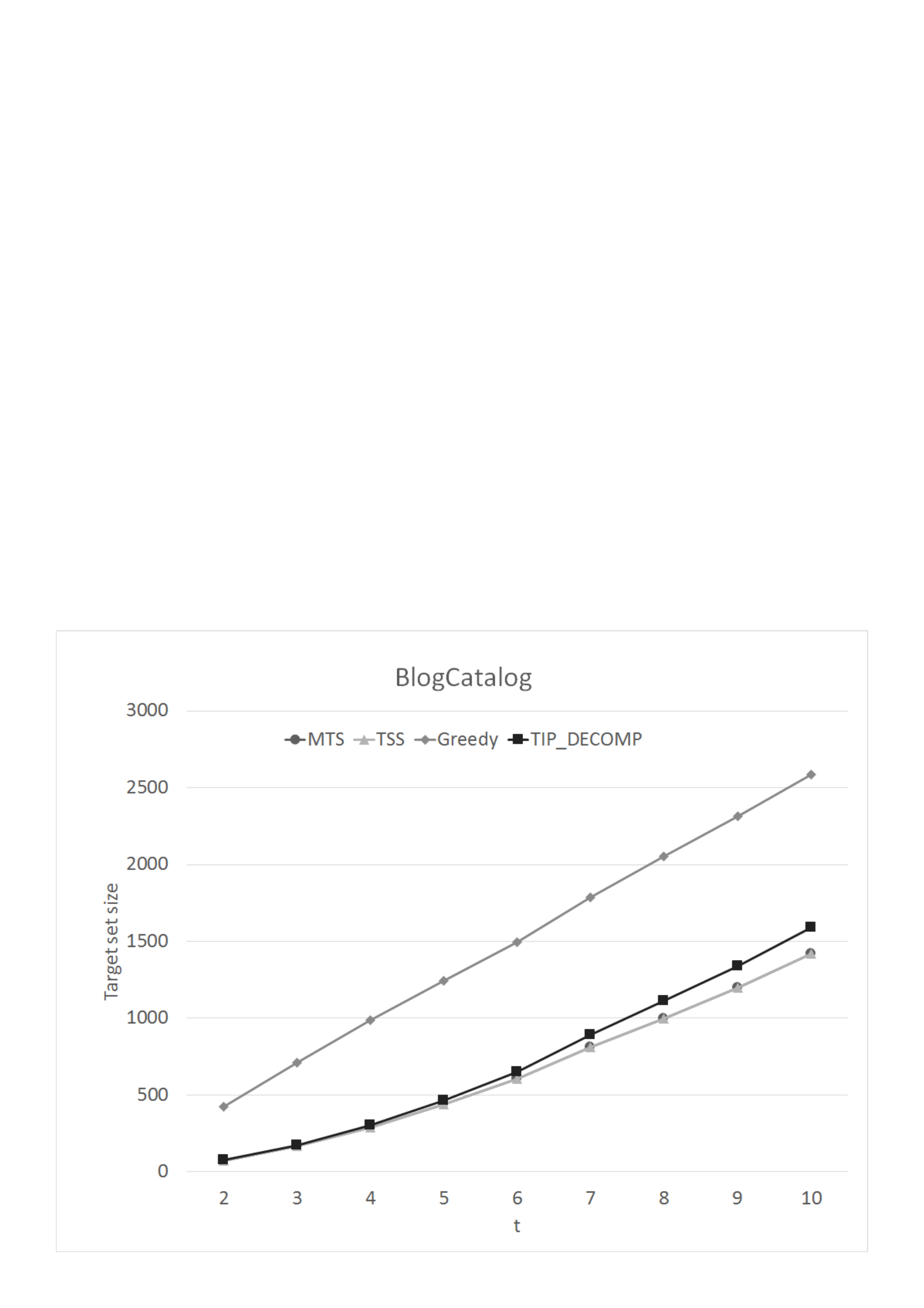}
		\includegraphics[width= 0.49\linewidth]{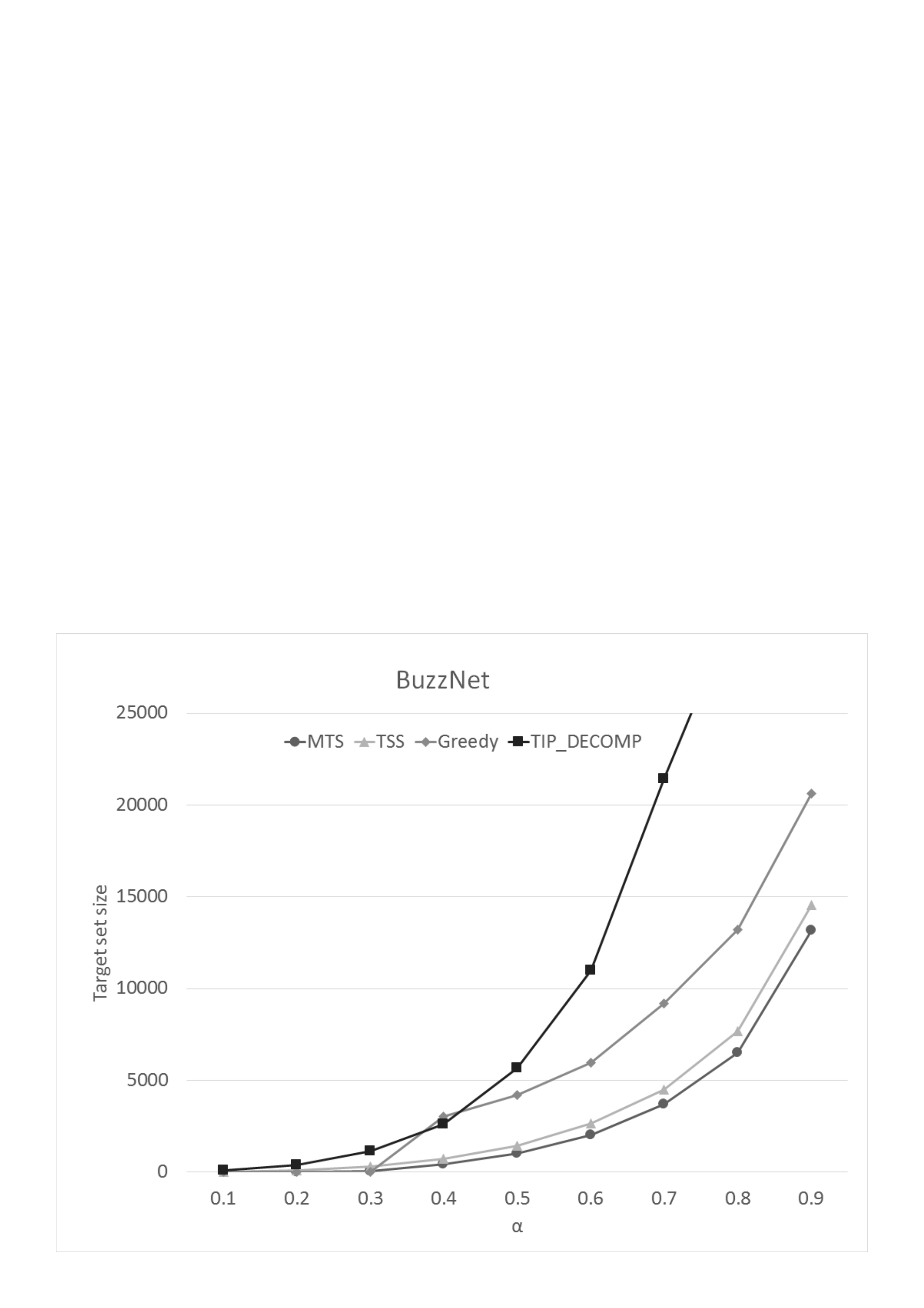}
		\includegraphics[width= 0.49\linewidth]{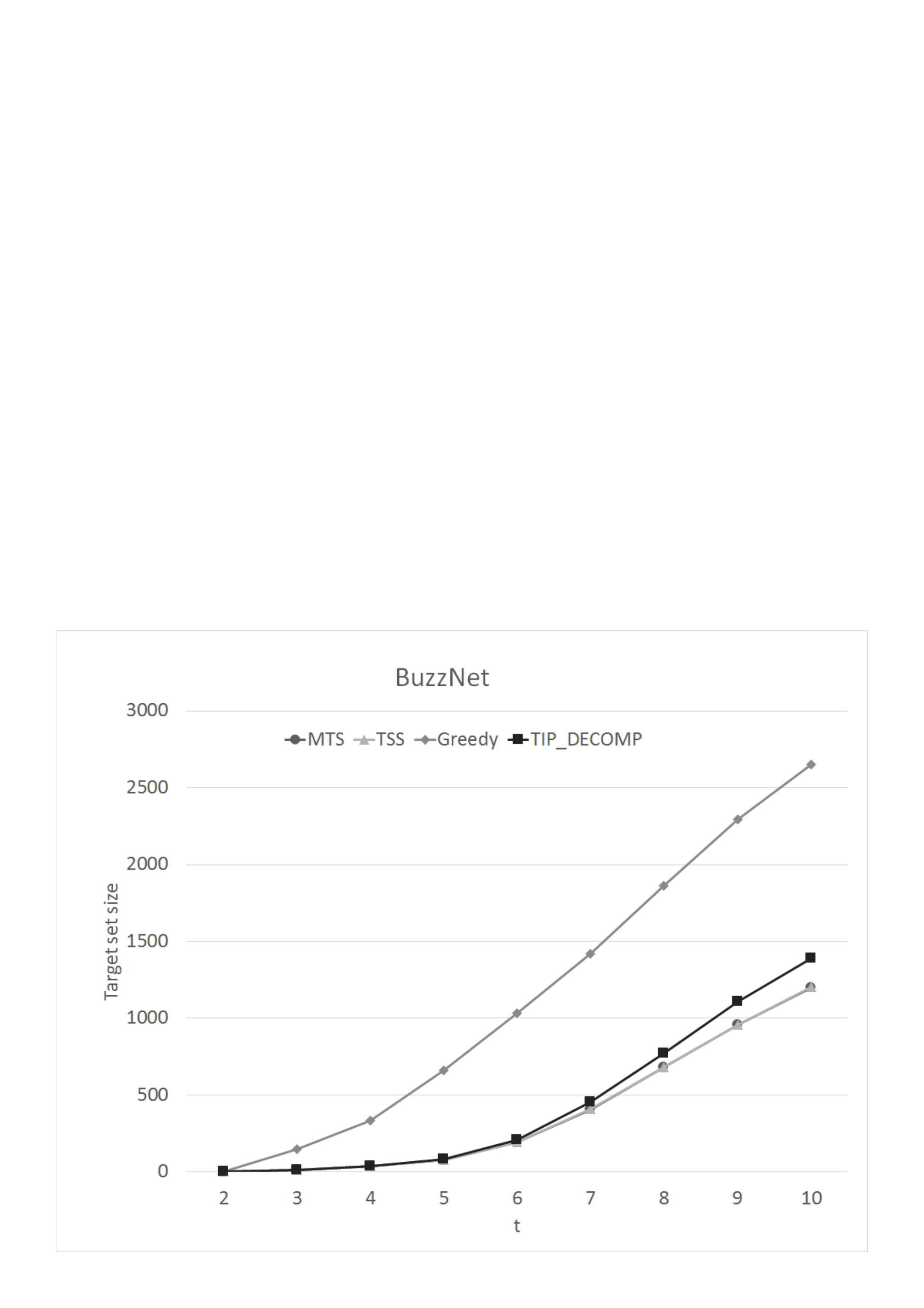}
		\includegraphics[width= 0.49\linewidth]{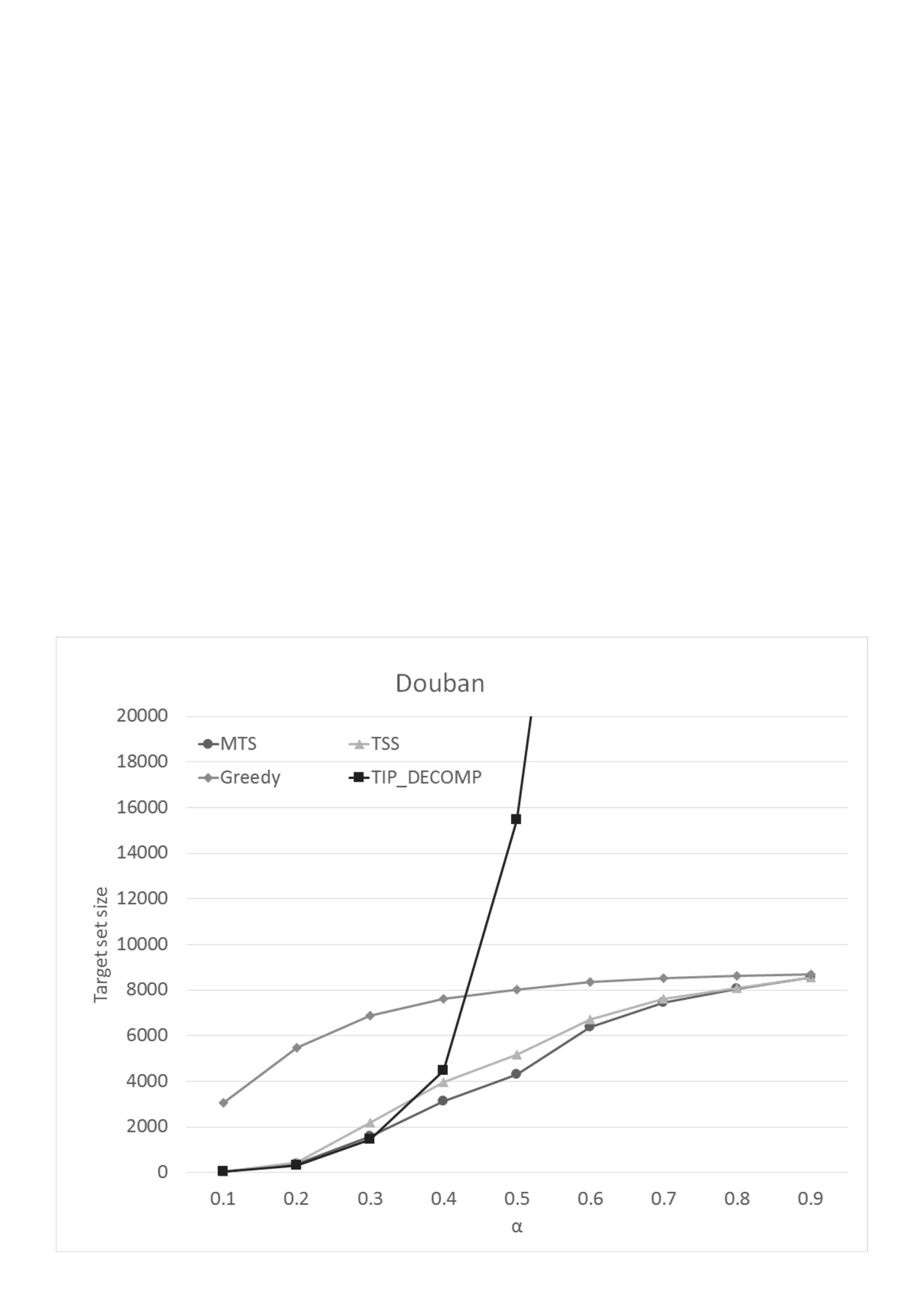}
		\includegraphics[width= 0.49\linewidth]{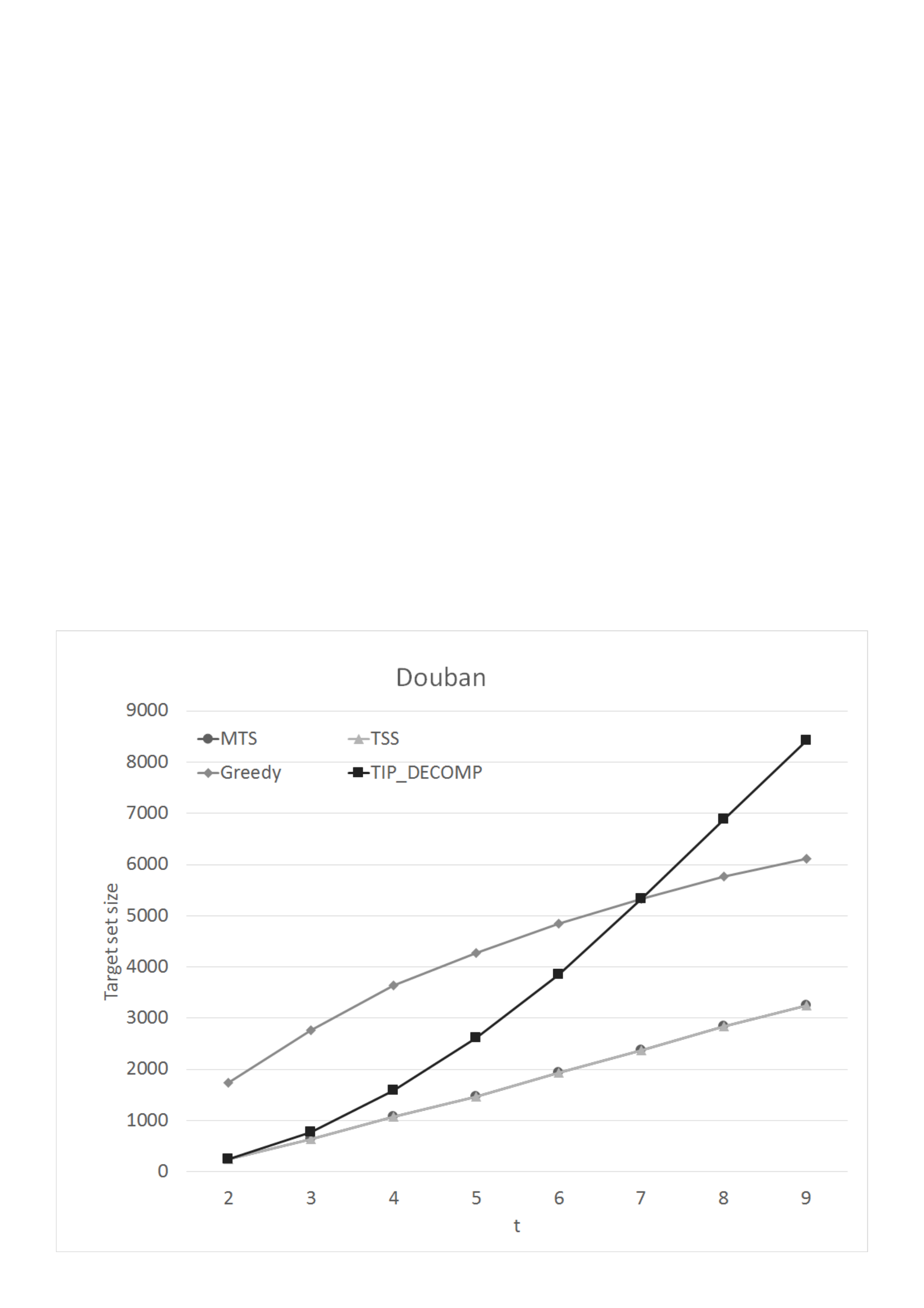}
				\caption{Constant and   Proportional Thresholds Results on undirected networks  (BlogCatalog, BuzzNet, Douban): For each network the results are reported in two separated charts: Proportional thresholds (left-side) and 
Constant thresholds (right-side).	\label{fig:Douban}}
		\end{center}
\end{figure}
\begin{figure}[th!]
\begin{center}
	\includegraphics[width= 0.49\linewidth]{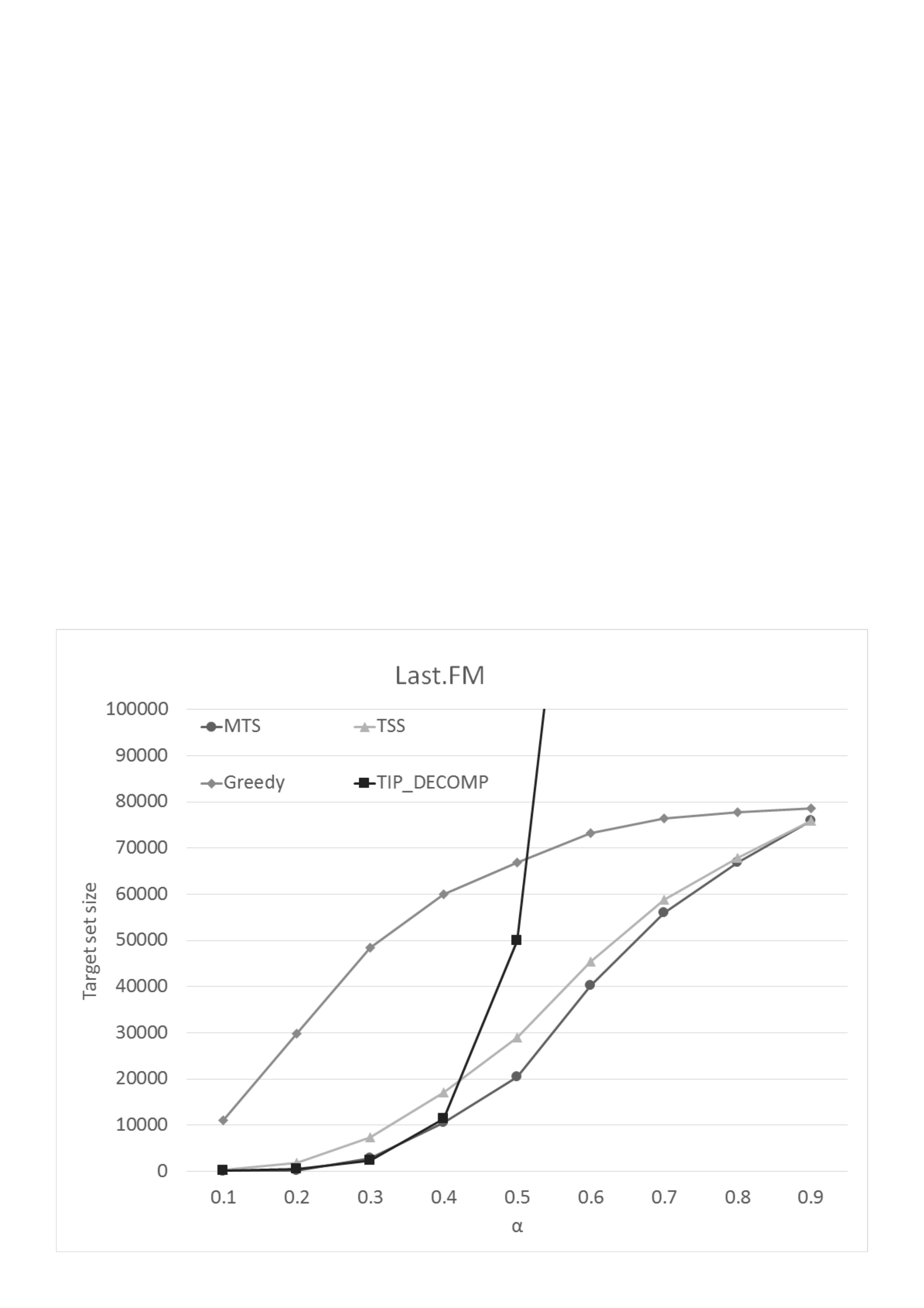}
		\includegraphics[width= 0.49\linewidth]{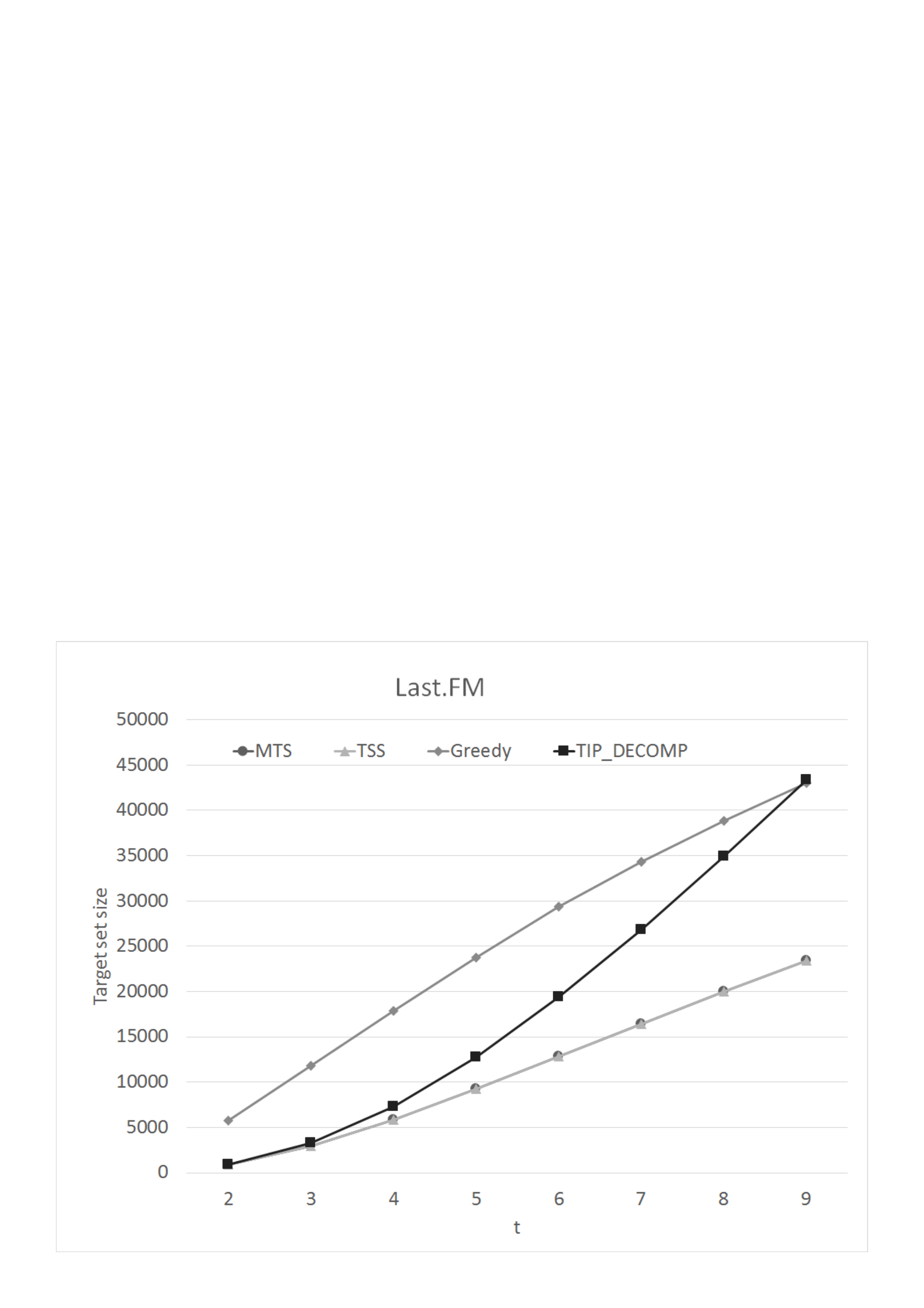}
		\includegraphics[width= 0.49\linewidth]{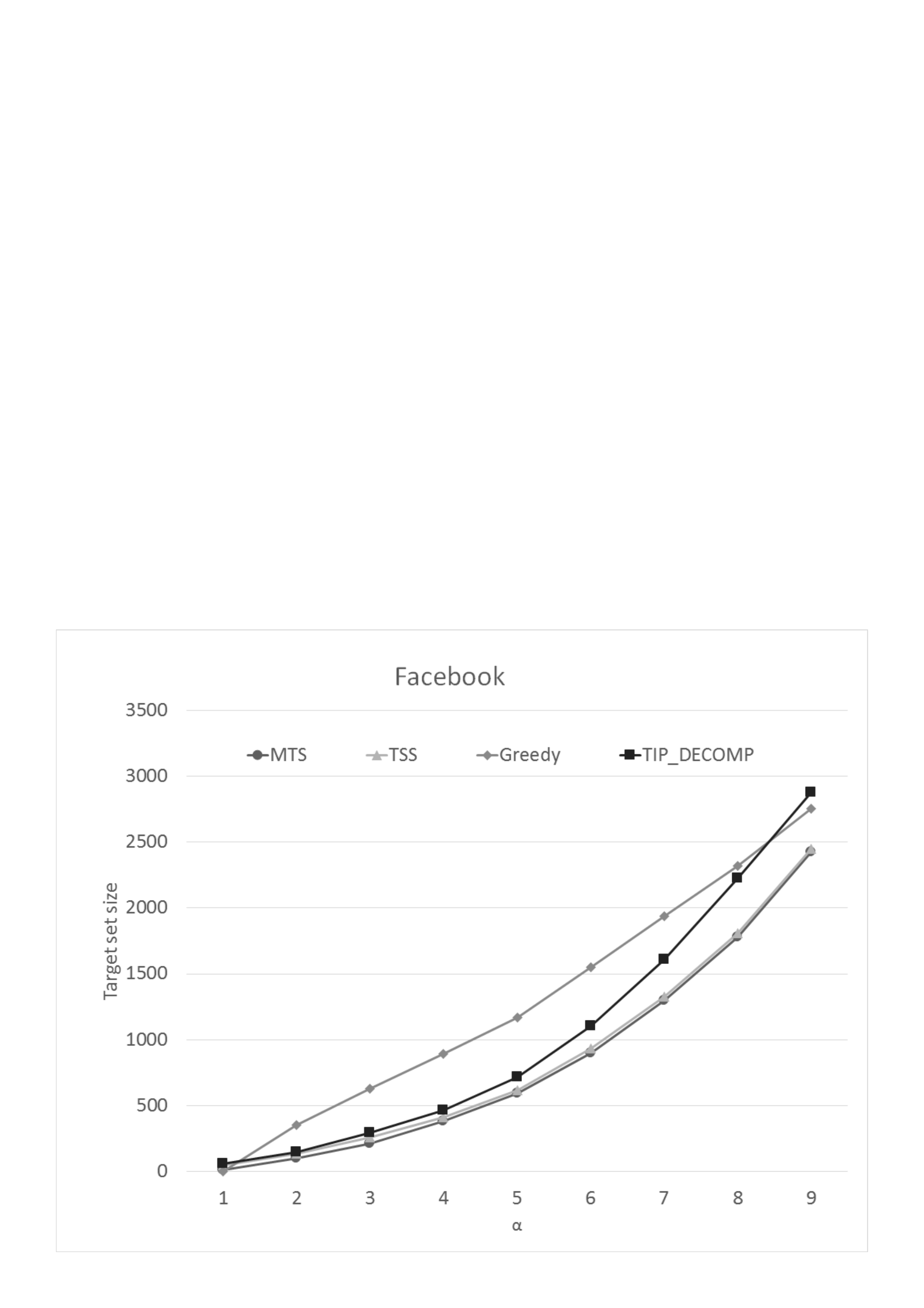}
		\includegraphics[width= 0.49\linewidth]{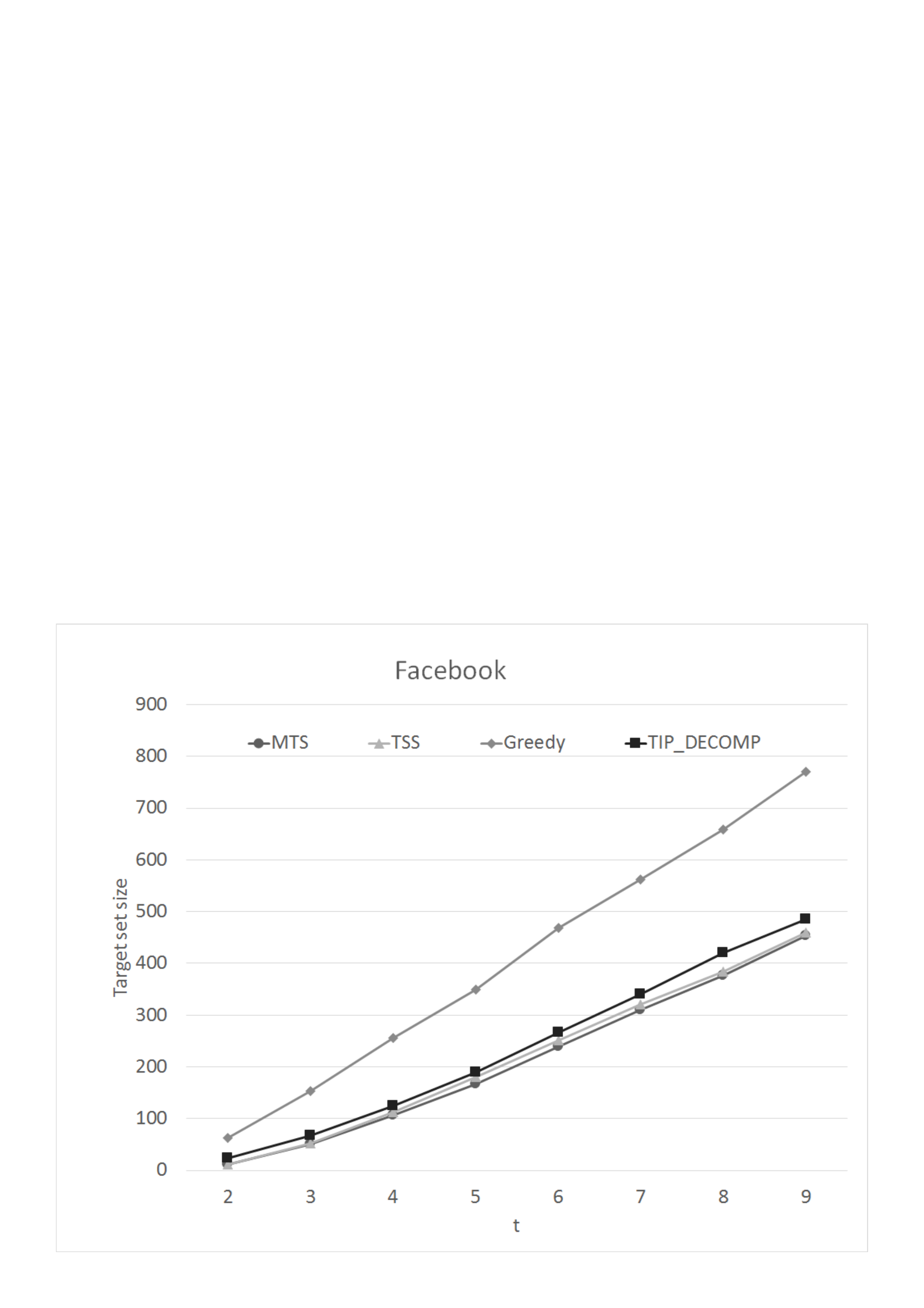}
		\includegraphics[width= 0.49\linewidth]{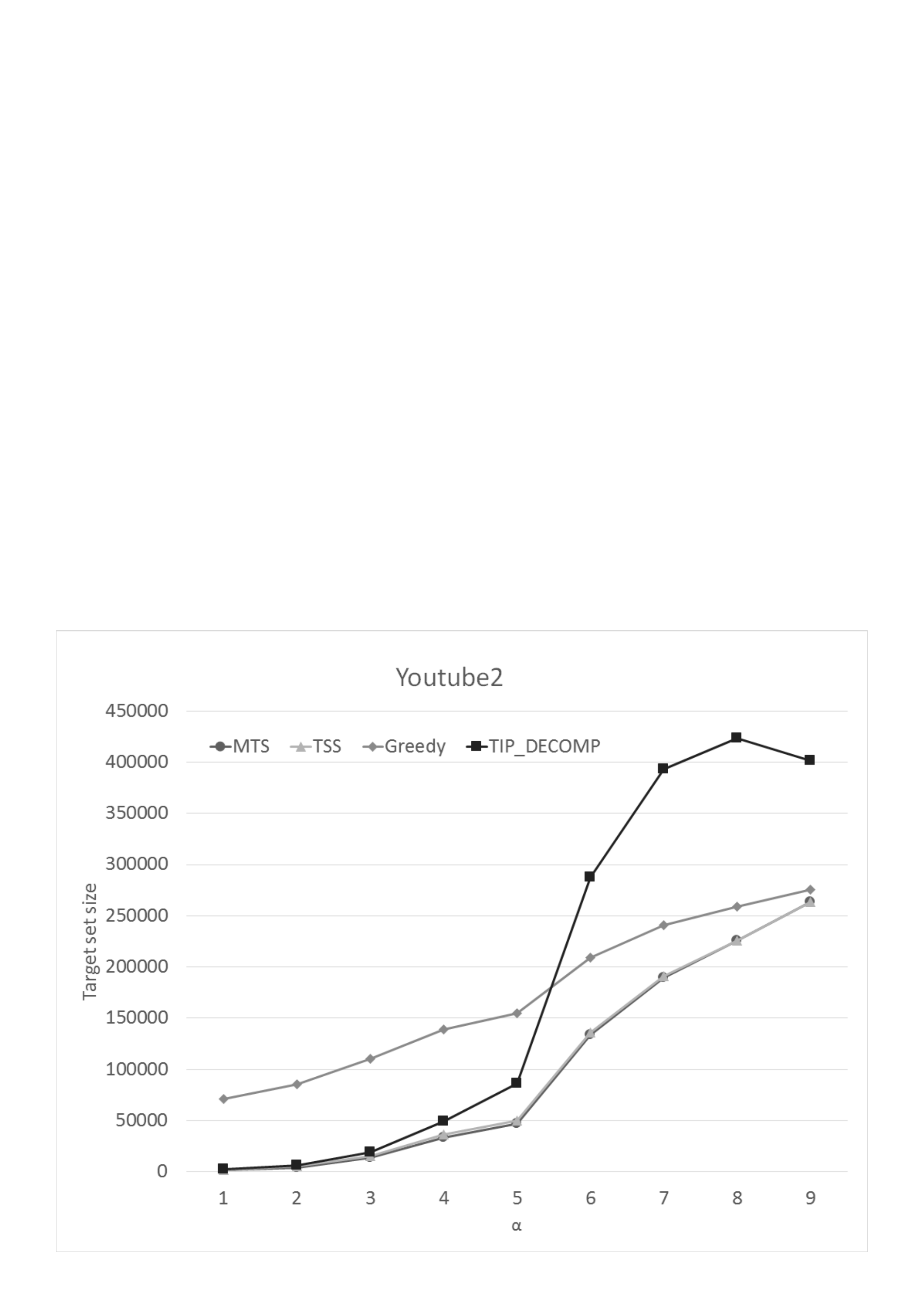}
		\includegraphics[width= 0.49\linewidth]{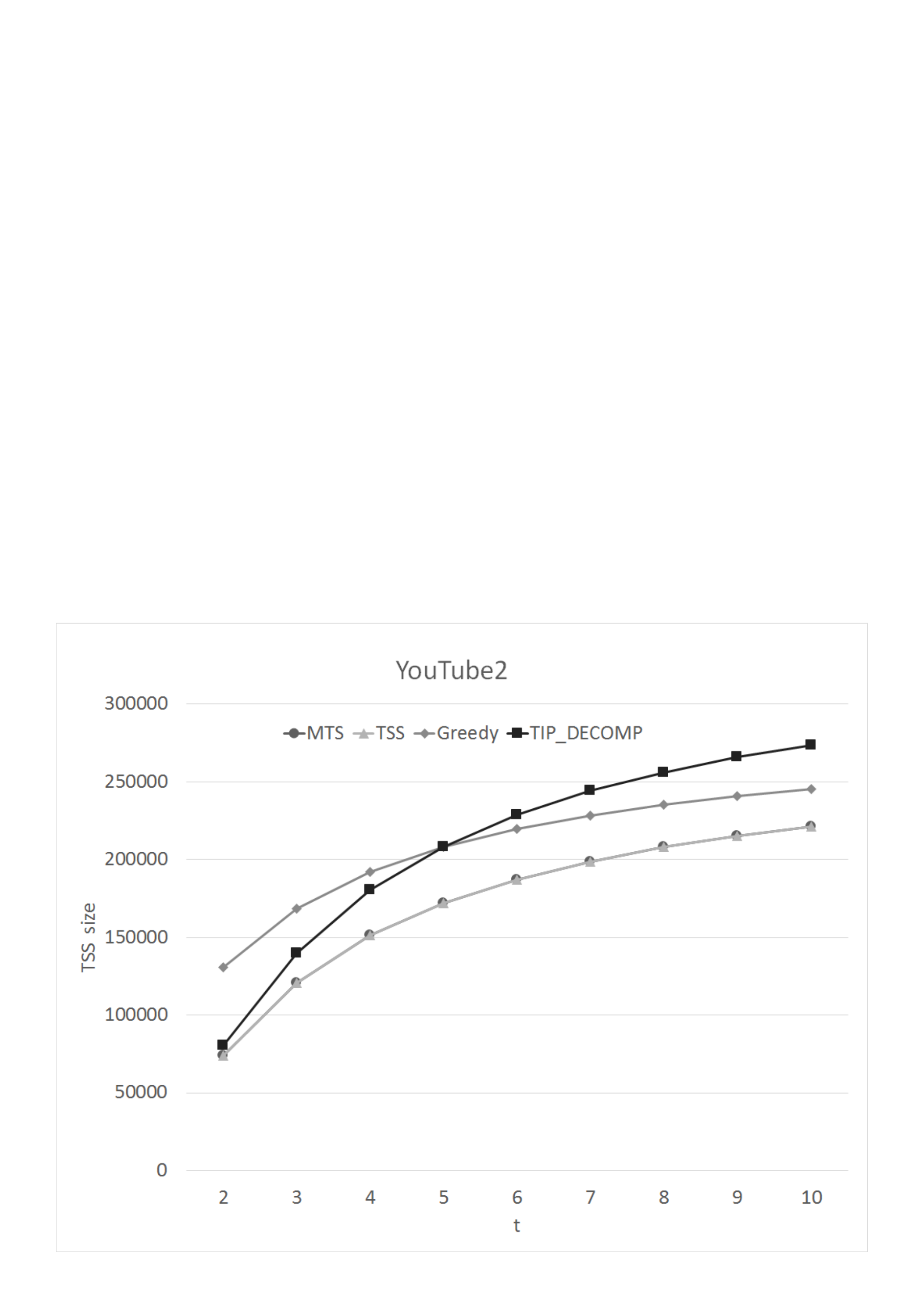}
				\caption{Constant and   Proportional Thresholds Results on undirected networks  (Last.FM, Facebook and Youtube2): For each network the results are reported in two separated charts: Proportional thresholds (left-side) and 
Constant thresholds (right-side).	\label{fig:Douban2}}
		\end{center}
\end{figure}

\begin{figure}[th!]
\begin{center}
		\includegraphics[width= 0.49\linewidth]{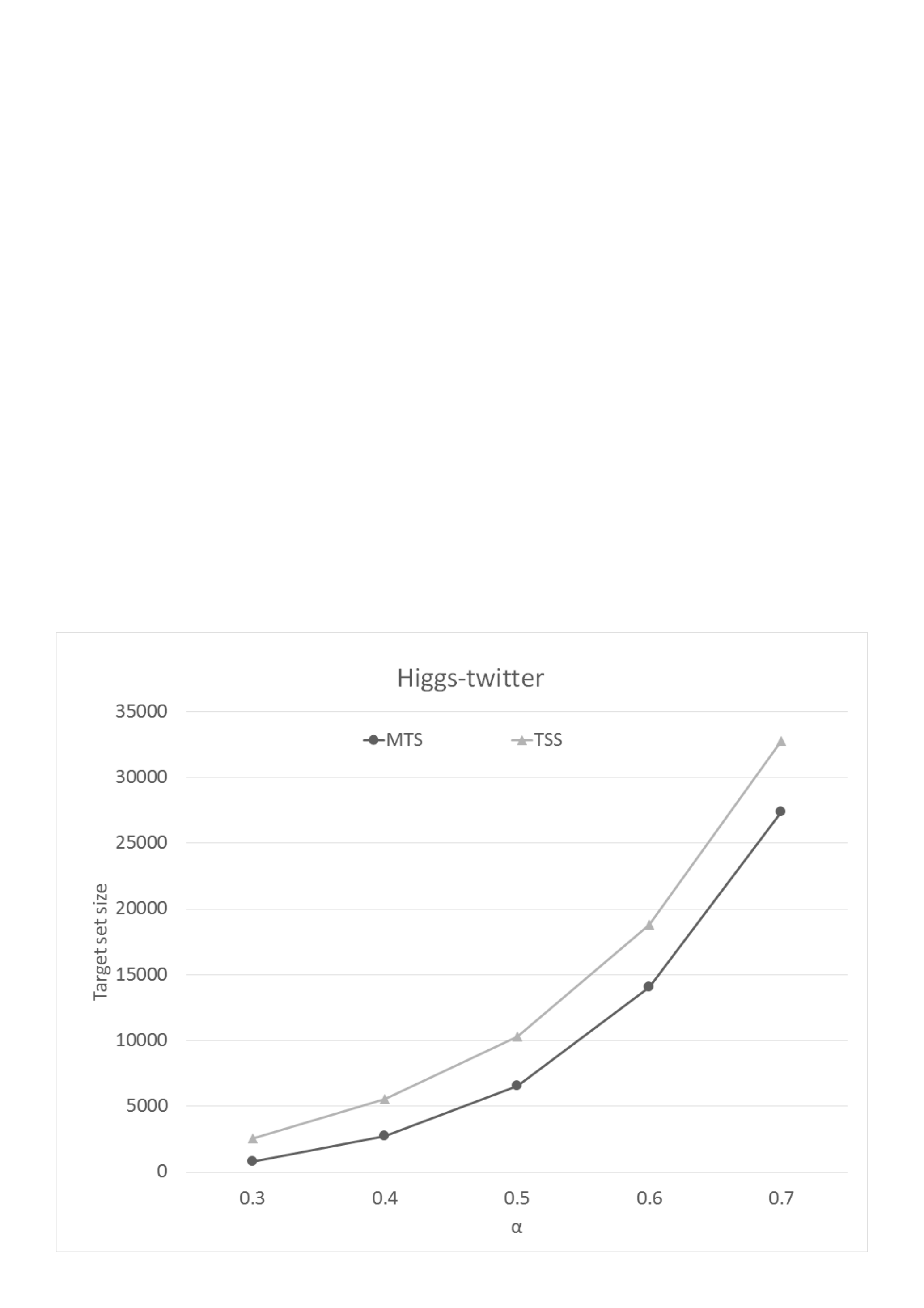}
		\includegraphics[width= 0.49\linewidth]{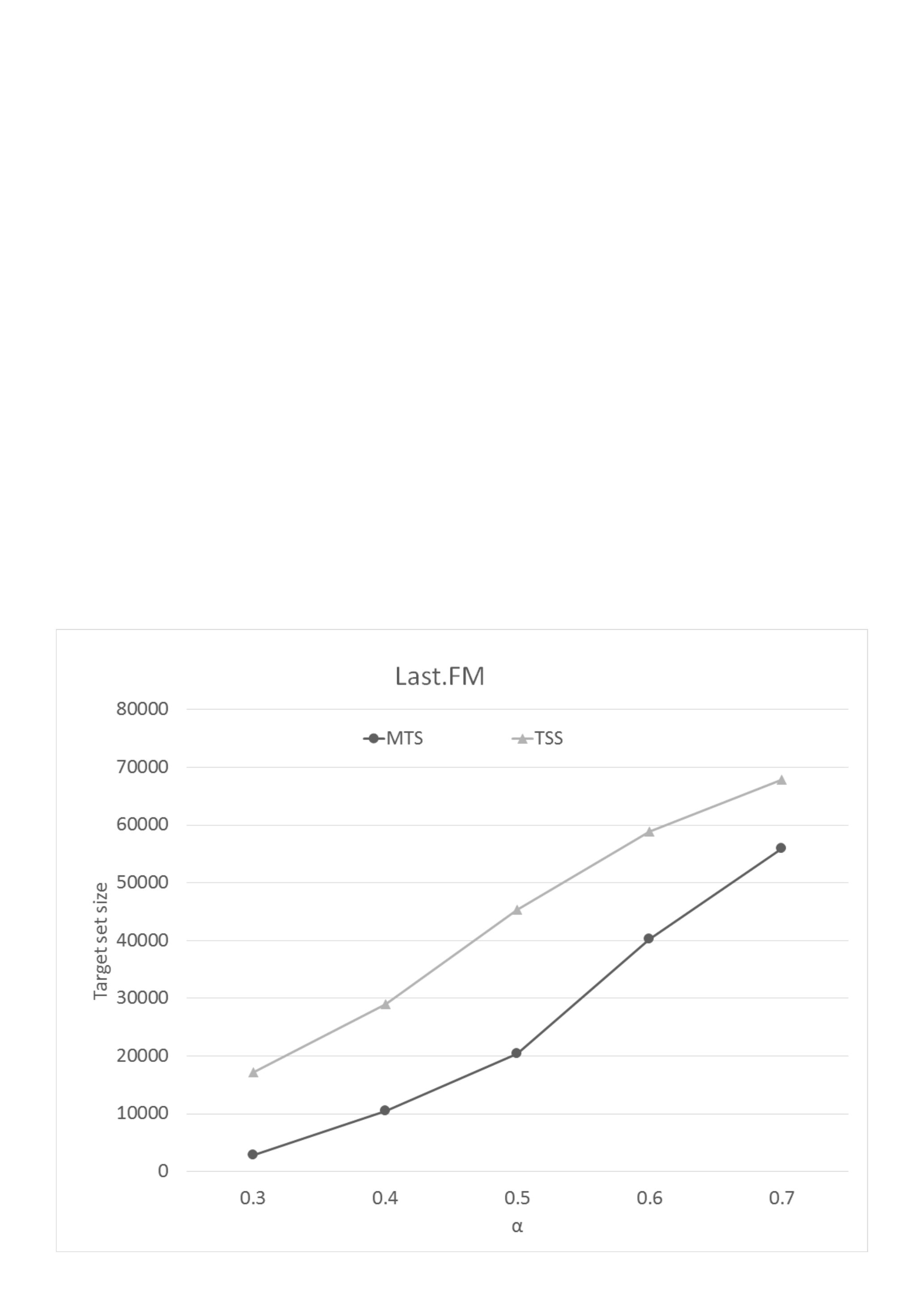}
		\caption{Comparison between MTS and TSS (proportional threshold with $\alpha \in  [0.3,0.7]$): (left) Higgs-twitter; (right) Last.FM.	\label{fig:gap}}
	\end{center}
\end{figure}

We present the results only for $10$ networks ($4$ Directed and $6$ Undirected); the
experiments performed on the other networks exhibit similar behaviors. 

 Analyzing the results from Figures  \ref{fig:amazon0302}-\ref{fig:Douban2}, we notice that
in all the considered cases our {MTS} algorithm always outperforms its competitors. 
{
The improvement is consistent with the results for random thresholds presented in the table \ref{randomtest}. The differences in terms of performance of the algorithms, in this case, depend on two factors: the structural characteristics of the network and  the thresholds. As noted previously, the largest differences are observed for intermediate values of the $\alpha$ parameter (for proportional thresholds) and for large values of the parameter $t$, when these do not exceed the average degree of the nodes (for constant thresholds).
}
In general, we provide the following observations: 
\begin{itemize}
	\item the TSS algorithm provides performance close to MTS but the gap increases in the proportional threshold case, especially for intermediate values of the parameter $\alpha$  (see Fig. \ref{fig:gap});
	\item  the Greedy algorithm performance improves with increasing thresholds;
	\item the TIP\_DECOMP performances worsen dramatically with increasing thresholds.
\end{itemize}

\subsection{Correlation between 
Network Modularity and Normalized Target Set Size}\label{sec:cor}
Analyzing the results from Figures \ref{fig:amazon0302} to  \ref{fig:Douban2}, we observe that the performance of the algorithms on different networks are influenced by the strength of communities of a network, measured by the modularity. Modularity is one measure of the structure of networks. Networks with high modularity have dense connections between the nodes within communities but sparse connections between nodes in different communities.
In order to better evaluate the correlation between the modularity and the performances of the algorithms (measured considering the normalized target set size, which corresponds to the size of the target set, provided by the algorithm, divided by the number of nodes in the network), we measured the correlation using a statistical metric: the Pearson product-moment Correlation Coefficient (PCC). PCC is one of the measures of correlation which quantifies the strength as well as direction
of the relationship between two variables. The correlation coefficient ranges from $-1$ (strong negative correlation) to $1$ (strong positive correlation). A value equal to  $0$ implies that there is no correlation between the variables. We computed the correlation PCC between network modularity  and normalized target set size, both with random and majority ($\alpha=0.5$) thresholds.
{ In particular, we considered two variables that are parametrized
by the class N of Networks (see Table \ref{net}), the algorithm $A \in\{$MTS, TSS, Greedy, TIP\_DECOMP$\}$, and the threshold function $F \in\{$Random, Majority$\}$. The
variable $M(N)$ denotes the network modularity; the variable
$T(N,A,F)$ denotes the normalized target set size. We observed
that, in all the considered cases, there is a moderate positive correlation
between modularity and normalized target set size (the PCC is between $
0.5$ and $0.7$). Figure \ref{fig:cor} presents the correlation values obtained.
The reasoning behind those results is that when the network is composed by strongly connected components (high modularity),  the influence  hardly  propagates from one community to another, thus    the size of the target set increases.
Figure \ref{fig:cor}  also shows that  the correlation does not depend on the threshold function, while it is more sensible on the results provided by the algorithms TSS and MTS. This results is probably due to the fact that 
the algorithms TSS and MTS are able to better exploit situations where the community structure of the networks allows a certain  influence between different communities.

We also performed similar analysis evaluating the correlation between the normalized target set size and the other network properties depicted in Table \ref{net}. Results show only a weak (the PCC is between $-0.5$ and $-0.3$) negative correlation  between the average degree and the normalized target set size. In all the other cases the PCC is between $-0.3$ and $+0.3$ (there is none or very weak correlation).}

\begin{figure}[ht!]
\begin{center}
		\includegraphics[ width= 0.75\linewidth]{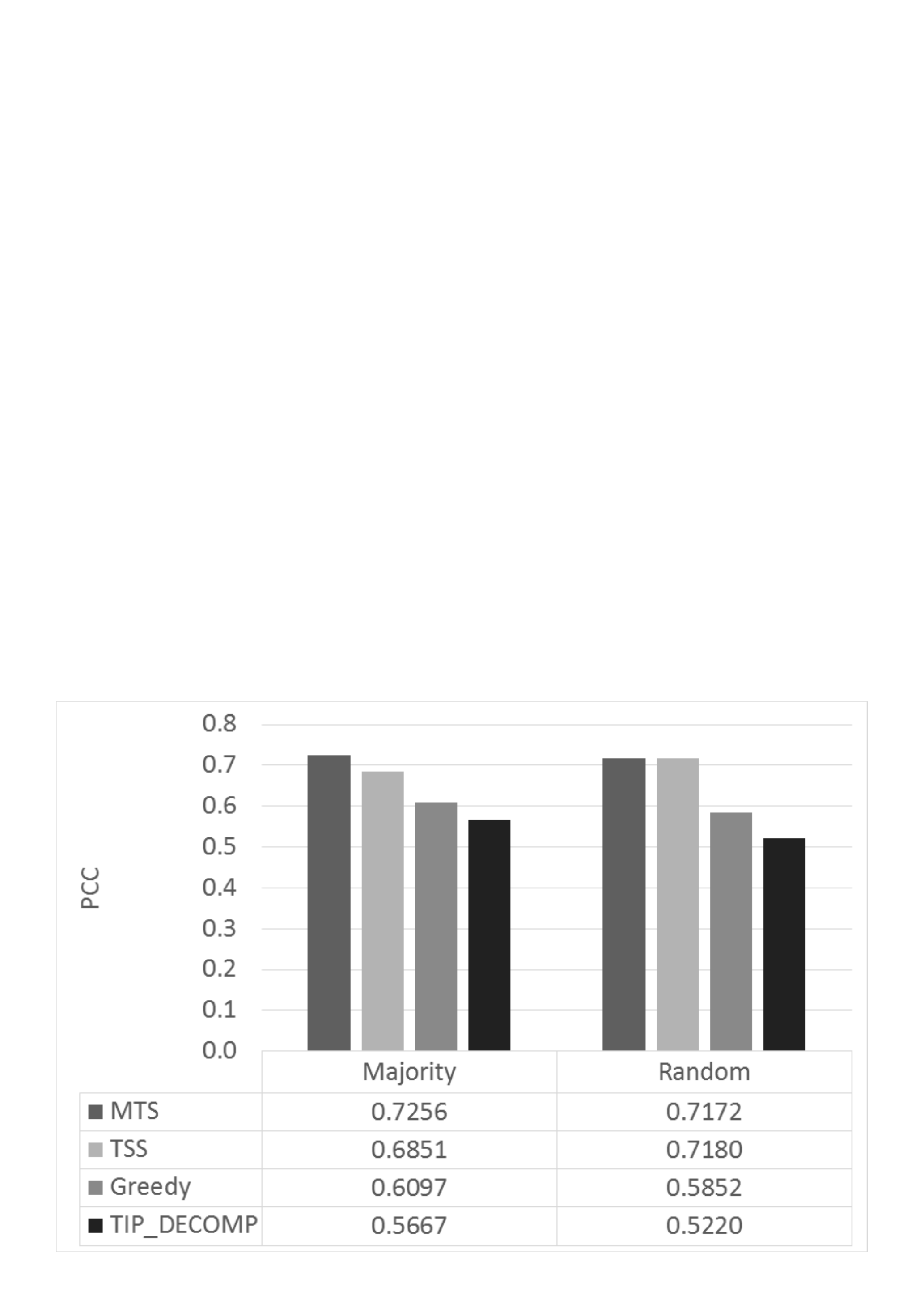}
		\caption{ Correlation between modularity and normalized target set size obtained using four algorithm with Random and Majority ($\alpha=0.5$) thresholds.	\label{fig:cor}}
\vspace*{-0.5truecm}
	\end{center}
\end{figure}

\section{Conclusion}

We considered the problem of selecting  a minimum size subset of nodes of  a network which can start an activation process that spreads to  
all the nodes of the network.  We presented a fast and simple
algorithm that  is optimal for several classes of graphs and   matches the general upper bound given in \cite{ABW-10,CGM+} on the cardinality of a minimum target set. Moreover,  on real life networks, it outperforms the other known heuristics for the same problem. 
Experimental results show that the performance of all the analyzed algorithms
correlates with the modularity of the analyzed network.
 This correlation is more sensible on the results provided by the
MTS algorithm. This results is probably due to the fact
that the proposed algorithms is able to better exploit situations when the
community structure of the networks allows a certain influence between
different communities.

There are many possible ways of extending our work.
We would be especially interested in discovering additional interesting  classes of graphs for which our algorithm is optimal or approximable within a small factor (with respect 
to the general $O(2^{\log^{1-\epsilon }|V|})$ inapproximability factor proved in \cite{Chen-09}).

\bibliography{TSS-doppiasoglia}{}

\begin{thebibliography}{10}

\bibitem{ABW-10}
Eyal Ackerman, Oren Ben-Zwi, and Guy Wolfovitz.
\newblock Combinatorial model and bounds for target set selection.
\newblock {\em Theoretical Computer Science}, 411(44--46):4017--4022, 2010.

\bibitem{BCNS}
Cristina Bazgan, Morgan Chopin, Andr\'e Nichterlein, and Florian Sikora.
\newblock Parameterized approximability of maximizing the spread of influence
  in networks.
\newblock {\em Journal of Discrete Algorithms}, 27:54--65, 2014.

\bibitem{BHLM-11}
Oren Ben-Zwi, Danny Hermelin, Daniel Lokshtanov, and Ilan Newman.
\newblock Treewidth governs the complexity of target set selection.
\newblock {\em Discrete Optimization}, 8(1):87--96, 2011.

\bibitem{Bo+}
Robert~M. Bond, Christopher~J. Fariss, Jason~J. Jones, Adam D.~I. Kramer,
  Cameron Marlow, Jaime~E. Settle, and James~H. Fowler.
\newblock A 61-million-person experiment in social influence and political
  mobilization.
\newblock {\em Nature}, 489:295--298, 2012.

\bibitem{Centeno12}
Carmen~C. Centeno, Mitre~C. Dourado, Lucia~Draque Penso, Dieter Rautenbach, and
  Jayme~L. Szwarcfiter.
\newblock Irreversible conversion of graphs.
\newblock {\em Theoretical Computer Science}, 412(29):3693--3700, 2011.

\bibitem{Chen-09}
Ning Chen.
\newblock On the approximability of influence in social networks.
\newblock {\em SIAM Journal on Discrete Mathematics}, 23(3):1400--1415, 2009.

\bibitem{BOOK2013}
Wei Chen, Carlos Castillo, and Laks Lakshmanan.
\newblock {\em Information and Influence Propagation in Social Networks}.
\newblock Morgan \& Claypool, 2013.

\bibitem{CWW-10}
Wei Chen, Chi Wang, and Yajun Wang.
\newblock Scalable influence maximization for prevalent viral marketing in
  large-scale social networks.
\newblock In {\em Proceedings of the 16th ACM SIGKDD International Conference
  on Knowledge Discovery and Data Mining}, pages 1029--1038, 2010.

\bibitem{CWY09}
Wei Chen, Yajun Wang, and Siyu Yang.
\newblock Efficient influence maximization in social networks.
\newblock In {\em Proceedings of the 15th ACM SIGKDD International Conference
  on Knowledge Discovery and Data Mining}, KDD '09, pages 199--208, New York,
  NY, USA, 2009.

\bibitem{Chun}
Chun-Ying Chiang, Liang-Hao Huang, Bo-Jr Li, Jiaojiao Wu, and Hong-Gwa Yeh.
\newblock Some results on the target set selection problem.
\newblock {\em Journal of Combinatorial Optimization}, 25(4):702--715, 2013.

\bibitem{Chun2}
Chun-Ying Chiang, Liang-Hao Huang, and Hong-Gwa Yeh.
\newblock Target set selection problem for honeycomb networks.
\newblock {\em SIAM Journal on Discrete Mathematics}, 27(1):310--328, 2013.

\bibitem{Chopin-12}
Morgan Chopin, Andr\'e Nichterlein, Rolf Niedermeier, and Mathias Weller.
\newblock Constant thresholds can make target set selection tractable.
\newblock {\em Theory of Computing Systems}, 55(1):61--83, 2014.

\bibitem{CF}
Nicholas~A. Christakis and James~H. Fowler.
\newblock {\em {Connected: The Surprising Power of Our Social Networks and How
  They Shape Our Lives -- How Your Friends' Friends' Friends Affect Everything
  You Feel, Think, and Do}}.
\newblock Back Bay Books, reprint edition, January 2011.

\bibitem{Cic+}
Ferdinando Cicalese, Gennaro Cordasco, Luisa Gargano, Martin Milani\v{c},
  Joseph Peters, and Ugo Vaccaro.
\newblock Spread of influence in weighted networks under time and budget
  constraints.
\newblock {\em Theoretical Computer Science}, 586:40--58, 2015.

\bibitem{Cic14}
Ferdinando Cicalese, Gennaro Cordasco, Luisa Gargano, Martin Milani\v{c}, and
  Ugo Vaccaro.
\newblock Latency-bounded target set selection in social networks.
\newblock {\em Theoretical Computer Science}, 535:1 -- 15, 2014.

\bibitem{C-OFKR}
Amin Coja-Oghlan, Uriel Feige, Michael Krivelevich, and Daniel Reichman.
\newblock Contagious sets in expanders.
\newblock In {\em Proceedings of the Twenty-Sixth Annual ACM-SIAM Symposium on
  Discrete Algorithms}, pages 1953--1987, 2015.

\bibitem{CGM+}
Gennaro Cordasco, Luisa Gargano, Marco Mecchia, Adele~A. Rescigno, and Ugo
  Vaccaro.
\newblock A fast and effective heuristic for discovering small target sets in
  social networks.
\newblock In {\em Proc. of COCOA 2015}, volume 9486, pages 193--208, 2015.

\bibitem{CGRV2015}
Gennaro Cordasco, Luisa Gargano, Adele~A. Rescigno, and Ugo Vaccaro.
\newblock {Optimizing Spread of Influence in Social Networks via Partial
  Incentives}.
\newblock In {\em Structural Information and Communication Complexity: 22nd
  International Colloquium, SIROCCO 2015}, pages 119--134. Springer
  International Publishing, 2015.

\bibitem{CGRV16}
Gennaro Cordasco, Luisa Gargano, Adele~A. Rescigno, and Ugo Vaccaro.
\newblock Brief announcement: Active information spread in networks.
\newblock In {\em Proceedings of the 2016 ACM Symposium on Principles of
  Distributed Computing}, PODC '16, pages 435--437, New York, NY, USA, 2016.
  ACM.

\bibitem{CordascoGRV16}
Gennaro Cordasco, Luisa Gargano, Adele~A. Rescigno, and Ugo Vaccaro.
\newblock Evangelism in social networks.
\newblock In {\em Combinatorial Algorithms - 27th International Workshop,
  {IWOCA} 2016, Helsinki, Finland, August 17-19, 2016, Proceedings}, pages
  96--108, 2016.

\bibitem{CordascoGR15}
Gennaro Cordasco, Luisa Gargano, and Adele~Anna Rescigno.
\newblock Influence propagation over large scale social networks.
\newblock In {\em Proceedings of the 2015 {IEEE/ACM} International Conference
  on Advances in Social Networks Analysis and Mining, {ASONAM} 2015, Paris,
  France}, pages 1531--1538, 2015.

\bibitem{DZNT}
Thang~N. Dinh, Huiyuan Zhang, Dzung~T. Nguyen, and My~T. Thai.
\newblock Cost-effective viral marketing for time-critical campaigns in
  large-scale social networks.
\newblock {\em IEEE/ACM Trans. Netw.}, 22(6):2001--2011, December 2014.

\bibitem{DR-01}
Pedro Domingos and Matt Richardson.
\newblock Mining the network value of customers.
\newblock In {\em Proceedings of the Seventh ACM SIGKDD International
  Conference on Knowledge Discovery and Data Mining}, KDD '01, pages 57--66,
  New York, NY, USA, 2001.

\bibitem{FKRRS-2003}
Paola Flocchini, Rastislav Kr\'alovic, Peter Ruzicka, Alessandro Roncato, and
  Nicola Santoro.
\newblock On time versus size for monotone dynamic monopolies in regular
  topologies.
\newblock {\em Journal of Discrete Algorithms}, 1(2):129--150, 2003.

\bibitem{Fr+}
Daniel Freund, Matthias Poloczek, and Daniel Reichman.
\newblock Contagious sets in dense graphs.
\newblock In {\em Proceedings of 26th Int'l Workshop on Combinatorial
  Algorithms (IWOCA2015)}, 2015.

\bibitem{Ga+}
Luisa Gargano, Pavol Hell, Joseph~G. Peters, and Ugo Vaccaro.
\newblock Influence diffusion in social networks under time window constraints.
\newblock {\em Theor. Comput. Sci.}, 584(C):53--66, 2015.

\bibitem{Gr}
Mark Granovetter.
\newblock Threshold models of collective behavior.
\newblock {\em The American Journal of Sociology}, 83(6):1420--1443, 1978.

\bibitem{KKT-15}
David Kempe, Jon Kleinberg, and \'{E}va Tardos.
\newblock Maximizing the spread of influence through a social network.
\newblock {\em Theory of Computing}, 11(4):105--147, 2015.

\bibitem{KKT-03}
David Kempe, Jon Kleinberg, and \'{E}va Tardos.
\newblock Maximizing the spread of influence through a social network.
\newblock In {\em Proceedings of the Ninth ACM SIGKDD International Conference
  on Knowledge Discovery and Data Mining}, KDD '03, pages 137--146, New York,
  NY, USA, 2003.

\bibitem{KKT-05}
David Kempe, Jon Kleinberg, and \'{E}va Tardos.
\newblock Influential nodes in a diffusion model for social networks.
\newblock In {\em Proceedings of the 32Nd International Conference on Automata,
  Languages and Programming}, ICALP'05, pages 1127--1138, Berlin, Heidelberg,
  2005.

\bibitem{Kundu2015107}
Suman Kundu and Sankar~K. Pal.
\newblock Deprecation based greedy strategy for target set selection in large
  scale social networks.
\newblock {\em Information Sciences}, 316:107--122, 2015.

\bibitem{LKLG}
Matti Leppaniemi, Heikki Karjaluoto, Heikki Lehto, and Anni Goman.
\newblock Targeting young voters in a political campaign: Empirical insights
  into an interactive digital marketing campaign in the 2007 finnish general
  election.
\newblock {\em Journal of Nonprofit \& Public Sector Marketing}, 22(1):14--37,
  2010.

\bibitem{LAM}
Jure Leskovec, Lada~A. Adamic, and Bernardo~A. Huberman.
\newblock The dynamics of viral marketing.
\newblock {\em ACM Trans. Web}, 1(1), May 2007.

\bibitem{snap}
Jure Leskovec and Andrej Krevl.
\newblock {SNAP Datasets}: {Stanford} large network dataset collection.
\newblock http://snap.stanford.edu/data, 2015.

\bibitem{N15}
Mark Newman.
\newblock Network data, {\tt http://www-personal.umich.edu/\textasciitilde
  mejn/netdata/}, 2015.

\bibitem{NNUW}
André Nichterlein, Rolf Niedermeier, Johannes Uhlmann, and Mathias Weller.
\newblock On tractable cases of target set selection.
\newblock {\em Social Network Analysis and Mining}, 3(2):233--256, 2013.

\bibitem{Re}
T.~V.~Thirumala Reddy and C.~Pandu Rangan.
\newblock Variants of spreading messages.
\newblock {\em J. Graph Algorithms Appl.}, 15(5):683--699, 2011.

\bibitem{RW}
Jean-Baptiste Rival and Joey Walach.
\newblock The use of viral marketing in politics: A case study of the 2007
  french presidential election.
\newblock Master Thesis, J\"onk\"oping University, J\"onk\"oping International
  Business School.

\bibitem{SEP}
Paulo Shakarian, Sean Eyre, and Damon Paulo.
\newblock A scalable heuristic for viral marketing under the tipping model.
\newblock {\em Social Network Analysis and Mining}, 3(4):1225--1248, 2013.

\bibitem{T}
K.~Tumulty.
\newblock Obama's viral marketing campaign.
\newblock TIME Magazine, July 2007.

\bibitem{WF}
Stanley Wasserman and Katherine Faust.
\newblock {\em Social Network analysis: Methods and Applications}.
\newblock Cambridge University Press, 1994.

\bibitem{ZL09}
Reza Zafarani and Huan Liu.
\newblock Social computing data repository at {ASU}.
\newblock http://socialcomputing.asu.edu, 2009.

\bibitem{Za}
Manouchehr Zaker.
\newblock On dynamic monopolies of graphs with general thresholds.
\newblock {\em Discrete Mathematics}, 312(6):1136--1143, 2012.

\end{thebibliography}
\bibliographystyle{plain}
\end{document}